\let\epsilon\varepsilon
\let\phi\varphi
\theoremstyle{plain}
\newtheorem{theorem}{Theorem}[section]
\newtheorem{lemma}[theorem]{Lemma}
\newtheorem{corollary}[theorem]{Corollary}
\newtheorem{proposition}[theorem]{Proposition}
\newtheorem{definition}[theorem]{Definition}
\newtheorem{example}[theorem]{Example}
\newtheorem{remark}[theorem]{Remark}
\newtheorem*{theorem*}{Theorem}
\newtheorem{maintheorem}{Theorem}
\newcommand*{\Tr}[1]{\mathop{}\!\mathrm{Tr}{\left(#1\right)}}
\def\id{\mathds{1}}
\DeclareMathOperator{\Span}{span}
\DeclareMathOperator{\Enc}{Enc}
\DeclareMathOperator{\Dec}{Dec}
\DeclareMathOperator{\Gen}{Gen}
\DeclareMathOperator{\Eval}{Eval}
\newcommand{\poly}{\mathrm{poly}}
\newcommand{\op}{\mathrm{op}}
\newcommand{\bra}[1]{\ensuremath{\left\langle#1\right|}}
\newcommand{\ket}[1]{\ensuremath{\left|#1\right\rangle}}
\newcommand{\sandwich}[3]{\langle #1|#2 |#3\rangle  }
\newcommand{\norm}[1]{\left\lVert#1\right\rVert}
\newcommand{\abs}[1]{\left\lvert#1\right\rvert}
\newcommand{\cH}{\mathcal{H}}
\newcommand{\cA}{\mathcal{A}}
\newcommand{\cE}{\mathcal{E}}
\newcommand{\cK}{\mathcal{K}}
\newcommand{\cR}{\mathcal{R}}
\newcommand{\cG}{\mathcal{G}}
\newcommand{\cW}{\mathcal{W}}
\let\B\relax 
\newcommand{\A}{A_{a|x}}
\newcommand{\B}{B_{b|y}}
\newcommand{\avsum}{%
  \DOTSB\mathop{\mathpalette\@avsum\relax}\slimits@
}
\newcommand{\@avsum}[2]{%
  \begingroup
  \sbox\z@{$#1\sum$}%
  \setlength{\unitlength}{%
    \dimexpr
      \ifx#1\displaystyle1\else3\fi\dimexpr\@avsumthickness{#1}\relax+
      \ht\z@+\dp\z@
    \relax
  }%
  \linethickness{\@avsumthickness{#1}}%
  \vphantom{\sum}%
  \smash{\ooalign{$\m@th#1\sum$\cr\hidewidth$#1\@avsumbar$\hidewidth\cr}}%
  \endgroup
}
\newcommand{\@avsumbar}{%
  \vcenter{\hbox{\begin{picture}(0,1)\roundcap\Line(0,0)(0,1)\end{picture}}}%
}
\newcommand{\@avsumthickness}[1]{
  1.25\fontdimen8
    \ifx#1\displaystyle\textfont\else
    \ifx#1\textstyle\textfont\else
    \ifx#1\scriptstyle\scriptfont\else
    \scriptscriptfont\fi\fi\fi 3
}
\newcommand{\ax}{{a|x}}
\newcommand{\by}{{b|y}}
\newcommand{\cz}{{c|z}}
\newcommand{\abxy}{{ab|xy}}
\newcommand{\bcyz}{{bc|yz}}
\newcommand{\abcxyz}{{abc|xyz}}
\newcommand{\akxk}{{a_{[k]} | x_{[k]}}}
\newcommand{\QInstrument}{\mathsf{T}}
\DeclareMathOperator{\negl}{negl}
\newcommand{\NPA}{\mathrm{NPA}}
\newcommand{\strict}{\mathrm{strict}}
\newcommand{\comp}{\mathrm{comp}}
\newcommand{\gamevalueqcopt}[1]{\omega_{\mathrm{qc}}(#1)} 
\newcommand{\gamevalueNPA}[2]{\omega_{\mathrm{NPA}}^{#2}(#1)} 
\newcommand{\gamevalueSeqNPA}[2]{\omega_{\mathrm{seqNPA}}^{#2}(#1)} 
\newcommand{\gamevalueBiSeqNPA}[2]{\omega_{2\mathrm{seqNPA}}^{#2}(#1)} 
\newcommand{\gamevalueTriSeqNPA}[2]{\omega_{3\mathrm{seqNPA}}^{#2}(#1)} 
\newcommand{\gamevaluekSeqNPA}[2]{\omega_{k\mathrm{seqNPA}}^{#2}(#1)} 
\newcommand{\gamevalueCompile}[2]{\omega_{#1}({#2})} 
\title{\bfseries{Quantitative quantum soundness for all multipartite compiled nonlocal games}}
\author[1$\dagger$]{Matilde Baroni}
\author[2,3,4]{Igor Klep}
\author[5]{Dominik Leichtle}
\author[6,7,8]{Marc-Olivier Renou}
\author[9]{Ivan \v{S}upi\'{c}}
\author[6,7,8]{Lucas Tendick}
\author[6,7,8*]{Xiangling Xu}
\affil[1]{Sorbonne Universit\'e, CNRS, LIP6, 4 place Jussieu, 75005 Paris, France}
\affil[2]{Faculty of Mathematics and Physics, University of Ljubljana, Slovenia}
\affil[3]{Faculty of Mathematics, Natural Sciences
and Information Technologies, 
University of Primorska, Slovenia}
\affil[4]{Institute of Mathematics, Physics and Mechanics, Ljubljana, Slovenia}
\affil[5]{School of Informatics, University of Edinburgh, 10 Crichton Street, Edinburgh EH8 9AB, United Kingdom}
\affil[6]{Inria Paris-Saclay, B\^atiment Alan Turing, 1 rue Honor\'e d'Estienne d'Orves, 91120 Palaiseau, France}
\affil[7]{CPHT, Ecole polytechnique, Institut Polytechnique de Paris, Route de Saclay, 91128 Palaiseau, France}
\affil[8]{LIX, Ecole polytechnique, Institut Polytechnique de Paris, Route de Saclay, 91128 Palaiseau, France}
\affil[9]{Universit\'e Grenoble Aples, CNRS, Grenoble INP, LIG, 38000 Grenoble, France}
\affil[$\dagger$]{\scriptsize\texttt{matilde.baroni@lip6.fr}}
\affil[*]{\scriptsize\texttt{xu.xiangling@inria.fr}}
\date{\vspace{-3.90em}}
\begin{document}
\maketitle
\begin{abstract}
Compiled nonlocal games transfer the power of Bell-type multi-prover tests into a single-device setting by replacing spatial separation with cryptography.
Concretely, the KLVY compiler (STOC'23) maps any multi-prover game to an interactive single-prover protocol, using quantum homomorphic encryption.
A crucial security property of such compilers is quantum soundness, which ensures that a dishonest quantum prover cannot exceed the original game's quantum value.
For practical cryptographic implementations, this soundness must be quantitative, providing concrete bounds rather than merely asymptotic.
While quantitative quantum soundness has been established for the KLVY compiler in the bipartite case, it has only been shown asymptotically for multipartite games.
This is a significant gap, as multipartite nonlocality exhibits phenomena with no bipartite analogue, and the difficulty of enforcing space-like separation makes single-device compilation especially compelling.
This work closes this gap by providing quantitative upper bounds for \emph{all multipartite} compiled nonlocal games via a new sequential NPA-like hierarchy.
In particular, finite-level convergence yields \emph{quantitative quantum soundness} with respect to the commuting quantum value, and flat optimality yields the same with respect to the tensor-product quantum value.
On the way, we introduce an \emph{NPA-like hierarchy for quantum instruments} and prove its completeness, thereby characterizing correlations from operationally-non-signaling sequential strategies.
This NPA-like hierarchy can be seen to complement previous multipartite generalizations of the S-G-HJW purification theorem, which takes a central role in quantum information, nonlocality, and contextuality.
We further develop novel geometric arguments for the decomposition of sequential strategies into their signaling and non-signaling parts, which might be of independent interest.
\end{abstract}


\section{Introduction}\label{sec:Introduction}
Nonlocal games are cooperative tasks involving multiple, non-communicating players (provers) and a referee (verifier), see \cref{fig:NonlocalCompiledBellGame}(a), originally designed to test the foundational limits of classical physics in Bell's seminal work~\cite{bell1964einstein}.
In this setting, provers who share quantum entanglement can coordinate their answers to the verifier's questions in ways that are provably impossible using only classical resources.
This ``quantum advantage'' makes nonlocal games powerful tools for the device-independent (i.e., black-box)  certification of quantum properties, such as entanglement, which find application, for instance, in the generation of genuine randomness ~\cite{scarani2012device,brunner2014bell}.
The security of these protocols, however, relies on a crucial assumption: that the provers cannot communicate.
Typically, this is enforced by physical (space-like) separation, an approach that is experimentally demanding~\cite{hensen2015loophole,giustina2015significant} and scales poorly, especially as the number of provers increases \cite{Horodecki2019}.
This raises a fundamental question in both theory and practice: can we leverage the power of nonlocality within a protocol involving only a \emph{single}, untrusted quantum device?

The naive approach of having one prover who plays all the roles fails, as the prover would see all questions and trivially bypass the no-communication constraint.
Cryptography provides an alternative solution, where the no-communication condition is replaced by computational assumptions on the single prover~\cite{kalai2023quantum,huang2025compilingmathsfmipsuccinctclassical,merkulov2025computationalbellinequalities}.
The focus of our work is the KLVY compiler~\cite{kalai2023quantum}, which introduces a procedure to transform any $k$-player nonlocal game into a sequential, multi-round protocol between a verifier and a single prover, using quantum homomorphic encryption (QHE)~\cite{mahadev2020classical,brakerski2018quantum}; see \cref{fig:NonlocalCompiledBellGame}(c).
Intuitively, the QHE allows the prover to compute on encrypted questions, preventing them from learning the questions for early players while generating a valid answer for a later player, up to negligible probability under standard cryptographic assumptions.
The KLVY compiler~\cite{kalai2023quantum} is known to be (i) \emph{classically sound} (a classical prover cannot perform better than in the original game) and (ii) \emph{quantum complete} (an honest quantum prover can achieve the optimal quantum score).
The central, unresolved security question is that of \emph{quantum soundness}: can a dishonest quantum prover exploit the compiled protocol to exceed the score achievable in the original, spatially-separated game?

Progress on this question has been built up through a series of works of increasing generality.
Initial results established soundness for specific bipartite (two-player) games~\cite{natarajan2023bounding, cui2024computational,baroni2024quantum,mehta2024self}.
Subsequently,~\cite{kulpe2024bound} proved asymptotic soundness for all bipartite games, showing that the compiled game's score approaches the true quantum score in the limit of $\lambda \to \infty$ for the cryptographic security parameter.
While theoretically significant, this guaranty is insufficient for practical cryptography, which requires \emph{quantitative} bounds for a finite, fixed security parameter $\lambda$.

Furthermore, this bipartite framework does not extend to the multipartite setting, leaving a crucial gap.
At the foundational level, the quantum soundness of the compilation of nonlocal games enables to confirm our understanding of the exact nature of the constraints imposed by space-like separation in quantum information theory. 
The fact that the scores of the compiled and non-compiled versions of the game are the same demonstrates that physical space-like separation can be simulated cryptographically only. Conversely, if some game could not be compiled in a sound way, it would suggest that space-like separation imposes additional constraints that are not well understood.  

At a more practical level, in quantum information, genuinely multipartite effects are necessary to account for new phenomena \cite{no-bipartite-nonlocal}, e.g., post-quantum steering \cite{sainz2015postquantum}, network non-locality \cite{genuine-network-multipartite}, and multipartite self-testing \cite{st-multipartite-network} has been proved to be useful. The multipartite setting is also central in computer science. Many verification tasks naturally involve more than two parties: the verifier must often check that several local views are consistent with one global object, a scenario well captured by multiplayer games. At the intersection of these fields, multipartite games reveal qualitatively new behavior: for instance, the quantum-classical gap amplification behaves differently for more than two players, for some relevant families of games parallel repetition can quickly shrink the cheating probability in the multipartite setting, but the general parallel repetition theorem for the quantum value of multipartite nonlocal games is still an open problem.
Overall, working in the multipartite model keeps alphabets small, queries local, and completeness–soundness gaps large, yielding cleaner designs, stronger guarantees, and more flexible composition tools: the very ingredients that drive verification, hardness, and protocol design~\cite{natarajan2017quantum,mousavi2022nonlocal,ito2012multi,grilo2019perfect}.

\begin{figure}[ht]
  \centering
  \begin{subfigure}[t]{0.30\textwidth}
    \centering
    \scalebox{0.6}{\begin{tikzpicture}[scale=0.3]
    \fill[gray!30] (0,0) -- (24,0) -- (24,1) -- (0,1) -- cycle;
    \fill[gray!30] (0,0) -- (1,0) -- (1,10) -- (0,10) -- cycle;
    \fill[gray!30] (0,9) -- (24,9) -- (24,10) -- (0,10) -- cycle;

    \fill[orange!30] (4,3) rectangle (8,7);
    \fill[orange!30] (12,3) rectangle (16,7);
    \fill[orange!30] (20,3) rectangle (24,7);

    \draw[->, thick] (6,9) -- (6,7) node[right,midway] {$x$};
    \draw[->, thick] (14,9) -- (14,7)node[right,midway] {$y$};
    \draw[->, thick] (22,9) -- (22,7) node[right,midway] {$z$};
    
    \draw[->, thick] (6,3) -- (6,1)node[right,midway] {$a$};
    \draw[->, thick] (14,3) -- (14,1)node[right,midway] {$b$};
    \draw[->, thick] (22,3) -- (22,1)node[right,midway] {$c$};

    \fill[orange!30] (2,11) -- (20,11) -- (20,13) -- (2,13) -- cycle;
    \fill[orange!30] (2,3) -- (4,3) -- (4,11) -- (2,11) -- cycle;
    \fill[orange!30] (10,3) -- (12,3) -- (12,11) -- (10,11) -- cycle;
    \fill[orange!30] (18,3) -- (20,3) -- (20,11) -- (18,11) -- cycle;

    \node at (5,5){Alice};
    \node at (13,5) {Bob};
    \node at (21,5) {Charlie};

\end{tikzpicture}}
    \subcaption{Three spatially separated provers, Alice $A$, Bob $B$, and Charlie $C$, receive questions $x,y,z$ and return answers $a,b,c$. The players' shared strategy is defined by the correlations $p(abc|xyz)$ with the corresponding score (winning probability) given by $\sum_{a,b,c,x,y,z} \beta_{abcxyz} p(abc|xyz)$, where $\beta_{abcxyz}$ is the payoff tensor associated with the rule of the game $\cG$. By using a shared entangled state, the players can generate quantum correlations leading to scores that provably exceed classical limitations. The maximum achievable quantum score is denoted by $\omega_{\mathrm{q}}(\cG)$ (the tensor-product quantum score).}
    \label{fig:nonlocal-a}
  \end{subfigure}
  \hfill
  \begin{subfigure}[t]{0.23\textwidth}
    \centering
    \scalebox{0.6}{\begin{tikzpicture}[scale=0.3]
    \fill[gray!30] (8.5,0) -- (9.5,0) --(9.5,28) -- (8.5,28) -- cycle;
    \fill[gray!30] (3,0) -- (8.5,0) --(8.5,1) -- (3,1) -- cycle;
    \fill[gray!30] (3,9) -- (8.5,9) --(8.5,10) -- (3,10) -- cycle;
    \fill[gray!30] (3,18) -- (8.5,18) --(8.5,19) -- (3,19) -- cycle;
    \fill[gray!30] (3,27) -- (8.5,27) --(8.5,28) -- (3,28) -- cycle;

    \fill[orange!30] (0,3) rectangle (6,7);
    \fill[orange!30] (0,12) rectangle (6,16);
    \fill[orange!30] (0,21) rectangle (6,25);

    \fill[orange!60] (0.5,12) rectangle (1.5,8);
    \node[isosceles triangle,
	isosceles triangle apex angle=90,
	fill=orange!60,
        rotate=-90,
	minimum size =5] (T90)at (1,7.8){};

        \fill[orange!60] (0.5,21) rectangle (1.5,17);
    \node[isosceles triangle,
	isosceles triangle apex angle=90,
	fill=orange!60,
        rotate=-90,
	minimum size =5] (T90)at (1,16.8){};


    \node at (3,5){Charlie};
    \node at (3,14) {Bob};
    \node at (3,23) {Alice};

    \draw[->, thick] (4,27) -- (4,25) node[right,midway] {$x$};
    \draw[->, thick] (4,18) -- (4,16)node[right,midway] {$y$};
    \draw[->, thick] (4,9) -- (4,7) node[right,midway] {$z$};
    
    \draw[->, thick] (4,21) -- (4,19)node[right,midway] {$a$};
    \draw[->, thick] (4,12) -- (4,10)node[right,midway] {$b$};
    \draw[->, thick] (4,3) -- (4,1)node[right,midway] {$c$};


	isosceles triangle apex angle=90,
	fill=orange!30,
        rotate=-90,
	minimum size =5] (T90)at (6,0){};
    
\end{tikzpicture}}
    \subcaption{The game is played in sequence: Alice acts first with $(x,a)$, then Bob with $(y,b)$, then Charlie with $(z,c)$. In the Heisenberg algebraic picture, a state $\sigma$ is fixed. Actions of $A, B$ are described by quantum instruments (completely positive maps) $\{\QInstrument_{a|x}\}_a$ and $\{\QInstrument_{b|y}\}_b$, while $C$ measures with POVM effects $\{f_{c|z}\}_c$ as usual. The resulting correlations are $p(abc|xyz)=\Tr{\sigma \cdot \QInstrument_{a|x} \circ \QInstrument_{b|y}(f_{c|z})}$. \emph{Operational-non-signaling} requires $\sum_a \QInstrument_{a|x}=\sum_a \QInstrument_{a|x'}$ for all $x,x'$ and $\sum_b \QInstrument_{b|y}=\sum_b \QInstrument_{b|y'}$ for all $y,y'$.}
    \label{fig:nonlocal-b}
  \end{subfigure}
  \hfill
  \begin{subfigure}[t]{0.23\textwidth}
    \centering
    \scalebox{0.6}{\begin{tikzpicture}[scale=0.3]
    \fill[gray!30] (8.4,0) -- (9.4,0) --(9.4,28) -- (8.4,28) -- cycle;
    \fill[gray!30] (3,0) -- (8.5,0) --(8.5,1) -- (3,1) -- cycle;
    \fill[gray!30] (3,9) -- (8.5,9) --(8.5,10) -- (3,10) -- cycle;
    \fill[gray!30] (3,18) -- (8.5,18) --(8.5,19) -- (3,19) -- cycle;
    \fill[gray!30] (3,27) -- (8.5,27) --(8.5,28) -- (3,28) -- cycle;

    \fill[orange!30] (2,3) rectangle (6,7);
    \fill[orange!30] (2,12) rectangle (6,16);
    \fill[orange!30] (2,21) rectangle (6,25);

    \fill[orange!30] (0,3) -- (2,3) -- (2,25) -- (0,25) -- cycle;

    \node at (3,5){Charlie};
    \node at (3,14) {Bob};
    \node at (3,23) {Alice};

    \draw[->, thick] (4,27) -- (4,25) node[right,midway] {$\mathsf{Enc}(x)$};
    \draw[->, thick] (4,18) -- (4,16)node[right,midway] {$\mathsf{Enc}(y)$};
    \draw[->, thick] (4,9) -- (4,7) node[right,midway] {$z$};
    
    \draw[->, thick] (4,21) -- (4,19)node[right,midway] {$\mathsf{Enc}(a)$};
    \draw[->, thick] (4,12) -- (4,10)node[right,midway] {$\mathsf{Enc}(b)$};
    \draw[->, thick] (4,3) -- (4,1)node[right,midway] {$c$};

    
\end{tikzpicture}}
    \subcaption{A single prover $P$ plays all roles sequentially. The pairs $(x,a)$ and $(y,b)$ are sent and returned in encrypted form $\mathrm{Enc}(x),\mathrm{Enc}(a)$ and $\mathrm{Enc}(y),\mathrm{Enc}(b)$ (the prover computes on the encrypted data homomorphically), while $(z,c)$ is sent in the clear. Security of the QHE scheme enforces computational non-signaling between the $A$-, $B$-, and $C$-interfaces.}
    \label{fig:nonlocal-c}
  \end{subfigure}
  \caption{Nonlocal game variants: (a) standard, (b) sequential, (c) compiled. Time flows from top to bottom.}
  \label{fig:NonlocalCompiledBellGame}
\end{figure}

Two recent and independent works began to close these gaps.
The authors of \cite{klep2025quantitative} established the first quantitative soundness bounds for bipartite games via finite-level certificates in a sequential variant of the Navascu\'es-Pironio-Ac\'in (NPA) hierarchy~\cite{navascues2008convergent,pironio2010convergent}; in particular, flat optimality gives bounds with respect to the tensor-product quantum value.
See also~\cite{cui2025convergent} for a parallel independent work that approaches the same problem from the dual sums-of-squares (SOS) perspective.
In parallel,~\cite{baroni2025asymptotic} proved asymptotic quantum soundness for all multipartite games, developing composable tools that fully generalize the algebraic approach of \cite{kulpe2024bound}, setting up a strong mathematical foundation for future work.

However, a unified solution for the general multipartite case remains challenging, as both previous results use different (a priori not compatible) techniques, leaving the question of whether quantitative quantum soundness for all multipartite games can be achieved open.
More precisely, the sequential NPA hierarchy of~\cite{klep2025quantitative} and the nice SOS hierarchy of~\cite{cui2025convergent} are both tailored to two-player sequential games and does not naturally accommodate the complex algebraic structure of multipartite interactions of~\cite{baroni2025asymptotic}.
Moreover, their core analytical tools are intrinsically bipartite and does not generalize.
This is a significant obstacle, as multipartite nonlocality exhibits richer quantum phenomena than its bipartite counterpart, and the practical difficulty of enforcing physical separation among many parties makes a robust cryptographic compilation especially desirable.

This work overcomes the above obstacles.
That is, we answer the question of whether quantitative upper bounds can be obtained for general multipartite compiled nonlocal games, and obtain quantum-value soundness whenever the associated sequential hierarchy has a finite-level certificate.
This makes the design of new multi-prover interactive proofs more compelling, knowing that such games can be compiled and experimentally realized in a single prover setting.
Our main tool is a novel and composable NPA-like hierarchy designed to model the sequential application of quantum instruments.
This framework, combined with new geometric proof techniques for the weak signaling decomposition, is rich enough to capture general multipartite scenarios in the compiled setting and may be of independent interest for the study of complex quantum protocols.

\subsection{Main results}\label{sec:IntroMainResult}
Our first main result is a quantitative quantum soundness theorem for multipartite compiled nonlocal games, providing bounds for finite security levels.
For a given security parameter $\lambda$, we define an \emph{efficient} prover as one implementable in quantum-polynomial time (QPT), i.e., by a quantum circuit of size $\poly(\lambda)$.
We call a function \emph{negligible}, denoted $\negl(\lambda)$, if it vanishes faster than the reciprocal of any polynomial in $\lambda$.

\begin{maintheorem}[\cref{cor:kpartite_QuantumSoundness}]\label{thm:MainQuantumSoundness}
    Let $k, \lambda \in \mathds{N}$, let $\cG$ be a $k$-partite nonlocal game with optimal commuting-operator quantum score $\gamevalueqcopt{\cG}$, and let $\cG_{\comp}$ be its compiled version.
    Let $S = (S_{\lambda})_{\lambda}$ be any strategy employed by an efficient prover, and denote by $\gamevalueCompile{\lambda}{\cG_{\comp}, S}$ its compiled Bell score.

    If the $k$-partite sequential NPA hierarchy (see \cref{eq:kpartite_SequentialNPASDP} for the full definition) for $\cG$ admits a flat optimal solution at some level $n$, then there exists a negligible function $\negl_{S}(\lambda)$ (depending on the QHE scheme and on $S$) such that
    \begin{align}\label{eq:MainthmFiniteDimGameUpperBound}
        \gamevalueCompile{\lambda}{\cG_{\comp}, S} \leq \omega_{\mathrm{q}}(\cG) + \negl_{S}(\lambda),
    \end{align}
    where $\omega_{\mathrm{q}}(\cG)$ is the optimal tensor-product quantum score.

    More generally, for every $k$-partite nonlocal game $\cG$ and for every $n \in \mathds{N}$, there exists a negligible function $\negl_{S, n}(\lambda)$ (depending on the QHE scheme, the strategy $S$, and $n$) such that
    \begin{align}\label{eq:MainthmAllGameUpperBound}
        \gamevalueCompile{\lambda}{\cG_{\comp}, S} \leq \gamevaluekSeqNPA{\cG}{n} + \negl_{S, n}(\lambda),
    \end{align}
    where $\gamevaluekSeqNPA{\cG}{n}$ is the level-$n$ value of the $k$-partite sequential NPA hierarchy, such that $\gamevaluekSeqNPA{\cG}{n} \searrow \omega_{\mathrm{qc}}({\cG})$ as $n \to \infty$.
    In particular, if the hierarchy converges at some finite $n$, i.e., $\gamevaluekSeqNPA{\cG}{n} = \omega_{\mathrm{qc}}({\cG})$, then we recover \cref{eq:MainthmFiniteDimGameUpperBound} with $\omega_{\mathrm{q}}(\cG)$ replaced by $\omega_{\mathrm{qc}}({\cG})$.
\end{maintheorem}

A key observation is that the error terms $\negl_{S,n}(\lambda)$ (and, when applicable, $\negl_S(\lambda)$) are determined solely by the compiled strategy $S$ and the security of the underlying QHE scheme.
In particular, they do not rely on any convergence guarantee for the sequential hierarchy, nor on our ability to solve it (see \cref{rem:NegligibleFunctionIndependent}).

This theorem recovers the quantitative bipartite result of~\cite{klep2025quantitative} for $k=2$.
In the limit where $\lambda\to\infty$ and $n\to\infty$, it reproduces the asymptotic multipartite result of~\cite{baroni2025asymptotic}.
Thus, \cref{thm:MainQuantumSoundness} unifies all prior work, delivering quantitative quantum soundness under finite-level hierarchy certificates, a commuting-quantum-value upper bound in the asymptotic case, and systematic hierarchy-based bounds in full generality.

The central technical tool enabling this result is a novel sequential NPA hierarchy, for which we show completeness, strict feasibility, and a stopping criterion for finite convergence.
\begin{maintheorem}[Informal, \cref{thm:kpartite_Completeness,prop:kpartite_FlatnessCondition,prop:kpartite_StrictFeasible}]\label{thm:MainCompleteness}
    For any $k$, the $k$-partite sequential NPA hierarchy (\cref{eq:kpartite_SequentialNPASDP}) is strictly feasible, and is complete with respect to $k$-partite commuting-observable strategies (and thus $k$-partite operationally-non-signaling sequential strategies), i.e., the sequence of its finite-level values satisfies
    \begin{align*}
        \gamevaluekSeqNPA{\cG}{n} \searrow \omega_{\mathrm{qc}}({\cG}) \quad \text{ as } n\to\infty.
    \end{align*}
    Moreover, if $\cG$ admits a flat optimal solution to \cref{eq:kpartite_SequentialNPASDP} at some finite level $n$, then this flat solution gives rise to a finite-dimensional optimal tensor-product quantum strategy, and the hierarchy converges at level $n$ to the tensor product quantum value, i.e., $\gamevaluekSeqNPA{\cG}{n} = \omega_{\mathrm{qc}}(\cG) = \omega_{\mathrm q}(\cG)$
\end{maintheorem}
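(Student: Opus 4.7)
The plan is to follow the standard template for NPA-style completeness theorems, adapted to the sequential-instrument multipartite setting. I would treat three pieces separately: strict feasibility (needed both for SDP duality and for the limit arguments), completeness as $n \to \infty$, and the flatness criterion for finite-dimensional extraction. The framework is that the level-$n$ SDP optimizes a linear functional over moment matrices $\Gamma^{(n)}$ indexed by words of length up to $n$ in a free $*$-algebra $\cF$ generated by symbols $\{A_{a|x}^{(j)}\}$ representing the instruments of each party $j \in \{1,\dots,k\}$, subject to (i) $\Gamma^{(n)} \succeq 0$, (ii) localizing/sum relations encoding operational non-signaling $\sum_a A_{a|x}^{(j)\,*} A_{a|x}^{(j)} = \sum_a A_{a|x'}^{(j)\,*} A_{a|x'}^{(j)}$ for each $j$, and (iii) cross-commutation relations coming from the sequential evaluation $\Tr{\sigma \cdot \QInstrument_{a|x}^{(1)} \circ \cdots \circ \QInstrument_{a_k|x_k}^{(k)}}$.

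For strict feasibility, I would exhibit an explicit interior point. A convenient candidate is the \emph{trivial uniform sequential strategy}, where each party's instrument is the depolarizing choice $\QInstrument_{a|x}^{(j)}(\rho) = \frac{1}{|a|}\rho$, evaluated on a maximally mixed state of sufficiently large (but still finite) dimension. Its moment matrix has strictly positive diagonal entries and, after a small convex mixture with a generic positive-definite moment matrix supported on the free algebra (e.g., obtained from a faithful finite-dimensional representation of $\cF / \mathcal{I}$ where $\mathcal{I}$ encodes the algebraic relations), becomes strictly positive definite. Operational non-signaling and the sequential cross-commutation constraints are automatic for this classical strategy, so Slater's condition holds at every level $n$.

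For completeness, monotonicity $\gamevaluekSeqNPA{\cG}{n+1} \leq \gamevaluekSeqNPA{\cG}{n}$ and the bound $\gamevaluekSeqNPA{\cG}{n} \geq \gamevalueqcopt{\cG}$ are both standard: more constraints yield tighter relaxations, and any commuting-operator strategy pulls back to a feasible moment matrix via $\Gamma[w_1, w_2] = \Tr{\sigma \cdot \pi(w_1^* w_2)}$ for the induced $*$-homomorphism $\pi$. The nontrivial direction is the matching upper bound. I would take a sequence of optimal level-$n$ moment matrices $\Gamma^{(n)}$, extract a convergent subsequence by compactness (entries are uniformly bounded thanks to the operational non-signaling normalization, which forces $\|A_{a|x}^{(j)\,*} A_{a|x}^{(j)}\| \leq 1$), and interpret the limit as a positive linear functional $\tau : \cF \to \mathds{C}$. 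Applying the GNS construction yields a Hilbert space $\cH$, a cyclic vector $\ket{\Omega}$, and a representation $\pi$ whose image on each party's generators forms a Kraus-like family producing a completely positive operationally-non-signaling instrument. The key verification — and \textbf{the main obstacle} — is that in the multipartite case the cross-commutation constraints plus the sequential-composition moment identities must jointly ensure that distinct parties' induced channels act on \emph{compatible} commuting subalgebras, so that the sequential composition $\QInstrument_{a_1|x_1}^{(1)} \circ \cdots \circ \QInstrument_{a_k|x_k}^{(k)}$ in the limit genuinely reproduces the correlation. Unlike the bipartite case of Klep et al., here one must control the interleaving of $k$ non-commuting instrument algebras simultaneously; I would handle this by designing the hierarchy's constraints so that the non-signaling sum rules can be lifted into the GNS quotient as true operator identities, and then invoke Stinespring-type dilation to reconstruct honest instruments of value $\lim_n \gamevaluekSeqNPA{\cG}{n} = \gamevalueqcopt{\cG}$.

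For the flatness criterion, I would invoke the Curto--Fialkow / NPA-style flat extension argument. If an optimal $\Gamma^{(n)}$ admits a flat extension to level $n+1$ (the rank does not increase), then the left kernel is an ideal stable under the algebra action, so the GNS quotient is finite-dimensional. The cross-commutation and non-signaling constraints transfer to this finite-dimensional representation and endow it with the structure of a $k$-partite sequential quantum strategy achieving the SDP value, proving $\omega_{\mathrm{q}}(\cG) = \gamevaluekSeqNPA{\cG}{n}$. Conversely, any finite-dimensional optimal quantum strategy generates a moment matrix whose rank stabilizes once $n$ exceeds the dimension of the algebra it generates, producing the required flat solution — establishing the claimed if-and-only-if.
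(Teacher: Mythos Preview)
Your high-level template (compactness plus GNS for convergence, Curto--Fialkow flat extension for the finite-dimensional criterion, exhibit an interior point for strict feasibility) matches the paper's, and the flatness argument is essentially identical. However, there are two concrete gaps.

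First, your description of the hierarchy does not match \cref{eq:kpartite_SequentialNPASDP}. The paper's hierarchy is \emph{not} built from Kraus-type letters $A_{a|x}^{(j)}$ with constraints $\sum_a A_{a|x}^{(j)\,*} A_{a|x}^{(j)} = \sum_a A_{a|x'}^{(j)\,*} A_{a|x'}^{(j)}$ and cross-commutation; it is built from \emph{projective} generators $f_{a_{[k]}|x_{[k]}}$ of nested PVM algebras $\cA_{[j]}$, with instruments encoded as $*$-homomorphisms $T^{j}_{a_j|x_j}:\cA_{[j-1]}\to\cA_{[j]}$, and the non-signaling constraint reads $L^{2n}\bigl(w^*\sum_{a_j}(T^j_{a_j|x_j}(r^*s)-T^j_{a_j|x'_j}(r^*s))v\bigr)=0$. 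This matters: projectivity of the generators is what gives the uniform entrywise bound needed for Banach--Alaoglu, and the $*$-homomorphism structure is what makes the limiting state $\sigma$ on $\cA_{[k]}$ directly identifiable with an operationally-non-signaling sequential strategy in the sense of~\cite{baroni2025asymptotic}. Your Kraus model would require a separate boundedness argument and a different reconstruction.

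Second, two of your steps are incomplete even at the outline level. For strict feasibility, mixing the depolarizing strategy with ``a generic positive-definite moment matrix from a faithful finite-dimensional representation of $\cF/\mathcal{I}$'' is circular: exhibiting such a faithful representation satisfying all the linear constraints \emph{is} the content of strict feasibility. The paper instead embeds the sequential letters into the standard commuting-PVM algebra via $f_{a_{[k]}|x_{[k]}}:=\prod_j A^{(j)}_{a_j|x_j}$ and restricts a known strictly feasible standard-NPA moment matrix to this sub-index set; a principal submatrix of a positive-definite matrix is positive definite, and the sequential constraints hold automatically under this substitution. For completeness, the genuinely hard multipartite step---passing from an operationally-non-signaling sequential strategy to a commuting-operator strategy---is not handled by a generic ``Stinespring-type dilation''. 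The paper black-boxes this to~\cite{baroni2025asymptotic}, whose key ingredient is a \emph{chain rule for Arveson's Radon--Nikodym derivatives}: one takes the minimal Stinespring dilation of the marginal channel $\sum_{a_j}\pi\circ T^j_{a_j|x_j}$ and uses Arveson's theorem to extract POVMs $F_{a_j|x_j}$ in the commutant, iterating across parties. Without this specific mechanism, your sketch does not explain why the $k$ instrument algebras land in mutually commuting positions.
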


\Cref{thm:MainCompleteness} can be interpreted as providing a concrete NPA hierarchy for the multipartite generalization of the bipartite Schrödinger-Gisin-Hughston-Jozsa-Wootters (S-G-HJW) purification theorem \cite{Schrödinger_1936,gisin1989stochastic,hughston1993complete}, which was recently proved by~\cite{baroni2025asymptotic}.
This generalization asserts that the correlations in operationally-non-signaling sequential scenarios coincide with those realized by commuting-observable strategies, for any number of players.
As such, our novel NPA hierarchy generalizes the NPA hierarchy that was previously considered for the bipartite sequential case in~\cite{klep2025quantitative,cui2025convergent}

\subsection{Methods and techniques}\label{sec:IntroMethodsTechnique}
\noindent
\textbf{Previous works on bipartite games.}
As in \cref{thm:MainQuantumSoundness}, our primary goal is to characterize and compute upper bounds on the scores of compiled nonlocal games.
Recent works \cite{kulpe2024bound, baroni2025asymptotic} established a crucial connection: in the asymptotic limit of perfect cryptographic security, compiled strategies (\cref{fig:nonlocal-c}) are equivalent to sequential strategies (\cref{fig:nonlocal-b}), where players are constrained to be operationally--non-signaling.
That is, the quantum operations of the players, when averaged over all classical outcomes, reveal no information about their classical input.
This provides a path to bounding compiled scores by analyzing their sequential counterparts.

However, for any real-world implementation with a finite security parameter, this correspondence is not exact, and a quantitative analysis is required.
The framework of \cite{klep2025quantitative} provided the first quantitative bounds for bipartite nonlocal games as follows: it introduces a sequential NPA hierarchy that is complete for the bipartite quantum scenarios.
Then, they show that any efficient strategy for a compiled game is necessarily ``close'' to a feasible solution of this hierarchy, using a signaling decomposition argument.
By proving this, the score of the compiled strategy becomes bounded by the value of the sequential hierarchy, thus demonstrating soundness due to its convergence.

\vspace{0.3em}
\noindent
\textbf{Limitations of the previous bipartite techniques.}
This bipartite solution, however, faces two fundamental obstacles that prevent its generalization to multipartite cases:
\begin{enumerate}[noitemsep, topsep=2pt, leftmargin=*]
    \item \emph{An unextendable hierarchy:} The bipartite hierarchies in~\cite{klep2025quantitative,cui2025convergent} are built on the Schr\"odinger picture, modeling the post-measurement state passed between players.
    While sufficient for two players, this approach loses critical algebraic information and is known to fail in multipartite scenarios~\cite{sainz2015postquantum}.
    Multipartite sequential protocols require a more robust model that can handle a chain of quantum instruments (CP maps).
    Specifically, the moment matrices defined in~\cite[Eq.~22]{klep2025quantitative} and~\cite[Eqs.~4.2 and~4.4]{cui2025convergent} exhibit a \emph{block-diagonal} structure indexed by Alice's inputs and outputs $(a,x)$.
    While the information contained in these diagonal blocks suffices for bipartite scenarios, this structure implicitly discards \emph{off-diagonal terms} that are critical for multipartite cases~\cite{sainz2015postquantum,baroni2025asymptotic}.
    Consequently, multipartite sequential protocols require a more robust model capable of handling a chain of quantum instruments (CP maps) to preserve this essential algebraic information.
    \item \emph{Non-scalable proof techniques:} The proof arguments in previous works are intrinsically bipartite.
    The signaling decomposition in~\cite{klep2025quantitative} relies on a Gelfand-Naimark-Segal (GNS) representation of the underlying algebra, while the key lemma~\cite[Lemma~3.4]{cui2025convergent} of the SOS approach relies on a calculation that is similarly limited to bipartite scenarios.
    Neither argument generalizes to the multipartite setting, necessitating a completely new approach.
\end{enumerate}

\begin{figure}
    \centering
    \resizebox{0.7\linewidth}{!}{\begin{tikzpicture}[scale=0.8,every node/.append style={scale=0.8},remember picture]
    
\begin{scope}[shift={(14,2)}, , scale=0.23,every node/.append style={scale=0.8}]
        \fill[gray!30] (0,0) -- (24,0) -- (24,1) -- (0,1) -- cycle;
    \fill[gray!30] (0,0) -- (1,0) -- (1,10) -- (0,10) -- cycle;
    \fill[gray!30] (0,9) -- (24,9) -- (24,10) -- (0,10) -- cycle;

    \fill[orange!30] (4,3) rectangle (8,7);
    \fill[orange!30] (12,3) rectangle (16,7);
    \fill[orange!30] (20,3) rectangle (24,7);

    \draw[->, thick] (6,9) -- (6,7) node[right,midway] {$x$};
    \draw[->, thick] (14,9) -- (14,7)node[right,midway] {$y$};
    \draw[->, thick] (22,9) -- (22,7) node[right,midway] {$z$};
    
    \draw[->, thick] (6,3) -- (6,1)node[right,midway] {$a$};
    \draw[->, thick] (14,3) -- (14,1)node[right,midway] {$b$};
    \draw[->, thick] (22,3) -- (22,1)node[right,midway] {$c$};

    \fill[orange!30] (2,11) -- (20,11) -- (20,13) -- (2,13) -- cycle;
    \fill[orange!30] (2,3) -- (4,3) -- (4,11) -- (2,11) -- cycle;
    \fill[orange!30] (10,3) -- (12,3) -- (12,11) -- (10,11) -- cycle;
    \fill[orange!30] (18,3) -- (20,3) -- (20,11) -- (18,11) -- cycle;

\end{scope}

\draw[->, thick] (10.2,3.2) -- (13,3.2)node[above,midway] {$n \to \infty$};

\node at (11.6,4) {\large{Sequential NPA}};
\node (LabelSectionNPA) at (11.6, 2.4) {};

\node at (16.8,1) {\large{Commuting operator strategies}};

\node at (8,-1) {\Large $\text{ONS}^n =0$};

\begin{scope}[shift={(7,0)}, , scale=0.23,every node/.append style={scale=0.8}]

\fill[gray!30] (8.5,0) -- (9.5,0) --(9.5,28) -- (8.5,28) -- cycle;
    \fill[gray!30] (3,0) -- (8.5,0) --(8.5,1) -- (3,1) -- cycle;
    \fill[gray!30] (3,9) -- (8.5,9) --(8.5,10) -- (3,10) -- cycle;
    \fill[gray!30] (3,18) -- (8.5,18) --(8.5,19) -- (3,19) -- cycle;
    \fill[gray!30] (3,27) -- (8.5,27) --(8.5,28) -- (3,28) -- cycle;

    \fill[orange!30] (0,3) rectangle (6,7);
    \fill[orange!30] (0,12) rectangle (6,16);
    \fill[orange!30] (0,21) rectangle (6,25);

    \fill[orange!60] (0.5,12) rectangle (1.5,8);
    \node[isosceles triangle,
	isosceles triangle apex angle=90,
	fill=orange!60,
        rotate=-90,
	minimum size =5] (T90)at (1,7.8){};

        \fill[orange!60] (0.5,21) rectangle (1.5,17);
    \node[isosceles triangle,
	isosceles triangle apex angle=90,
	fill=orange!60,
        rotate=-90,
	minimum size =5] (T90)at (1,16.8){};



    \draw[->, thick] (4,27) -- (4,25) node[right,midway] {$x$};
    \draw[->, thick] (4,18) -- (4,16)node[right,midway] {$y$};
    \draw[->, thick] (4,9) -- (4,7) node[right,midway] {$z$};
    
    \draw[->, thick] (4,21) -- (4,19)node[right,midway] {$a$};
    \draw[->, thick] (4,12) -- (4,10)node[right,midway] {$b$};
    \draw[->, thick] (4,3) -- (4,1)node[right,midway] {$c$};


	isosceles triangle apex angle=90,
	fill=orange!30,
        rotate=-90,
	minimum size =5] (T90)at (6,0){};
\end{scope}

\node at (5,3.2) {\LARGE$\approx_{\negl(\lambda)}$};

\node at (4.5,4) {\large{Geometric argument}};
\node (LabelSectionSoundness) at (4.5,2.4) {};

\node at (1.5,-1) {\Large $\text{ONS}^n\leq \negl(\lambda)$};

\begin{scope}[shift={(0,0)}, scale=0.23,every node/.append style={scale=0.8}]
    \fill[gray!30] (8.4,0) -- (9.4,0) --(9.4,28) -- (8.4,28) -- cycle;
    \fill[gray!30] (3,0) -- (8.5,0) --(8.5,1) -- (3,1) -- cycle;
    \fill[gray!30] (3,9) -- (8.5,9) --(8.5,10) -- (3,10) -- cycle;
    \fill[gray!30] (3,18) -- (8.5,18) --(8.5,19) -- (3,19) -- cycle;
    \fill[gray!30] (3,27) -- (8.5,27) --(8.5,28) -- (3,28) -- cycle;

    \fill[orange!30] (2,3) rectangle (6,7);
    \fill[orange!30] (2,12) rectangle (6,16);
    \fill[orange!30] (2,21) rectangle (6,25);

    \fill[orange!30] (0,3) -- (2,3) -- (2,25) -- (0,25) -- cycle;


    \draw[->, thick] (4,27) -- (4,25) node[right,midway] {$\mathsf{Enc}(x)$};
    \draw[->, thick] (4,18) -- (4,16)node[right,midway] {$\mathsf{Enc}(y)$};
    \draw[->, thick] (4,9) -- (4,7) node[right,midway] {$z$};
    
    \draw[->, thick] (4,21) -- (4,19)node[right,midway] {$\mathsf{Enc}(a)$};
    \draw[->, thick] (4,12) -- (4,10)node[right,midway] {$\mathsf{Enc}(b)$};
    \draw[->, thick] (4,3) -- (4,1)node[right,midway] {$c$};
\end{scope}

\end{tikzpicture}
%
%
\begin{tikzpicture}[remember picture,overlay]
  \node[anchor=center] at (LabelSectionSoundness.center) {\small\strut\cref{sec:SoundnessCompileMain}};
  \node[anchor=center] at (LabelSectionNPA.center) {\small\strut\cref{sec:SeqNPAMain}};
\end{tikzpicture}}
    \caption{A graphical overview of our two main technical contributions: the sequential NPA hierarchy and a geometric argument for closeness of compiled strategies. (Middle to Right): Our sequential NPA hierarchy characterizes ideal sequential quantum strategies (middle) where operational non-signaling holds perfectly ($\text{ONS}^n=0$). We show that its solutions converge to the set of standard commuting operator strategies (right) as $n \to \infty$. (Left to Middle): A compiled strategy (left) is not a perfect solution but an approximate one to our sequential hierarchy, violating the non-signaling condition negligibly ($\text{ONS}^n\leq \negl(\lambda)$). We provide a novel geometric argument which proves that this pseudo-solution is negligibly close to a genuinely feasible solution of our hierarchy.}
    \label{fig:scheme}
\end{figure}

\vspace{0.3em}
\noindent
\textbf{Overcome the limitations.}
This paper overcomes both obstacles by adopting the same high-level approach while introducing entirely new, scalable tools, as demonstrated by \cref{fig:scheme}.
First, to replace the unextendable bipartite model, we introduce a composable multipartite sequential NPA hierarchy based on the Heisenberg picture, which is guaranteed to converge for any number of players.
Second, to replace the non-scalable proof technique, we develop a novel, concise, and scalable geometric proof (further illustrated by \cref{fig:GeometryOfProof}).
This new argument establishes the crucial closeness result between a near-solution from a compiled strategy and a true feasible solution in our hierarchy.
Together, these contributions prove quantitative soundness for compiled games with any number of players.
We now detail our techniques in two parts.

\vspace{0.3em}
\noindent\textbf{The NPA hierarchy for nonlocal settings (\cref{fig:nonlocal-a}).}
The Navascu\'es-Pironio-Ac\'in (NPA) hierarchy~\cite{navascues2008convergent, pironio2010convergent}, a noncommutative generalization of the Lasserre-Parrilo hierarchy~\cite{lasserre2001global, parrilo2003semidefinite}, is a cornerstone for analyzing nonlocal games.
It provides a sequence of increasingly tight semidefinite programming (SDP) relaxations, indexed by an integer level $n$, that systematically computes upper bounds on the optimal quantum score.

In particular, the standard NPA hierarchy is designed for the standard Bell scenario (\cref{fig:nonlocal-a}), where all players' actions are modeled as terminal measurement POVMs that commute.
Specifically, consider a game $\cG$ with questions $x, y, z, \dots$ and answers $a, b, c \dots$, the POVMs of all players are modeled with commuting letters $\{f_{a|x}\}, \{f_{b|y}\}, \{f_{c|z}\}, \dots$, and denote the shared quantum state by $\sigma$.
These letters form operator words (monomials); the number of letters in a word is its degree.

The core of the NPA technique is a level-$n$ matrix, $\Gamma^{(n)}$, indexed by operator words of degree $\leq n$.
The entries of this matrix correspond to the expectation values of the associated operator words with respect to $\sigma$.
That is, for any two words $w$ and $v$ of degree $\leq n$, the corresponding matrix entry can be understood as:
\begin{align*}
    \Gamma^{(n)}_{w,v} = L^{2n}(w^*v) = \Tr{\sigma \cdot w^*v},
\end{align*}
where $L^{2n}$ is the associated linear map modeling the expectation $\Tr{\sigma \ \cdot}$.
As $n \to \infty$, the hierarchy of matrices encodes all possible expectation values and thus completely characterizes the set of commuting observable quantum strategies.

However, this framework cannot describe sequential protocols (\cref{fig:nonlocal-b}) or their compiled counterparts (\cref{fig:nonlocal-c}), which are central to our work, where the actions of earlier players are quantum instruments that transform the state for subsequent players.
This necessitates a fundamental generalization of the NPA framework for sequential settings. Some partial results towards the construction of the NPA hierarchy for restricted sequential setting exist in the literature~\cite{bowles2020bounding,tavakoli2021bounding,chaturvedi2021characterising}, but without convergence guarantees, and inapplicable to general sequential scenarios.

\vspace{0.3em}
\noindent\textbf{Our contribution: A multipartite sequential NPA hierarchy for quantum instruments (\cref{fig:nonlocal-b}).}
To this end, we introduce a composable, multipartite sequential NPA hierarchy (\cref{eq:kpartite_SequentialNPASDP}) and prove that this hierarchy is complete for both the quantum sequential and nonlocal scenarios (\cref{thm:MainCompleteness}).
We now sketch its construction, focusing on the tripartite case for demonstration purposes.

We use notations of \cref{fig:nonlocal-b}, based on the Heisenberg algebraic picture of a sequential game, dual to the standard Schr\"odinger's picture in which maps act on states.
More precisely, the quantum instruments modeling Alice's and Bob's actions do not act on the state $\sigma$ which is fixed, but on Charlie's measurement operators (see \cref{eq:SequentialityDemonstration} below).
Alice's and Bob's actions are described by completely positive (CP) maps (quantum instruments) $\{\QInstrument_{a|x}\}_a$ and $\{\QInstrument_{b|y}\}_b$, while the final player, Charlie, performs a standard measurement with POVM effects $\{f_{c|z}\}_c$.
With these objects, the corresponding correlations are
\begin{align*}
    p(abc|xyz)=\Tr{\sigma \cdot \QInstrument_{a|x} \circ \QInstrument_{b|y}(f_{c|z})},
\end{align*}
achieving the score
\begin{align*}
    \sum_{a,b,x,y} \beta_{abcxyz} p(abc|xyz),
\end{align*}
where $\beta_{abcxyz}$ is the payoff tensor associated with the rule of the game $\cG$.
Diagrammatically, the sequentiality can be represented by
\begin{align}\label{eq:SequentialityDemonstration}
    f_{c|z} \text{ of } C \xrightarrow{\QInstrument_{b|y}} \QInstrument_{b|y}(f_{c|z}) \text{ of } B \xrightarrow{\QInstrument_{a|x}} \QInstrument_{a|x} \circ \QInstrument_{b|y}(f_{c|z}) \text{ of } A.
\end{align}
The final pieces of the sequential scenarios are the \emph{operationally-non-signaling constraints} of Alice's and Bob's actions, which are, respectively,
\begin{align*}
    \sum_a \QInstrument_{a|x}=\sum_a \QInstrument_{a|x'}, \quad \sum_b \QInstrument_{b|y}=\sum_b \QInstrument_{b|y'},
\end{align*}
for all $x, x', y, y'$.

Our sequential generalization of the NPA hierarchy must therefore construct moment matrices $\Gamma^{(n)}$ that algebraically encode both the ordered action of these instruments and their CP-map-level operationally-non-signaling constraints.
The fundamental challenge is twofold: (1) to define a new set of operator words that captures the sequential structure of the game, and (2) to impose the appropriate algebraic and positivity constraints on the moment matrix $\Gamma^{(n)}$ (equivalently, on its associated linear map $L^{2n}$) to faithfully model the sequential scenarios.

For (1), the solution is a sequential construction of the ``words'' (operator monomials) that index our $\Gamma^{(n)}$, which we start from the last player as in \cref{eq:SequentialityDemonstration}:
\begin{enumerate}[label=(\alph*), noitemsep, topsep=2pt, leftmargin=*]
    \item \emph{The final player Charlie}: We begin with Charlie, whose action is a standard POVM.
    Similarly to the standard NPA hierarchy, we describe his measurements with a set of operator letters $\{f_{c|z}\}$.
    These letters form words (monomials) such as $u = f_{c_1|z_1} f_{c_2|z_2} \cdots f_{c_m|z_m}$, which have a degree $m$, denoted by $\deg u = m$.
    The set of all such words with $\deg u \leq n$ forms our base word set, $\cW^n_{C}$.

    \item \emph{The preceding player Bob}: 
    To model Bob's quantum instrument $\{\QInstrument_{b|y}\}$, we consider its Stinespring dilation, denoted ${T_{b|y}}$.
    This choice is physically motivated by the Stinespring dilation theorem, which states that any such instrument (a completely positive map) can be realized by a unitary evolution on a larger system.

    Crucially, a unitary evolution preserves the algebraic structure of operators.
    This means our model maps, ${T_{b|y}}$, are required to be *-homomorphisms, satisfying:
    \begin{align*}
        T_{b|y}(f_{c|z}^*) = T_{b|y}(f_{c|z})^* \quad \text{and} \quad T_{b|y}(f_{c|z} f_{c'|z'}) = T_{b|y}(f_{c|z}) T_{b|y}(f_{c'|z'})
    \end{align*}
    for all of Charlie's measurement operators $f_{c|z}, f_{c'|z'}$ and consequently for all words from $\cW^n_{C}$.
    These maps act on Charlie's operators to generate new letters, $f_{bc|yz} := T_{b|y}(f_{c|z})$, which algebraically capture the effect of Bob's instrument $\QInstrument_{b|y}$ on the subsequent system.
    The set of all words formed from these new letters $f_{bc|yz}$ up to degree $n$ is denoted $\cW^n_{BC}$.

    \item \emph{The first player Alice}: We proceed sequentially for Alice, defining her corresponding Stinespring dilated $*$-homomorphisms $\{T_{a|x}\}$ which act on the words of Bob and Charlie.
    This generates the final set of letters $f_{abc|xyz} = T_{a|x}( f_{bc|yz} )$ 
    and the degree $\leq n$ word set $\cW_{ABC}^n$.
    This layered construction directly models the sequentiality of the protocol analogous to \cref{eq:SequentialityDemonstration}:
    \begin{align*}
        f_{c|z} \text{ (Charlie)} \xrightarrow{T_{b|y}} f_{bc|yz} \text{ (Bob)} \xrightarrow{T_{a|x}} f_{abc|xyz} \text{ (Alice)}.
    \end{align*}

    \item \emph{The moment matrix}: This sequential process defines the full set of words $\cW_{ABC}^n$ that index our moment matrix $\Gamma^{(n)}$.
    The entry $\Gamma^{(n)}_{\id, f_{abc|xyz}}$ represents the correlation $p(abc|xyz)$, and consequently corresponding score in game $\cG$ with the payoff tensor $\beta_{abcxyz}$ is
    \begin{align*}
        \sum_{a,b,c,x,y,z} \beta_{abcxyz} \Gamma^{(n)}_{\id, f_{abc|xyz}}.
    \end{align*}
\end{enumerate}

With the indices of $\Gamma^{(n)}$ sequentially defined, we now address (2) by identifying the appropriate constraints to recover a sequential quantum strategy as the NPA level $n \to \infty$.
\begin{enumerate}[label=(\Alph*), noitemsep, topsep=2pt, leftmargin=*]
    \item \emph{Standard constraints}: 
    Since $\Gamma^{(n)}$ and $L^{2n}$ are to model the expectation map $\Tr{\sigma \  \cdot}$, this physical interpretation leads directly to three fundamental constraints on $\Gamma^{(n)}$.
    First, for $L^{2n}$ to be a well-defined linear map on the word sets, the matrix $\Gamma^{(n)}$ must be symmetric, and satisfy a Hankel-like condition.
    Second, since $\Tr{\sigma \  \cdot}$ is a positive map for any state $\sigma$, the functional $L^{2n}$ must also be positive.
    This is precisely equivalent to the constraint that $\Gamma^{(n)}$ must be positive semidefinite (PSD).
    Third, the state normalization $\Tr{\sigma} = 1$ corresponds to $L^{2n}(\id) = 1$, which implies the matrix normalization condition $\Gamma^{(n)}_{\id, \id} = 1$.
    
    \item \emph{Alice's operationally-non-signaling constraint}:
    A crucial innovation of our hierarchy are \emph{CP map level constraints} that formalize the notion of operationally-non-signaling, meaning that the marginal maps $\sum_a T_{a|x}$ and $\sum_a T_{a|x'}$ are the same if only tested against polynomials up to degree $n$; as $n$ goes to infinity, perfect operationally-non-signaling constraints are retrieved.
    The constraints for Alice's instrument read:
    \begin{align*}
        L^{2n}\Bigl(w^* \sum_a \bigl( T_{\ax}(r^*s) - T_{a|x'}(r^*s) \bigr) v \Bigr) = 0 
    \end{align*}
    for all $x, x'$, and for all words $r, s$ built from $\{f_{bc|yz}\}$ and $w,v$ from $\{f_{abc|xyz}\}$ up to a total degree $\leq 2n$.
    Here, the central term $\sum_a \bigl( T_{\ax}(r^*s) - T_{a|x'}(r^*s) \bigr)$ tests Alice's non-signaling condition on an arbitrary operator $r^*s$ from Bob's algebra.
    This test is then embedded within the context of arbitrary actions from the full sequential strategy that Alice can perform (words $w, v$), ensuring the condition holds universally.
    
    \item The constraint for Bob's instrument is more involved, but demonstrates the composable structure of our method:
    \begin{align*} 
      L^{2n}\!\Bigl(
          T_{a|x}\Bigl(r^* \sum_{b}\bigl(T_{b|y}(t^{*}u)-T_{b|y'}(t^{*}u)\bigr) s\Bigr)
      \Bigr) = 0,
    \end{align*}
    for all $a,x,y,y'$, and for all words $r,s$ built from $\{f_{bc|yz}\}$ and $t,u$ from $\{f_{c|z}\}$ up to a total degree $\leq 2n$.
    Similarly, the central term, $\sum_{b}\bigl(T_{b|y}(t^{*}u)-T_{b|y'}(t^{*}u)\bigr)$, tests Bob's non-signaling condition on an arbitrary operator $t^*u$ from Charlie's algebra.
    This core expression is embedded within the context of arbitrary operators from Bob's layer (words $r, s$) and placed under every possible action by Alice ($T_{a|x}, \, \forall a,x$).
    In this way, the SDP enforces Bob's non-signaling constraints across every possible algebraic context created by the preceding players.
\end{enumerate}

Following (a)-(d) and (A)-(C), we reach the subsequent definition of the tripartite sequential NPA hierarchy that maximizes the score achievable by all sequential quantum strategies (\cref{eq:Tripartite_SequentialNPASDP}):
\begin{align}\label{eq:Intro_Tripartite_SequentialNPASDP}
    \gamevalueTriSeqNPA{\cG}{n}
    \;:=\;
    \max\quad
      & \sum_{a,b,c,x,y,z} \beta_{abcxyz}\;
        \Gamma^{(n)}_{\id, f_{abc|xyz}} \nonumber \\
    \text{s.t.}\quad
      &\Gamma^{(n)} \succeq 0, \quad
      \Gamma^{(n)}_{\id,\id}=1, \quad\text{(PSD and normalization)} \nonumber \\
    &\Gamma^{(n)}_{w,w'}=\Gamma^{(n)}_{v,v'}
    \quad\text{whenever }w^{*}w'=v^{*}v', \quad \text{(Hankel condition)} \nonumber \\[2pt]
    &L^{2n}\Bigl(w^* \sum_a \bigl( T_{\ax}(r^*s) - T_{a|x'}(r^*s) \bigr) v \Bigr)\;= 0 \nonumber \\[2pt]
    &\hphantom{M_{(w,r),(s,v)}:=}
    \substack{\forall x, x', \, \forall w,v\in\cW_{ABC}^{n},\\
              r,s\in\cW_{BC}^{n},\\
              \deg w+\deg v+\deg r+\deg s\le 2n}
    \quad\text{(Alice operationally‑non‑signaling)} \nonumber \\[2pt]
    &L^{2n}\!\Bigl(
      T_{a|x}\Bigl(r^* \sum_{b}\bigl(T_{b|y}(t^{*}u)-T_{b|y'}(t^{*}u)\bigr) s\Bigr)
    \Bigr) \;= 0 \nonumber \\[2pt]
    &\hphantom{N_{(r,t),(s,u)}:=}
    \substack{\forall\, a,x, y, y', 
    \\ r,s\in\cW_{BC}^{n},\\
              t,u\in\cW_{C}^{n},\\
              \deg r+\deg s+\deg t+\deg u\le 2n}
    \quad\text{(Bob operationally‑non‑signaling)}.
\end{align}

As showcased in the sequential construction (a)-(d), it is straightforward to obtain the word sets to any $k$-partite scenarios.
Analogously, the operationally-non-signaling constraints in (B) and (C) can be generalized to $k$-partite scenarios in an inductive way. 
Indeed, by induction, we formulate in \cref{eq:kpartite_SequentialNPASDP} the general $k$-partite sequential NPA hierarchy. 
As stated in \cref{thm:MainCompleteness}, this hierarchy is complete to both the standard multipartite quantum strategies (\cref{fig:NonlocalCompiledBellGame}(a)) and the multipartite quantum sequential strategies (\cref{fig:NonlocalCompiledBellGame}(b)).

In contrast to the bipartite hierarchies of~\cite{klep2025quantitative,cui2025convergent} formulated in the Schr\"odinger picture, our novel sequential construction operates in the Heisenberg picture.
Instead of tracking the evolving state, we model the transformations themselves using $*$-homomorphisms and include all off-diagonal terms that were previously discarded in the bipartite sequential hierarchies (which are block-diagonal in $(a,x)$).
This preserves the complete algebraic structure of each quantum instrument in the sequence.
This perspective provides a clean and rigorous way to encode the CP-map-level non-signaling constraints, leading to a novel hierarchy that is powerful enough to converge for any number of players (\cref{thm:MainCompleteness}).
We expect this technique to be of independent interest for the computational analysis of multi-round quantum protocols and in fields where CP maps are fundamental~\cite{paulsen2002completely,raginsky2003radon}.

\vspace{0.3em}
\noindent\textbf{Relating compiled nonlocal game (\cref{fig:nonlocal-c}) to the sequential NPA hierarchy with a scalable geometric argument.}
We now explain the final piece of our hierarchy-based quantitative soundness bound: relating compiled games to our new sequential hierarchy as shown in the left part of \cref{fig:scheme}.
The high-level strategy, mirroring that of~\cite{klep2025quantitative}, is to show that any efficient compiled strategy is ``close'' to a genuinely feasible solution within our multipartite sequential NPA hierarchy.
However, as previously discussed, the representation-based signaling decomposition argument used in~\cite{klep2025quantitative} and the pseudo-expectation lemma of~\cite{cui2025convergent} are intrinsically bipartite and fail here.

\begin{figure}
    \centering
    \includegraphics[width=0.35\linewidth]{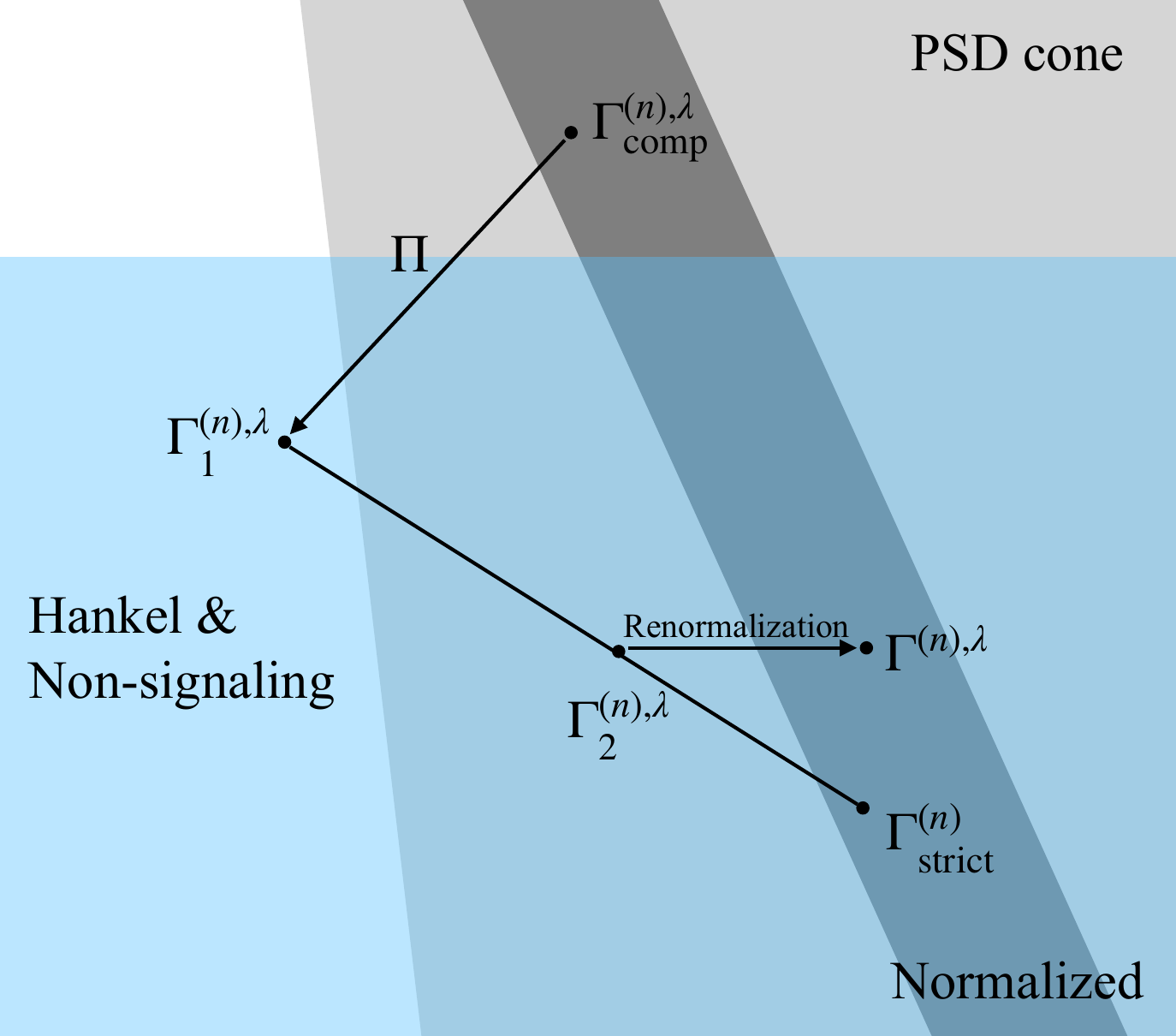}
    \caption{Geometric sketch of proof for \cref{thm:MainQuantumSoundness}, detailed in \cref{sec:GeometricProof}. The gray region is the PSD cone; the blue slice encodes the Hankel condition and operationally-non-signaling constraints of our $k$-partite sequential NPA hierarchy \cref{eq:kpartite_SequentialNPASDP}. The thick dark line indicates normalized moment matrices. From a compiled strategy we extract $\Gamma^{(n),\lambda}_{\comp}$. Applying the projector $\Pi$ (constructed based on \cref{eq:EMapForConstraint}) yields $\Gamma^{(n), \lambda}_1$ that satisfies the affine constraints but may fail PSD and normalization. Positivity is restored by convexly mixing with a strictly feasible point $\Gamma^{(n)}_\strict$, giving $\Gamma^{(n), \lambda}_2$, which is then rescaled to a normalized $\Gamma^{(n), \lambda}$ that is a feasible solution to our multipartite sequential NPA hierarchy. We are done by invoking \cref{thm:MainCompleteness}.}
    \label{fig:GeometryOfProof}
\end{figure}

We therefore introduce a novel, scalable geometric proof that achieves this closeness result (illustrated in \cref{fig:GeometryOfProof} and fully presented in \cref{sec:SoundnessCompileMain}).
The core idea is as follows:
\begin{enumerate}[noitemsep, topsep=2pt, leftmargin=*]
    \item As a consequence of the block-encoding arguments introduced in~\cite{natarajan2023bounding,kulpe2024bound} and further developed in~\cite{baroni2025asymptotic}, an efficient compiled strategy, with a finite cryptographic security $\lambda$, produces a moment matrix $\Gamma^{(n),\lambda}_{\comp}$ that acts as a ``pseudo-solution'' for our hierarchy.
    It is positive and normalized, but it weakly violates the operationally-non-signaling constraints.
    Based on $\Gamma^{(n),\lambda}_{\comp}$, our proof then geometrically constructs a nearby, genuinely feasible solution.
    \item
    We first project $\Gamma^{(n),\lambda}_{\comp}$ onto the affine subspace of matrices that perfectly satisfy the non-signaling and Hankel constraints. However, this projected matrix might no longer be PSD and normalized.
    \item 
    We then restore positivity by taking a slight convex combination with a known, strictly feasible solution $\Gamma^{(n)}_{\strict}$ (\cref{prop:kpartite_StrictFeasible}).
    Finally, we re-normalize the result to obtain a genuine, feasible solution $\Gamma^{(n),\lambda}$ that satisfies all constraints of our hierarchy.
    \item We prove that $\Gamma^{(n),\lambda}$ is negligibly close to the original $\Gamma^{(n),\lambda}_{\comp}$ in operator norm (\cref{thm:kpartite_BoundToFeasibleSolution}).
    \item By combining the above closeness statement with \cref{thm:MainCompleteness}, we obtain the hierarchy-based quantitative upper bound in \cref{thm:MainQuantumSoundness}; finite-level convergence and flat optimality then give the sharpened soundness corollaries.
\end{enumerate}
This geometric approach provides a powerful template for establishing quantitative control in complex quantum scenarios where representation-based algebraic arguments are insufficient.

The key advantage is its scalability and generality.
For example, we note that our geometric proof technique can also be used to analyze the convergence of the recently introduced computational-SoC hierarchy~\cite[Definition~4.3]{merkulov2025computationalbellinequalities}.
Specifically, our argument implies that any optimal solution to their hierarchy must be negligibly close to a feasible solution of the bipartite sequential NPA hierarchy from~\cite{klep2025quantitative}.

\subsection{Further discussions}\label{sec:IntroDiscussion}
Our results introduce a powerful convergent sequential NPA-like hierarchy for analyzing sequential quantum protocols.
With this new hierarchy, we provide upper bounds on the compiled scores of all multipartite nonlocal games and, under finite-level hierarchy certificates, obtain quantitative quantum soundness with respect to the relevant quantum value.
We briefly discuss the implications of both contributions.

\vspace{0.3em}
\noindent
\textbf{Resolving quantum soundness for multipartite games with a finite-level sequential NPA hierarchy certificate.}
Consider a multipartite nonlocal game $\cG$ for which the sequential hierarchy is certified to collapse at some level $n_0$. 
Then \cref{thm:MainQuantumSoundness} yields a quantitative soundness guarantee for the compiled game with respect to $\omega_{\mathrm{qc}}(\cG)$.
Under the stronger flat-optimality assumption, this becomes a guarantee with respect to the tensor-product quantum value $\omega_{\mathrm q}(\cG)$, and the flat solution also yields a finite-dimensional optimal strategy.

Importantly, once the level $n_0$ is known, the resulting negligible term can be made explicit without solving higher levels of the sequential hierarchy.
Indeed, for any fixed level $n$, the function $\negl_{S,n}(\lambda)$ depends only on the compiled strategy $S$, the underlying QHE scheme, and the fixed game syntax, since it is obtained by taking the maximum QHE-security loss over all monomials of length at most $n$, multiplied by explicit $n$-dependent constants.

\vspace{0.3em}
\noindent
\textbf{Other multipartite games.}
For other multipartite games, \cref{eq:MainthmAllGameUpperBound} provides a concrete bound on the maximal score obtainable by an efficient compiled prover.
To use this bound in practice one needs to solve the sequential NPA hierarchy at some level $n$.
In this general setting we do not claim a fully quantitative soundness statement with respect to the true quantum value, since this would require knowing (or certifying) a level at which the hierarchy is tight.

However, the question of whether such a quantitative statement can be achieved in principle is subtle.
Indeed, for games whose optimal value requires very large (or even infinite) dimension, playing an optimal strategy may already be out of reach even in the non-compiled setting for spacelike separated provers.

\vspace{0.3em}
\noindent
\textbf{Future directions for the multipartite sequential NPA hierarchy.}
Beyond its immediate application in this work, our multipartite sequential NPA hierarchy for quantum instruments is a technical contribution that could be of independent interest.
It provides the first systematic, convergent, and composable tool for computing bounds on the capabilities of sequential quantum strategies.
As such, it naturally complements the recent multipartite generalization of the S-G-HJW purification theorem~\cite{baroni2025asymptotic}.
This opens up several avenues for future research:
\begin{enumerate}[noitemsep, topsep=2pt, leftmargin=*]
    \item \emph{Numerical performance and convergence.}
    Our sequential NPA hierarchy involves smaller moment matrices at each level compared to the standard NPA hierarchy, but incorporates more complex constraints.
    Understanding this trade-off between the rate of convergence and the moment matrix size can be interesting both theoretically and important for practical numerical implementation.
    Such insight can have further implications to related SDP hierarchies, such as the sparse SOS hierarchies~\cite{klep2022sparse,magron2023sparse} and the bipartite sequential NPA hierarchies~\cite{klep2025quantitative}.
    \item \emph{Extending device-independent certification.} The standard NPA hierarchy is often the foundation for the device-independent certification of properties like randomness and entanglement.
    Our hierarchy extends this capability to a broader class of sequential and multi-round protocols, enabling the certification of tasks where players act one after another.
    \item \emph{Beyond operationally-non-signaling constraints.} 
    The algebraic nature of our framework is not limited to non-signaling.
    It can be adapted to enforce other constraints on quantum instruments to describe different scenarios.
    For instance, one could require a specific instrument outcome ($a=0$) to correspond to an identity channel, a common feature in error correction or gate-based protocols.
    Moreover, it can be adapted to model new constraints due to cryptographic primitives beyond the homomorphic encryption.
    
    \item \emph{Analyzing quantum interactive proofs.}
    Our current work models a linear sequence of players.
    However, its composable, algebraic nature suggests it could be extended to analyze protocols with more complex interaction structures.
    It could provide a framework for understanding more general interactive quantum protocols (where a quantum prover with memory interacts with a verifier over multiple rounds), a central scenario in quantum complexity theory and quantum cryptography.
\end{enumerate}

In summary, our work provides a general hierarchy-based approach to quantum soundness and a new analytical tool for studying compiled strategies, with direct implications for single-device quantum cryptography offers new capabilities for the broader study of quantum information protocols.

\subsection{Outline of the manuscript}\label{sec:IntroOutline}
The remainder of this paper is organized as follows.
\cref{sec:Preliminaries} presents necessary preliminaries: We review the necessary background, defining nonlocal and compiled games (\cref{sec:PrelimNonlocalGameCompiledGame}), 
and the standard NPA hierarchy (\cref{sec:PrelimNPAHierarchy}).
\cref{sec:SeqNPAMain} then presents the novel multipartite sequential NPA hierarchy, where we introduce the construction progressively: \cref{sec:SeqNPABiparititeCase} first revisits the bipartite case to provide a proof of concept and contrast our Heisenberg-picture approach with the Schr\"odinger-picture method of~\cite{klep2025quantitative}.
\cref{sec:SeqNPATripartiteCase} then uses the tripartite case as a pedagogical bridge to the general construction.
\cref{sec:SeqNPAMultipartiteCase} presents the full, general $k$-partite hierarchy and proves its key properties: completeness, the flatness condition, and strict feasibility (\cref{thm:MainCompleteness}).
\cref{sec:SoundnessCompileMain} proves our main quantitative soundness theorem (\cref{thm:MainQuantumSoundness}).
After establishing notation in \cref{sec:SoundnessCompileNotation}, the argument is based on a novel geometric proof developed in \cref{sec:GeometricProof}.
This proof, using projection, regularization, and normalization, establishes the closeness result (\cref{thm:kpartite_BoundToFeasibleSolution}).
Finally, \cref{sec:PiecesTogetherCorollary} combines this closeness result and \cref{thm:MainCompleteness} to complete the proof of the main theorem.
%
The manuscript finishes with \cref{sec:Appendix} on a brief comparison on~\cite{klep2025quantitative,baroni2025asymptotic}.

We refer to \cref{fig:scheme} for a graphical illustration of \cref{sec:SeqNPAMain,sec:SoundnessCompileMain}, and \cref{fig:big-scheme} for a more comprehensive graphical mind map of the paper.

\section{Preliminaries}\label{sec:Preliminaries}
We introduce the notations of nonlocal games and compiled games in \cref{sec:PrelimNonlocalGameCompiledGame}, followed by a brief introduction of the NPA hierarchy in \cref{sec:PrelimNPAHierarchy}.

\subsection{Nonlocal games and compiled games}\label{sec:PrelimNonlocalGameCompiledGame}
A nonlocal game $\cG$ is an interaction involving a referee and multiple players (provers) who cannot communicate with each other during the game; see \cref{fig:NonlocalCompiledBellGame}(a) for a tripartite example. The referee sends each player a question $x_i$ (drawn from some specified distribution $\mu(\vec{x})$), and each player must respond with an answer $a_i$. Whether the players win is decided by a publicly known rule (a predicate depending on all questions and answers) $V(\vec{a},\vec{x})$. No communication means each player must base their answer only on their own question and a prior agreed-upon strategy among the players, but not on the other players' questions or answers. The players' goal is to maximize their winning probability $\sum_{\vec{x}\vec{a}}\mu(\vec{x})V(\vec{a},\vec{x})p(\vec{a}|\vec{x})$ by coordinating a strategy ahead of time (they know the game's description in advance).

The \emph{classical value} (or score) of a nonlocal game is the maximum success probability achievable when the players share only classical resources, e.g., a pre-shared random string (common randomness) or any predetermined classical strategy. Denoting with $f_i$ an arbitrary deterministic function mapping input $x_i$ to output $a_i$, the classical value takes the form
\begin{align*}
    \omega_c(\cG) = \max_{f_1,\cdots,f_k}\sum_{\vec{x}}\mu(\vec{x})V(x_1,\cdots,x_k,f_1(x_1),\cdots,f_k(x_k)).
\end{align*}
In contrast, the \emph{quantum value} (or score) is the maximum winning probability when players can use quantum resources such as shared entangled states $\rho$ and quantum measurements $M_{a_i|x_i}$:
\begin{align*}
    \omega_q(\cG) = \sup_{\rho,M_{a_1|x_1},\cdots,M_{a_k|x_k}}\sum_{\vec{a},\vec{x}}\mu(\vec{x})V(\vec{x},\vec{a})\Tr{\bigotimes_{i=1}^kM_{a_i|x_i}\rho}.
\end{align*}
Quantum strategies can outperform classical ones in certain games, a phenomenon known as a nonlocal \emph{quantum advantage}. For example, in the famous CHSH game~\cite{clauser1969proposed}, the best classical strategy wins with probability $0.75$, whereas players sharing an entangled qubit pair can win with probability about $0.85$. This higher success rate ($\approx85\%$ vs $75\%$) is a quantum advantage that cannot be achieved by any classical means. Such nonlocal games (also called Bell games in physics) thus highlight the stronger-than-classical correlations allowed by entanglement.

It is worth noting a subtle point about the definition of ``quantum value''. In finite-dimensional quantum systems, insisting that players cannot communicate is equivalent to requiring that they act on separate Hilbert spaces (tensor factors) of a joint entangled state. 
More generally, one can impose that all of one player's measurement operators commute with all of the other player's operators, which is a formal way to enforce a non-signaling condition in possibly infinite-dimensional or arbitrary systems. 
While the tensor-product and commuting definitions coincide for finite dimensions~\cite{scholz2008tsirelson}, a result now known to be robust~\cite{ozawa2013tsirelson,xu2025quantitative}, they differ in general~\cite{ji2022mipre}. 
This leads to the definition of the \emph{quantum commuting operator value} of a game, which is the supremum win probability attained by any strategy where the players' operations commute:
\begin{align*}
    \omega_{\mathrm{qc}}(\cG) = \sup_{\rho,N_{a_1|x_1},\cdots,N_{a_k|x_k}}\sum_{\vec{a},\vec{x}}\mu(\vec{x})V(\vec{x},\vec{a})\Tr{\prod_{i=1}^k N_{a_i|x_i}\rho}.
\end{align*}

While nonlocal games traditionally require multiple spatially isolated devices, the recent line of work initiated by \cite{kalai2023quantum} shows that, under computational assumptions, one can compile a $k$-prover game $\cG$ to produce a new protocol between a single QPT prover and a classical verifier, which we call a compiled nonlocal game.
Intuitively, the compilation procedure simulates space-like separation using cryptography and a specific sequential structure as in \cref{fig:NonlocalCompiledBellGame}(c): the verifier sends an encrypted question, and always wait for an encrypted answer before sending the following question; the last question-answer round doesn't need to be encrypted.
\begin{definition}[Compiled $k$-partite nonlocal game]\label{def:CompiledGame}
    Let $\cG$ be a $k$-partite nonlocal game with input alphabets $X_i$, answer alphabets $A_i$, input distribution $\mu$ on $X := X_1 \times \cdots \times X_k$, and predicate $V(\vec{x},\vec{a})$, where $\vec{x}=(x_1,\ldots,x_k)$ and $\vec{a}=(a_1,\ldots,a_k)$.
    Fix a quantum homomorphic encryption scheme $\mathsf{QHE} = (\Gen, \Enc, \Eval, \Dec).$
    
    For a security parameter $\lambda$, the KLVY compilation outputs a sequential protocol $\cG^\lambda_{\comp}$ with $k$ rounds.
    Define by $\cG_{\comp} := (\cG^\lambda_{\comp})_{\lambda}$ the \emph{compiled nonlocal game} of $\cG$.
    At the beginning of the protocol, the verifier samples the inputs $\vec{x}$ from the distribution $\mu$.
    In rounds $i \in \{1,\ldots,k-1\}$, the verifier then runs $\Gen(1^\lambda)$ to generate a secret key $\mathsf{sk}_i$ for $\mathsf{QHE}$, uses it to compute and send encryptions $\Enc^{\mathsf{sk}_i}(x_i)$, and receives as responses $\mathrm{a}_i$; in the last round $k$ the verifier sends $x_k$ in the clear and receives $a_k$. The verifier decrypts the responses to obtain $a_i := \Dec^{\mathsf{sk}_i}(\mathrm{a}_i)$ for all $i \in \{1,\ldots,k-1\}$, forming the transcript $(\vec{x},\vec{a})$, which is evaluated using the same predicate $V$ as in $\cG$.
    
    Let $S = (S_{\lambda})_{\lambda}$ be any efficient prover strategy for $\cG_{\comp}$ (i.e., $S_{\lambda}$ implementable by quantum circuits of size $\poly(\lambda)$).
    Denote by $p^{\lambda}(\vec{a}|\vec{x})$ the correlation realized by $S_{\lambda}$ in $\cG_{\comp}^{\lambda}$ at security parameter $\lambda$.
    The \emph{compiled Bell score} of $S$ is
    \begin{align*}
        \gamevalueCompile{\lambda}{\cG_{\comp}, S} :=\;
    \sum_{\vec{x},\vec{a}} \beta_{\vec{x}, \vec{a}} \, p^{\lambda}(\vec{a}|\vec{x}),
    \end{align*}
    where we write the payoff tensor $\beta_{\vec{x}, \vec{a}} := \mu(\vec{x}) V(\vec{x},\vec{a})$ to simplify the notation for the remainder of this manuscript.
\end{definition}

Despite the added difficulty of encryption, the compiled game is designed so that an honest quantum prover can still play optimally. In fact, KLVY~\cite{kalai2023quantum} proved two key properties of their compiler, \emph{classical soundness} and \emph{quantum completeness}, for all $k$-partite games. Classical soundness refers to the fact that any efficient classical prover (one who does not use entanglement or quantum memory) with strategy $S_{\mathrm{classical}}$ cannot win the compiled game with probability higher than the original game's classical value:
\begin{align*}
    \gamevalueCompile{\lambda}{\cG_{\comp}, S_{\mathrm{classical}}} \leq \omega_c(\cG) + \negl(\lambda),
\end{align*}
where $\negl(\lambda)$ is a negligible function that goes to zero faster than the reciprocal of any polynomial in $\lambda$, which is also dependent on the QHE scheme for the compilation.
Quantum completeness refers to the fact that there exists an efficient quantum strategy $S_{\mathrm{complete}}$ for the compiled game whose success probability approaches the original game's quantum value as the security parameter grows: 
\begin{align*}
    \lim_{\lambda\to\infty} \gamevalueCompile{\lambda}{\cG_{\comp}, S_{\mathrm{complete}}} = \omega_q(\cG).
\end{align*}
In particular, if the original nonlocal game exhibits a quantum advantage, then the compiled single-prover protocol also exhibits a classical-quantum gap, at least as big as the nonlocal one.

With the above setup, \emph{quantum soundness} refers to the property that even a malicious quantum prover cannot win with probability exceeding the optimal quantum value of the original $k$-player nonlocal game.

\begin{definition}[Quantum soundness for compiled $k$-partite games]\label{def:QuantumSoundnessCompiledGame}
    Let $\cG$ be a $k$-partite nonlocal game with quantum value $\omega_q(\cG)$ and commuting-operator value $\omega_{\mathrm{qc}}(\cG)$.
    Let $\cG_{\comp} := (\cG^\lambda_{\comp})_{\lambda}$ denote the family of compiled games produced by the KLVY compilation at security parameter $\lambda$.
    
    Let $S=(S_\lambda)_{\lambda}$ be any efficient (QPT) prover strategy for $\cG_{\comp}$, and let $p^\lambda(\vec{a}|\vec{x})$ be the correlation realized by $S_\lambda$ in $\cG^\lambda_{\comp}$.
    We say that the compiler is \emph{quantum sound} (against QPT strategies) if there exists a value
    $B(\cG)\in\{\omega_q(\cG),\,\omega_{\mathrm{qc}}(\cG)\}$ such that for every efficient strategy $S$ there exists a negligible function $\negl_{S}(\lambda)$ with
    \begin{align*}
        \gamevalueCompile{\lambda}{\cG_{\comp}, S} \leq B(\cG) + \negl_{S}(\lambda) \quad \text{for all }\lambda\in\mathds{N}.
    \end{align*}
\end{definition}

Our focus here is precisely on establishing quantitative quantum soundness for \emph{all} compiled \emph{$k$-partite} games: an analytical upper bound on the compiled protocol's quantum winning probability as a function of $\lambda$, showing it stays within a negligible distance of the appropriate benchmark from the original $k$-partite game. 

\subsection{The NPA hierarchy for bounding quantum correlations}\label{sec:PrelimNPAHierarchy}
To upper-bound the winning probabilities of quantum strategies in a $k$-partite nonlocal game $\cG$, a powerful tool is the Navascu\'es-Pironio-Ac\'in (NPA) hierarchy~\cite{navascues2008convergent,pironio2010convergent}. The hierarchy is a sequence of increasingly tight semidefinite programming (SDP) relaxations that characterize the set of quantum-achievable correlations. In broad terms, NPA provides a family of efficiently checkable necessary conditions for a conditional distribution $p(\vec{a}|\vec{x})$ to arise from some quantum strategy. By optimizing the game's payoff over these conditions, one obtains an upper bound on the quantum value. If, at some finite level, a distribution violates an NPA constraint, then it cannot come from any measurement on any shared quantum state. Conversely, as one increases the level (adding higher-degree algebraic constraints), the feasible set converges to the set of commuting-operator quantum correlations, and thus the bounds converge to the commuting-operator value $\omega_{\mathrm{qc}}(\cG)$. In practice, low levels already yield sharp bounds for many games.

To explain the construction in the $k$-partite setting, it is convenient to use an abstract algebraic presentation.

\begin{definition}[Measurement symbols and relations]\label{def:Prelim_NPASymbolRelation}
For each player $i \in [k]$, question $x_i\in X_i$, and answer $a_i \in A_i$, introduce a symbol $f^{(i)}_{a_i|x_i}$ that represents a measurement effect. Without loss of generality the measurements can be assumed to be projective
$${f^{(i)}_{a_i|x_i}}^* = f^{(i)}_{a_i|x_i}, \qquad f^{(i)}_{a_i|x_i}f^{(i)}_{a_i'|x_i} = \delta_{a_i,a_i'}f^{(i)}_{a_i|x_i}, \qquad \sum_{a_i} f_{a_i|x_i}^{(i)} = \id$$
and we impose inter-party commutation $$[f_{a_i|x_i}^{(i)},f^{(j)}_{a_j|x_j}] = 0\qquad (i\neq j).$$ Any commuting-operator strategy is a $*$-representation of these symbols on a Hilbert space with a state $\rho$.
\end{definition}

The NPA hierarchy is phrased in terms of words (monomials) in these symbols and a corresponding moment matrix.

\begin{definition}[Words and level-$n$ moment matrix]
Fix a truncation level $n \in \mathds{N}, n \geq 1$. Let $\cW^n$ be the set of words of length $\leq n$ in the symbols $f_{a|x}^{(i)}$ (and $\id$), modulo the relations above; because different players' symbols commute, words admit a canonical normal order. Given a commuting-operator realization $\left(\rho,\{f_{a|x}^{(i)}\}\right)$  and a purification $\ket{\psi}$ of $\rho$, the level-$n$ moment matrix is the Hermitian matrix $$\Gamma_{w,v}^{(n)} = \bra{\psi}w^* v\ket{\psi}, \qquad \forall w,v\in\cW^n.$$ Then $\Gamma^{(n)}\succeq 0$ (it is a Gram matrix), and it satisfies all linear identities implied by \cref{def:Prelim_NPASymbolRelation}. The degree-$1$ block reads off the correlation: $$p(\vec{a}|\vec{x}) = \Gamma^{(n)}_{\id,\Pi_{i=1}^kf_{a_i|x_i}^{(i)}},$$
whenever $\Pi_{i=1}^kf_{a_i|x_i}^{(i)} \in \cW^{(n)}$. 
\end{definition}
With these ingredients, the level-$n$ NPA relaxation is the SDP that maximizes the game's payoff over all PSD matrices $\Gamma^{(n)}$ obeying the linear identities.

\begin{definition}[Level-$n$ NPA upper bound for a $k$-partite game]\label{def:Prelim_StandardNPA}
Let $\beta_{\vec{a},\vec{x}}$ be the payoff tensor (in predicate form $\beta_{\vec{a},\vec{x}} = \mu(\vec{x})V(\vec{x},\vec{a})$) from the nonlocal game $\cG$. The level-$n$ NPA bound is 
\begin{align*}
    \gamevalueNPA{\cG}{n} := \max_{\Gamma^{n}} {} & \sum_{\vec{a},\vec{x}}\beta_{\vec{a},\vec{x}}\Gamma^{(n)}_{\id,\Pi_{i=1}^kf_{a_i|x_i}^{(i)}}\\
    \text{s.t.}\quad & \Gamma^{(n)}\succeq 0,\\
    &\text{all linear identities induced by the relations on $\cW^{n}$}.
\end{align*}
\end{definition}

Two basic properties encapsulate the usefulness of the hierarchy:
\begin{enumerate}[label=(\alph*)]
    \item \emph{Soundness.} Any commuting-operator strategy produces a feasible $\Gamma^{(n)}$, so for all $n$ $$\omega_{\mathrm{qc}}(\cG) \leq \gamevalueNPA{\cG}{n}.$$
    \item \emph{Monotone convergence.} The sequence $\gamevalueNPA{\cG}{n}$ is nonincreasing and converges to the commuting-operator value: $$\gamevalueNPA{\cG}{1} \geq \gamevalueNPA{\cG}{2} \geq \cdots \searrow \omega_{\mathrm{qc}}(\cG).$$
\end{enumerate}
Thus, optimizing at higher levels tightens the upper bound, and in the limit (including moments of all lengths) one recovers the exact commuting-operator quantum value.

\section{Generalized NPA hierarchy for multipartite quantum sequential scenarios}\label{sec:SeqNPAMain}
In this section, we introduce a generalized NPA hierarchy that converges to multipartite sequential setups as in~\cref{fig:NonlocalCompiledBellGame}(b).
The main novelty of our multipartite generalization is the use of $*$-homomorphisms with appropriate constraints, which allows us to model quantum instruments that are essential in the description of multipartite quantum sequential scenarios.

We begin with the bipartite sequential scenarios in \cref{sec:SeqNPABiparititeCase} to set a foundation for the more complex multipartite scenarios, and to contrast the subnormalized-moment-matrix method of~\cite{klep2025quantitative}.
In \cref{sec:SeqNPATripartiteCase}, we then discuss in detail the construction of the tripartite case for pedagogical purposes and finally introduce the $k$-partite sequential NPA hierarchy in \cref{sec:SeqNPAMultipartiteCase}.

\subsection{Bipartite case revisited}\label{sec:SeqNPABiparititeCase}
As a proof of concept, we first consider the well-understood bipartite case and introduce a sequential generalization of the NPA hierarchy that is more composable than that of~\cite{klep2025quantitative} (see \cref{sec:PrelimSeqNPA} for a quick review).

In a sequential bipartite game $\cG$, Alice receives some state $\sigma$, applies some quantum instrument $\A$, and passes it to Bob for another measurement $\B$.
Using the notation from~\cite[Lemma~12]{baroni2025asymptotic} and thinking in the Heisenberg picture, we first consider generators $\{ f_\by \mid \forall b, y\}$ satisfying the relations $\cR_B$:
\begin{equation*}
    \begin{aligned}
        f_\by^* = f_\by, \quad f_\by f_{b'|y} = \delta_{b,b'} f_\by, \quad \sum_b f_\by = \id.
    \end{aligned}
\end{equation*}
Define Bob's algebra by the universal PVM $C^*$-algebra as
\begin{align*}
    \cA_B = C^*( \{f_\by\}_{b,y} \mid \cR_B).
\end{align*}
In addition, consider generators $\{ f_\abxy \mid \forall a, b, x, y\}$ satisfying the relations $\cR_{AB}$:
\begin{equation*}
    \begin{aligned}
        f_\abxy^* = f_\abxy, \quad f_\abxy f_{a'b'|xy} = \delta_{a,a'} \delta_{b,b'} f_\abxy, \\ 
        \sum_b f_\abxy = \sum_b f_{ab|xy'} \ \forall y, y', \quad \sum_{a,b} f_\abxy = \id.
    \end{aligned}
\end{equation*}
Define the post-Alice-measurement algebra of Bob by
\begin{align*}
    \cA_{AB} = C^*( \{f_\abxy\}_{a,b,x,y} \mid \cR_{AB}),
\end{align*}
denoted as $\cA_{A \to B}$ in~\cite{baroni2025asymptotic}.
For every $a, x$, it is shown in \cite{baroni2025asymptotic} that there exists a (not necessarily unital) *-homomorphism that maps generators to generators
\begin{align*}
    T_\ax: \cA_{B} \to \cA_{AB}, \quad \ f_\by \mapsto f_\abxy,
\end{align*}
which shall be central to our analysis.
A simple --but very effective-- change of perspective with respect to all previous works, is to not model Alice's action as post-measurement states $\phi_{a|x}$, but through CP maps $T_{a|x}$.

For the game $\cG$ and a correlation $p(ab|xy)$, the associated score is $\sum_{a,b,x,y} \beta_{abxy} p(ab|xy)$.
Denote the objective Bell polynomial by
\begin{align*}
    \beta = \sum_{a,b,x,y} \beta_{abxy} f_\abxy \in \sum_{a,x} T_\ax(\cA_B) \subset \cA_{AB}.
\end{align*}

We now construct a variant of the NPA hierarchy with words $f_\abxy$ and leverage the equality $f_\abxy = T_\ax(f_\by)$ to encode the sequential information, such that it converges to the algebraic bipartite sequential strategy.

To this end, define the word sets at level $n$ by
\begin{equation*}
    \begin{aligned}
        \cW^n_{AB}      &:= \bigl\{\,w
         = f_{a_1 b_1|x_1 y_1}\cdots f_{a_k b_k|x_k y_k}\;\big|\;0\le k\le n \bigr\} \subset \cA_{AB},\\[2mm]
        \cW^n_{B}      &:= \bigl\{\,w
         = f_{b_1|y_1}\cdots f_{b_k|y_k}\;\big|\;0\le k\le n \bigr\} \subset \cA_B,\\[2mm]
        T_\ax \bigl(\cW^n_{B}\bigr) &= \bigl\{\,f_{ab_1|xy_1}\cdots f_{ab_k|xy_k}\;\big|\;0\le k\le n \bigr\} \subset T_\ax(\cA_B) \subset \cA_{AB}.\\[2mm]
    \end{aligned}
\end{equation*}
Clearly, $\beta \in \Span(T_\ax \bigl(\cW^n_{B}\bigr)) \subset \Span(\cW_{AB}^n)$ for $n \geq 1$.
Note that for any polynomial $s \in \Span(\cW^n_B)$, the polynomial $T_{\ax}(s) \in \cW^n_{AB}$ has the same degree as $s$, i.e., the homomorphism $T_{\ax}$ does not increase the degree.

Consider the matrix $\Gamma^{(n)}$ indexed by the monomials/words in $\cW^n_{AB}$ with the associated functional $L^{2n}: \cW^{2n}_{AB} \to \mathds{C}$ defined by
\begin{align*}
    L^{2n}( w^* v ) := \Gamma^{(n)}_{w, v}, \quad \forall w, v \in \cW_{AB}^{n}.
\end{align*}
For any $P \in \Span(\cW^{2n}_{AB})$ with $\deg P = k$, denote by 
\begin{align*}
    \Gamma^{(n)}(P)_{w, v} = L^{2n}(w^* P v), \quad \forall w, v \in \cW_{AB}^{n - \lceil k/2 \rceil}
\end{align*}
the localizing matrix of $\Gamma^{(n)}$ at $P$.
This leads to the following NPA-like hierarchy at level $n$:
\begin{equation}\label{eq:Bipartite_SequentialNPASDP}
  \begin{aligned}
    \gamevalueBiSeqNPA{\cG}{n}
    :=\max\; &
        \sum_{a,b,x,y}\beta_{abxy} \, \Gamma^{(n)}_{\id,f_{ab|xy}}   \\[2pt]
    \text{s.t.}\quad
      &\Gamma^{(n)}\succeq 0, \\[2pt]
      &\Gamma^{(n)}_{\id,\id}=1, \\[2pt]
      &\Gamma^{(n)}_{w,v}= \Gamma^{(n)}_{w',v'}
        \quad\bigl(\mathrm{when~} w^{*}v = w'^{*}v'\bigr), \\[6pt]
      &M^{x,x'}_{(w,r),(s,v)}\;:=\;
        \Gamma^{(n)}\!\Bigl(
          \sum_{a} \bigl(T_{a|x}(r^{*}s)-T_{a|x'}(r^{*}s)\bigr)
        \Bigr)_{w,v}\; = 0 \\[2pt]
      &\hphantom{M_{(w,r),(s,v)}:=}
        \substack{\forall x, x', \\
                  \forall w,v\in\cW_{AB}^{n}, \, r,s\in\cW_{B}^{n},\\
                  \deg w+\deg v+\deg r+\deg s\le 2n}
        \quad\text{(operationally‑non‑signaling)}.
  \end{aligned}
\end{equation}
The last operationally-non-signaling constraint is equivalent to
\begin{align*}
    \sum_a L^{2n}(P^* T_{\ax}(S) Q) - \sum_a L^{2n}(P^* T_{a|x'}(S) Q) = 0
\end{align*}
for all $x, x'$, $P, Q \in \Span(W^n_{AB})$, and $S \in \Span(W^n_{B})$ such that $\deg P + \deg Q + \deg S \leq 2n$.
A moment matrix $\Gamma^{(n)}$ is said to be a feasible solution of \cref{eq:Bipartite_SequentialNPASDP} if it satisfies all the constraints but does not necessarily maximize the score, and is said to be an optimal feasible solution of \cref{eq:Bipartite_SequentialNPASDP} if it satisfies all constraints and maximizes the score.

The physical submatrices of $\Gamma^{(n)}$ include the block indexed by $T_\ax(\cW^n_{B}) \times T_\ax(\cW^n_{B})$ corresponding to Bob's measurement, subject to the output-input $(a,x)$ by Alice's measurement.
Thus, the hierarchy \cref{eq:Bipartite_SequentialNPASDP} is stricter than the bipartite sequential NPA hierarchy (\cref{eq:Prelim_BipartiteSequentialNPASDP}) of~\cite{klep2025quantitative} and the nice SOS hierarchy~\cite[Eqs.~4.2 and~4.4]{cui2025convergent} by identifying this submatrix with its Riesz functional $\sigma_{a|x}$.
While not necessary in the bipartite scenarios, the inclusion of the ``nonphysical'' off-diagonal terms is critical in the convergent proofs beyond three parties in the following subsections.

We present the following convergence theorem on the hierarchy \cref{eq:Bipartite_SequentialNPASDP}.
\begin{theorem}\label{thm:Bipartite_Completeness}
    Let $p(ab|xy)$ be a correlation for the nonlocal game $\cG$.
    The following statements are equivalent:
    \begin{enumerate}[label=(\roman*)]
        \item The correlation $p(ab|xy)$ arises from a bipartite sequential operationally-non-signaling strategy.
        \item The correlation $p(ab|xy)$ arises from a bipartite commuting operator strategy.
        \item There exists a family of $\{\Gamma^{(n)}\}_n$ of feasible solutions to \cref{eq:Bipartite_SequentialNPASDP} such that $p(ab|xy) = \Gamma^{(n)}_{\id,f_{ab|xy}}$ for all $n$.
    \end{enumerate}
    Consequently, $\gamevalueBiSeqNPA{\cG}{n} \searrow \gamevalueqcopt{\cG}$ monotonically as $n \to \infty$.
\end{theorem}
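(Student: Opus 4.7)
The plan is to establish the cyclic chain (i)~$\Rightarrow$~(ii)~$\Rightarrow$~(iii)~$\Rightarrow$~(i), from which the monotone convergence follows as a corollary. The direction (ii)~$\Rightarrow$~(iii) is the easy one: given a commuting-operator realization of $p(ab|xy)$ with (purified) state $\ket{\psi}$, I set $\Gamma^{(n)}_{w,v} := \bra{\psi} w^* v \ket{\psi}$ for $w,v\in\cW^n_{AB}$. The resulting matrix is automatically PSD as a Gram matrix, normalized because $\braket{\psi}{\psi}=1$, and Hankel since $w^*v=w'^*v'$ as elements of $\cA_{AB}$ forces equality of the corresponding moments. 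The operationally-non-signaling constraints hold because $\sum_a T_\ax(s)=\sum_a T_{a|x'}(s)$ is already an identity in $\cA_{AB}$ by the defining relations $\cR_{AB}$, so it holds in every $*$-representation. For (i)~$\Rightarrow$~(ii), I would dilate Alice's instruments $\{\QInstrument_\ax\}$ via Stinespring and pair the result with a Naimark dilation of Bob's POVMs to obtain commuting measurements on a common Hilbert space that reproduce the same $p(ab|xy)$; this is the bipartite specialization of the construction recorded in~\cite{baroni2025asymptotic}.

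The main work lies in (iii)~$\Rightarrow$~(i), which I would handle via a truncated GNS argument. Given a coherent family $\{\Gamma^{(n)}\}_n$ of feasible solutions, the associated functionals $L^{2n}$ are consistent across levels, so a Banach--Alaoglu compactness argument produces a limit functional $\tau$ on the $*$-algebra $\cA_{AB}$ that is positive on all squares and normalized. Performing GNS with respect to $\tau$ yields a Hilbert space $\cH$, a cyclic unit vector $\ket{\psi}$, and a $*$-representation sending each $f_\abxy$ to a genuine projection satisfying $\cR_{AB}$; the Hankel constraints guarantee well-definedness of the representation, and the PSD constraint guarantees positivity. The operationally-non-signaling constraints $M^{x,x'}_{(w,r),(s,v)}=0$ imposed at every level pull back, in the inductive limit, to the exact identity $\sum_a T_\ax = \sum_a T_{a|x'}$ as CP maps on the representation, not merely as a marginal state identity. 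The delicate point---and the main obstacle---is verifying that the $*$-homomorphic structure of $T_\ax : \cA_B \to \cA_{AB}$ is preserved under the GNS passage, so that the reconstructed operators describe a \emph{bona fide} sequential strategy in the correct order, with Alice's action genuinely implemented as a quantum instrument composed with Bob's measurement. This is where the Heisenberg-picture formulation pays off: a $*$-homomorphism is preserved verbatim by any $*$-representation, which contrasts with the Schr\"odinger-picture subnormalized-moment-matrix description of~\cite{klep2025quantitative} where sequential composition must be reconstructed from post-measurement states.

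The monotone convergence $\gamevalueBiSeqNPA{\cG}{n}\searrow\gamevalueqcopt{\cG}$ is then automatic. Each level is a relaxation of the next because any feasible solution at level $n+1$ restricts to a feasible solution at level $n$, so the sequence is nonincreasing; by (ii)~$\Rightarrow$~(iii) it is bounded below by $\gamevalueqcopt{\cG}$. Applying Banach--Alaoglu compactness to a sequence of near-optimizers yields a coherent limit family of feasible moment matrices, and (iii)~$\Rightarrow$~(ii) exhibits a commuting-operator strategy reaching this limit, forcing equality with $\gamevalueqcopt{\cG}$.
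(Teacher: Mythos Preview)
Your proposal is correct and follows essentially the same route as the paper: Banach--Alaoglu compactness on the family $\{\Gamma^{(n)}\}$ to obtain a limit state on $\cA_{AB}$, then GNS with cyclicity of $\ket{\Omega}$ to upgrade the operationally-non-signaling constraints to an identity of CP maps $\sum_a \pi\circ T_{a|x}=\sum_a \pi\circ T_{a|x'}$, with the (i)$\Leftrightarrow$(ii) equivalence deferred to the dilation arguments of~\cite{kulpe2024bound,baroni2025asymptotic} (the paper additionally spells out (iii)$\Rightarrow$(ii) directly via the minimal Stinespring dilation of the marginal map and Arveson's Radon--Nikodym derivative to extract Alice's POVMs in the commutant). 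One small caveat: hypothesis (iii) does not guarantee that the functionals $L^{2n}$ are consistent across levels---only the entries $\Gamma^{(n)}_{\id,f_{ab|xy}}$ are pinned to $p(ab|xy)$---so Banach--Alaoglu must be used to extract a convergent \emph{subsequence} rather than to assemble a direct limit; since you do invoke it, the argument goes through.
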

\begin{proof}
    It is clear that both \textit{(i)} and \textit{(ii)} imply \textit{(iii)}.
    The equivalence between $\textit{(i)}$ and \textit{(ii)} is well-established in, e.g.,~\cite{kulpe2024bound}.

    The statement \textit{(iii)} implies \textit{(i)} is a consequence of the paragraph preceding the theorem, combined with the convergence property of the sequential NPA hierarchy of~\cite{klep2025quantitative}.
    Nonetheless, we do another proof that connects to~\cite{baroni2025asymptotic} better.
    
    To this end, one has $\Span(\cW^n_{AB}) \to \cA_{AB}$ as $n \to \infty$.
    Since all generators $f_\abxy$ are projective, each entry of $\Gamma^{(n)}$ is necessarily bounded for all $n$.
    Then the standard Banach-Alaoglu argument implies the existence of a convergent subsequence $\Gamma^{(n_k)}$ and an infinite moment matrix $\Gamma$, such that
    \begin{align*}
        \Gamma^{(n_k)}_{w,v} \to \Gamma_{w, v}
    \end{align*}
    for all $w, v$ and all $n_k \geq \max\{\deg(w), \deg(v)\}$.
    That is, we have recovered a state $\sigma$ on $\cA_{AB}$ via
    \begin{align*}
        \sigma: \cA_{AB} &\to \mathds{C},\\
        w^*v &\mapsto \Gamma_{w, v},
    \end{align*}
    such that
    \begin{align*}
        p(ab|xy) = \sigma ( T_\ax (f_\by) )
    \end{align*}
    and 
    \begin{align*}
        \sum_a \sigma(P^* T_{\ax}(S) Q) - \sigma(P^* T_{a|x'}(S) Q) = 0
    \end{align*}
    for all $S \in \cA_B$ and $P, Q \in \cA_{AB}$.
    We are done by identifying $\sigma$ with an asymptotically-secured $C^*$-algebraic compiled strategy for two players~\cite[Definition~17 and Theorem~14]{baroni2025asymptotic}.

    For completeness, let us give an explicit proof.
    Consider the GNS representation $(\cH, \pi, \ket{\Omega})$ of $\sigma$, where $\cH$ is a Hilbert space, $\pi: \cA_{AB} \to B(\cH)$ is a $*$-representation, and $\ket{\Omega} \in \cH$ such that $\sigma(P) = \sandwich{\Omega}{\pi(P)}{\Omega}$ for all $P \in \cA_{AB}$.
    It follows that, for any $x, x'$, $S \in \cA_B$, and $P, Q \in \cA_{AB}$,
    \begin{align*}
        0 = \sandwich{\Omega}{\pi(P)^* \pi( \sum_a T_{\ax}(S) - \sum_a T_{a|x'}(S) ) \pi(Q)}{\Omega}.
    \end{align*}
    By cyclicity, this implies that
    \begin{align*}
        \sum_a \pi \circ T_{\ax} = \sum_a \pi \circ T_{a|x'} := \bar{T}: \cA_B \to B(\cH),
    \end{align*}
    where $\pi \circ T_{\ax}$ is completely positive map $\cA_B \to B(\cH)$ (due to *-homomorphisms being completely positive) dominated by the completely positive map $\bar{T}$.
    Thus, $(\ket{\Omega}, \pi \circ T_{\ax})$ defines a $C^*$-algebraic sequential correlation as in~\cite[Definition~22]{baroni2025asymptotic}, finishing \textit{(iii)} $\implies$ \textit{(i)}.

    Let us show \textit{(iii)} $\implies$ \textit{(ii)}, analogously to~\cite[Theorem~16]{baroni2025asymptotic} for further demonstration.
    Let $(\cK, \pi_{\cK}, V)$ be the minimal Stinespring dilation of $\bar{T}$ such that, for all $S \in \cA_B$,
    \begin{align*}
        \bar{T}(S) = \sum_a \pi \circ T_{\ax}(S) = V^* \pi_\cK(S) V.
    \end{align*}
    Arveson's Radon-Nikodym derivative~\cite[Theorem~1.4.2]{arveson1969subalgebras}, see also~\cite{raginsky2003radon}, then implies the existence of unique positive operator $F_{a|x} \in \pi_\cK(\cA_B)'$ such that
    \begin{align*}
        \pi \circ T_\ax(S) = V^* F_\ax \pi_\cK(S) V.
    \end{align*}
    The uniqueness of the minimal dilation with $\sum_a T_{\ax} = \bar{T}$ further imposes $\sum_a F_{\ax} = \id$, i.e., $F_{\ax}$ form POVMs for every $x$.
    We are done by identifying the state $\sigma$ with $V\ket{\Omega}$.
\end{proof}

We now analyze some properties of the novel sequential hierarchy that will be useful later; more precisely, we discuss the stopping criterion and strict feasibility.

\begin{definition}\label{def:Biparitite_FlatnessCondition}
    For $n \in \mathds{N}$, let $\Gamma^{(n)}$ be the solution for \cref{eq:Bipartite_SequentialNPASDP} at level $n$ for some nonlocal game $\cG$.
    Consider its block form
    \begin{align*}
        \Gamma^{(n)} = \begin{pmatrix}
            \Gamma^{(n-1)} & M \\
            M^* & N
        \end{pmatrix},
    \end{align*}
    where $\Gamma^{(n-1)}$ is the principal block indexed by words in $\cW^{n-1}_{AB}$.
    We say that the solution $\Gamma^{(n)}$ is flat (or has a rank-loop) if
    \begin{align*}
        \mathrm{rank}(\Gamma^{(n)}) = \mathrm{rank}(\Gamma^{(n-1)}) < \infty.
    \end{align*}
\end{definition}

\begin{proposition}\label{prop:Biparitite_FlatnessCondition}
    If the hierarchy of \cref{eq:Bipartite_SequentialNPASDP} for $\cG$ admits a flat optimal solution at some finite level $n$, then the hierarchy attains its limiting value at level $n$, and the flat solution yields a finite-dimensional optimal tensor-product quantum strategy.
    In particular, $\gamevalueBiSeqNPA{\cG}{n} = \omega_{\mathrm{qc}}(\cG) = \omega_{\mathrm q}(\cG)$.
\end{proposition}
\begin{proof}
    This follows from the standard flat-extension/GNS argument, see e.g.~\cite[Theorem~10]{navascues2008convergent}. Flatness gives a finite-dimensional representation realizing the level-$n$ optimal value.
    Since this representation is a genuine quantum strategy, the level-$n$ value is bounded above by $\omega_{\mathrm{qc}}(\cG)$, while the hierarchy is an outer relaxation and hence is bounded below by $\omega_{\mathrm{qc}}(\cG)$.
    Therefore equality holds, and finite-dimensional Tsirelson-type equivalence~\cite{scholz2008tsirelson,xu2025quantitative} gives the tensor-product strategy and $\omega_{\mathrm{qc}}(\cG)=\omega_{\mathrm q}(\cG)$.
    
    Note that the equality $\mathrm{rank}(\Gamma^{(n)}) = \mathrm{rank}(\Gamma^{(n-1)})$ is sufficient since this already enforces the operationally-non-signaling condition on the generator $f_\abxy$ in the resulting finite-dimensional representation and thereby can propagate to higher degree words.
\end{proof}
We do not use or claim the converse.
A finite-dimensional optimal strategy gives a genuine finite-rank moment sequence, but this alone does not rule out higher-valued non-flat pseudo-moment solutions at finite levels.
Note that the existence of flat optimal solutions does not guarantee that numerical algorithms will find it in practice.
In fact, it is possible that there exist infinitely many inequivalent finite-dimensional optimal strategies, leading the SDP solver to return any convex mixture of them.

\begin{proposition}\label{prop:Bipartite_StrictFeasible}
    For every level $n$, the SDP in \cref{eq:Bipartite_SequentialNPASDP} is strictly feasible.
    That is, there exists a moment matrix $\Gamma^{(n)}_{\strict} \succ 0$ that satisfies all the linear constraints of \cref{eq:Bipartite_SequentialNPASDP}.
\end{proposition}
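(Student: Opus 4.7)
The plan is to exhibit $\Gamma^{(n)}_{\strict}$ as the moment matrix of a concrete ``maximally mixed'' commuting-operator strategy in a finite-dimensional Hilbert space, chosen sufficiently generically so that this moment matrix lies in the relative interior of the feasible set of \cref{eq:Bipartite_SequentialNPASDP}. I would take $\cH = \cH_A \otimes \cH_B$ of dimension $D = D_1 D_2$, pick tensor-commuting PVMs $\{P_{a|x}\}_{a,x}$ on $\cH_A$ and $\{Q_{b|y}\}_{b,y}$ on $\cH_B$, and set $f_{ab|xy} := P_{a|x} \otimes Q_{b|y}$ with $T_{a|x}(f_{b|y}) := f_{ab|xy}$ extended as a $*$-homomorphism $\cA_B \to \cA_{AB}$. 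With $\rho := \id/D$ the maximally mixed state, I define
\begin{align*}
    \Gamma^{(n)}_{\strict, w, v} \;:=\; \tfrac{1}{D}\Tr(w^* v), \qquad w, v \in \cW^n_{AB}.
\end{align*}

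Verifying the linear constraints is then direct. Positive semidefiniteness and the Hankel identities are automatic, since $\Gamma^{(n)}_{\strict}$ is the Gram matrix of the tracial inner product $\langle A,B\rangle := \tfrac{1}{D}\Tr(A^*B)$ and the entry $\Gamma^{(n)}_{\strict,w,v}$ depends only on the underlying algebra element $w^*v$; normalization $\Gamma^{(n)}_{\strict,\id,\id}=1$ follows from $\Tr(\rho)=1$. For the operationally-non-signaling constraint, the $*$-homomorphism property together with $P_{a|x}^2 = P_{a|x}$ and $\sum_a P_{a|x}=\id$ give
\begin{align*}
    \sum_a T_{a|x}(S)\;=\;\id \otimes \pi_B(S)\qquad \forall\, S \in \cA_B,
\end{align*}
where $\pi_B$ denotes Bob's representation $f_{b|y}\mapsto Q_{b|y}$. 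The right-hand side is manifestly independent of $x$, so $\sum_a T_{a|x}(r^*s) - \sum_a T_{a|x'}(r^*s)$ vanishes as an operator and hence $M^{x,x'}_{(w,r),(s,v)}=0$ for all admissible $w,v,r,s$ and $x,x'$.

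The main obstacle is strict positivity: because the Hankel and operationally-non-signaling equalities already identify rows and columns of any feasible $\Gamma^{(n)}$, strict feasibility has to be interpreted as strict positivity on the quotient word space, equivalently in the relative interior of the feasible spectrahedron. To secure this, I would choose the PVMs sufficiently generically, so that the finitely many words in $\cW^n_{AB}$ descend to linearly independent operators in $M_D(\mathds{C})$ modulo the defining relations. One explicit route is to take $P_{a|x} := U_x^* E_a U_x$ and $Q_{b|y} := V_y^* F_b V_y$ for fixed reference PVMs $\{E_a\},\{F_b\}$ and Haar-random unitaries $U_x,V_y$, and invoke a standard dimension-counting/genericity argument, valid once $D$ is large enough depending on $n$. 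Equivalently, one may realize $\Gamma^{(n)}_{\strict}$ as the moment matrix of a faithful tracial state on a direct sum of a separating family of finite-dimensional operationally-non-signaling representations of the quotient algebra $\widetilde{\cA}_{AB} := \cA_{AB}/I$, where $I$ is the two-sided ideal generated by $\{\sum_a (T_{a|x}-T_{a|x'})(S):\, S\in\cA_B,\, x,x'\}$; faithfulness of the trace then yields strict positivity of $\Gamma^{(n)}_{\strict}$ on the quotient of $\cW^n_{AB}$, completing the proof.
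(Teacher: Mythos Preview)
Your high-level idea is exactly the paper's: satisfy the operationally-non-signaling constraint by realizing $f_{ab|xy}$ through a \emph{commuting} bipartite strategy, so that $\sum_a T_{a|x}(S)$ is independent of $x$ at the operator level, and then argue strict positivity. The paper executes this much more economically. It observes that the standard NPA hierarchy with commuting PVMs $\{A_{a|x}\},\{B_{b|y}\}$ is already known to be strictly feasible (via the faithful left-regular GNS representation of the universal $*$-algebra), sets $f_{ab|xy}:=A_{a|x}B_{b|y}$, and notes that every word in $\cW^n_{AB}$ is then literally a word in the standard NPA alphabet. The desired $\Gamma^{(n)}_{\strict}$ is simply the principal submatrix of the strictly feasible NPA moment matrix indexed by those words, and a principal submatrix of a positive definite matrix is positive definite. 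No randomness, no dimension counting, no quotient construction.

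Your concrete route via $\cH_A\otimes\cH_B$ and the maximally mixed state verifies the linear constraints correctly, but the strict-positivity step is where the proposal thins out. The Haar-random/genericity claim (``a standard dimension-counting/genericity argument'') is asserted rather than proved; while plausible, showing that random PVMs render all degree-$\le n$ words linearly independent modulo the defining relations is nontrivial and not a one-liner. Your second alternative---a faithful tracial state on $\widetilde{\cA}_{AB}=\cA_{AB}/I$---is in fact correct and is, once unpacked, the paper's argument in disguise: one can check that $\widetilde{\cA}_{AB}$ is canonically isomorphic to the commuting-PVM algebra (via $f_{ab|xy}\mapsto A_{a|x}B_{b|y}$, with inverse $A_{a|x}\mapsto\sum_b f_{ab|xy}$, $B_{b|y}\mapsto\sum_a f_{ab|xy}$), whose faithful trace is precisely what underlies standard NPA strict feasibility. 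So you end up re-deriving the known fact rather than invoking it. The takeaway is that the paper's principal-submatrix reduction sidesteps all of this and would replace the last paragraph of your argument with a single sentence.
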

\begin{proof}
    The unconstrained NPA hierarchy with commuting PVMs $\{\A\}, \{\B\}$ admits a strictly feasible moment matrix at every level.
    This follows from the faithfulness of the left-regular GNS representation of the universal $*$-algebra generated by $\A$ and $\B$.
    Thus, for each $n$ there exists a full-rank moment matrix $\Gamma^{(n)}_\NPA \succ 0$ for the standard NPA hierarchy.
    (See~\cite[Appendix~C]{tavakoli2024semidefinite} for a more explicit argument.)

    Define $f_\abxy := \A \B$.
    Then every word in the sequential hierarchy of \cref{eq:Bipartite_SequentialNPASDP} is also a word in the larger algebra generated by $\A$, $\B$.
    Therefore, we let $\Gamma^{(n)}_{\strict}$ be the principle submatrix of $\Gamma^{(n)}_\NPA$ corresponding to all the $f_\abxy$-words, which can be straightforwardly checked satisfy all constraints of \cref{eq:Bipartite_SequentialNPASDP}.
    Finally, since a principal submatrix of a positive definite matrix is itself positive definite, $\Gamma^{(n)}_{\strict} \succ 0$.
\end{proof}

\subsection{Tripartite case}\label{sec:SeqNPATripartiteCase}
Now, we showcase the construction of the tripartite sequential NPA hierarchy in detail to inspire the general multipartite case.
As in \cref{fig:NonlocalCompiledBellGame}(b), the game begins with Alice $A$ receiving and answering with the pair $(x,a)$, followed by Bob $B$ with the pair $(y,b)$, and ends with Charlie $C$ with the pair $(z,c)$, during which the operationally non-signaling condition is respected by Alice and Bob.

To describe this scenario, consider generators $\{f_\cz \mid \forall c, z\}$ satisfying the relation $\cR_C$:
\begin{equation*}
    \begin{aligned}
        f_\cz^* = f_\cz, \quad f_\cz f_{c'|z} = \delta_{c,c'} f_\cz, \quad \sum_c f_\cz = \id,
    \end{aligned}
\end{equation*}
and define Charlie's algebra by the universal PVM $C^*$-algebra
\begin{align*}
    \cA_C = C^*( \{f_\cz\}_{c,z} \mid \cR_C).
\end{align*}
Next, consider generators $\{f_\bcyz \mid \forall b, c, y, z\}$ satisfying the relation $\cR_{BC}$:
\begin{equation}
    \begin{aligned}
        f_\bcyz^* = f_\bcyz, \quad f_\bcyz f_{b'c'|yz} = \delta_{b,b'} \delta_{c,c'} f_\bcyz, \\ 
        \sum_c f_\bcyz = \sum_c f_{bc|yz'} \ \forall z, z', \quad \sum_{b,c} f_\bcyz = \id.
    \end{aligned}
\end{equation}
Define post-Bob-measurement algebra of Charlie by
\begin{align*}
    \cA_{BC} = C^*( \{f_\bcyz\}_{b,c,y,z} \mid \cR_{BC}).
\end{align*}
denoted as $\cA_{B \to C}$~\cite{baroni2025asymptotic}.
Then, consider the generators $\{f_\abcxyz \mid \forall a, b, c, x, y, z \}$ satisfying the relation $\cR_{ABC}$:
\begin{equation*}
    \begin{aligned}
        f_\abcxyz^* = f_\abcxyz, \quad
        f_\abcxyz f_{a'b'c'|xyz}= \delta_{a,a'} \delta_{b,b'} \delta_{c,c'} f_\abcxyz, \\
        \sum_c f_\abcxyz = \sum_c f_{abc|xyz'}, \quad \sum_{b,c} f_\abcxyz = f_{abc|xy'z'}, \quad \sum_{a,b,c} f_\abcxyz = \id,
    \end{aligned}
\end{equation*}
and define the post-Alice-Bob-measurement algebra of Charlie by
\begin{align*}
    \cA_{ABC} = C^*( \{f_\abcxyz\}_{a,b,c,x,y,z} \mid \cR_{ABC}),
\end{align*}
denoted as $\cA_{A \to B \to C}$ in~\cite{baroni2025asymptotic}.
Finally, define the natural *-homomorphisms for $a, b, x, y$ by 
\begin{equation*}
    \begin{aligned}
        T_\by &: \cA_C \to \cA_{BC}, \quad f_{c|z} \mapsto f_\bcyz, \\
        T_\ax &: \cA_{BC} \to \cA_{ABC}, \quad f_\bcyz \mapsto f_\abcxyz.
    \end{aligned}
\end{equation*}

Similarly, for a tripartite nonlocal game $\cG$ and a correlation $p(abc|xyz)$, the associated score
\begin{align*}
    \sum_{a,b,c,x,y,z} \beta_{abcxyz} p(abc|xyz)
\end{align*}
results in the objective polynomial
\begin{align*}
    \beta = \sum_{a,b,c,x,y,z} \beta_{abcxyz} f_\abcxyz \in \sum_{a,b,x,y} T_\ax T_\by(\cA_{C}) \subset \cA_{ABC}.
\end{align*}

Analogous to \cref{sec:SeqNPABiparititeCase}, define the word sets at level $n$:
\begin{equation*}
    \begin{aligned}
        \cW^n_{ABC} &:= \Bigl\{ f_{a_1 b_1 c_1|x_1 y_1 z_1}\cdots f_{a_k b_k c_k|x_k y_k z_k}\;|\;0\!\le\!k\!\le\!n \Bigr\} \subset \cA_{ABC}, \\
        \cW^n_{BC} &:= \Bigl\{ f_{b_1 c_1|y_1 z_1}\cdots f_{b_k c_k|y_k z_k}\;|\;0\!\le\!k\!\le\!n \Bigr\} \subset \cA_{BC}, \\
        \cW^{n}_{C} &:= \Bigl\{ f_{c_1|z_1}\cdots f_{c_k|z_k}\;|\;0\!\le\!k\!\le\!n \Bigr\} \subset \cA_C,
    \end{aligned}
\end{equation*}
along with $T_{\by}\!\bigl(\cW^{\,n}_{C}\bigr) \subset \cA_{BC}$ and $T_{\ax}T_{\by}\!\bigl(\cW^{\,n}_{C}\bigr) \subset \cA_{ABC}$.
Again, $\beta \in \Span(\cW_{ABC}^n)$ for $n \geq 1$.

Consider the matrix $\Gamma^{(n)}$ indexed by monomials/words in $\cW^n_{ABC}$.
Using the same notation as in \cref{sec:SeqNPABiparititeCase}, we define the following tripartite sequential NPA-like hierarchy at level $n$:
\begin{equation}\label{eq:Tripartite_SequentialNPASDP}
  \begin{aligned}
    \gamevalueTriSeqNPA{\cG}{n}
    \;:=\;
    \max\quad
      & \sum_{a,b,c,x,y,z} \beta_{abcxyz}\;
        \Gamma^{(n)}_{\id, f_{abc|xyz}}\\
    \text{s.t.}\quad
      &\Gamma^{(n)} \succeq 0,\\
      &\Gamma^{(n)}_{\id,\id}=1,\\
      &\Gamma^{(n)}_{w,w'}=\Gamma^{(n)}_{v,v'}
        \quad\text{whenever }w^{*}w'=v^{*}v',\\[2pt]
      &M^{x,x'}_{(w,r),(s,v)}\;:=\;
        \Gamma^{(n)}\!\Bigl(
          \bigl(\sum_{a}T_{a|x}(r^{*}s)-T_{a|x'}(r^{*}s)\bigr)
        \Bigr)_{w,v}\;= 0 \\[2pt]
      &\hphantom{M_{(w,r),(s,v)}:=}
        \substack{\forall x, x', \, \forall w,v\in\cW_{ABC}^{n},\\
                  r,s\in\cW_{BC}^{n},\\
                  \deg w+\deg v+\deg r+\deg s\le 2n}
        \quad\text{(Alice operationally‑non‑signaling)} \\[2pt]
        &N^{y,y'}_{(r,t),(s,u)}\;:=\;
        \Gamma^{(n)}\!\Bigl(
          T_\ax\bigl(\sum_{b}T_{b|y}(t^{*}u)-T_{b|y'}(t^{*}u)\bigr)
        \Bigr)_{T_\ax(r),T_\ax(s)}\;= 0 \\[2pt]
      &\hphantom{N_{(r,t),(s,u)}:=}
        \substack{\forall\, a,x, y, y', 
        \\ r,s\in\cW_{BC}^{n},\\
                  t,u\in\cW_{C}^{n},\\
                  \deg r+\deg s+\deg t+\deg u\le 2n}
        \quad\text{(Bob operationally‑non‑signaling)}
  \end{aligned}
\end{equation}
Here, let $L^{2n}$ be the normalized positive linear map associated with $\Gamma^{(n)}$, then Alice operationally-non-signaling constraint is equivalent to
\begin{align*}
    \sum_a L^{2n}(P^* T_{\ax}(S) Q) - \sum_a L^{2n}(P^* T_{a|x'}(S) Q) = 0
\end{align*}
for all $x, x'$, $P, Q \in \Span(\cW^n_{ABC})$, and $S \in \Span(\cW^n_{BC})$ such that $\deg P + \deg Q + \deg S \leq 2n$.
In addition, Bob operationally-non-signaling constraint is equivalent to
\begin{align*}
    \sum_b L^{2n} \circ T_\ax (R^* T_{\by}(O) S) - \sum_b L^{2n} \circ T_\ax(R^* T_{b|y'}(O) S) = 0
\end{align*}
for all $a, x, y, y'$, $R, S \in \Span(\cW^n_{BC})$, and $O \in \Span(\cW^n_{C})$ such that $\deg R + \deg S + \deg O \leq 2n$.

A moment matrix is said to be a feasible solution of \cref{eq:Tripartite_SequentialNPASDP} if it satisfies all the constraints but does not necessarily maximize the score, and is said to be an optimal feasible solution of \cref{eq:Tripartite_SequentialNPASDP} if it satisfies all constraints and maximizes the score.

Similarly to the bipartite case, $\cW^n_{C} \times \cW^n_{C}$ plays the role of Charlie's measurement, the block of $\cW^{\,n}_{\by} \times \cW^{\,n}_{\by}$ corresponds to the case when Bob measures afterwards with $(b,y)$, and the block $\cW^{\,n}_{\abxy} \times \cW^{\,n}_{\abxy}$ represents the case of Alice measuring with $(a,x)$ after Bob and Charlie's measurement.
We analogously have the following theorem.
\begin{theorem}\label{thm:Tripartite_Completeness}
    Let $p(abc|xyz)$ be a correlation for the tripartite nonlocal game $\cG$.
    The following statements are equivalent:
    \begin{enumerate}[label=(\roman*)]
        \item The correlation $p(abc|xyz)$ arises from a tripartite sequential operationally-non-signaling strategy.
        \item The correlation $p(abc|xyz)$ arises from a tripartite commuting operator strategy.
        \item There exists a family of $\{\Gamma^{(n)}\}_n$ of feasible solutions to \cref{eq:Tripartite_SequentialNPASDP} such that $p(abc|xyz) = \Gamma^{(n)}_{\id, f_{abc|xyz}}$ for all $n$.
    \end{enumerate}
    Consequently, $\gamevalueTriSeqNPA{\cG}{n} \searrow \gamevalueqcopt{\cG}$ monotonically as $n \to \infty$.
\end{theorem}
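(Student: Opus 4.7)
The plan is to mirror the bipartite proof of \cref{thm:Bipartite_Completeness}, incorporating an extra layer of dilation to handle Bob's operationally-non-signaling constraint embedded under Alice's $\ast$-homomorphisms. The implications (i)$\Rightarrow$(iii) and (ii)$\Rightarrow$(iii) are immediate: any sequential operationally-non-signaling strategy (realized via CP maps $\QInstrument_{a|x}, \QInstrument_{b|y}$ and a final POVM $\{f_{c|z}\}$) or any commuting operator strategy gives rise to moment matrices $\Gamma^{(n)}$ that are PSD, normalized, satisfy the Hankel relations from $\cR_{ABC}$, and verify both the Alice- and Bob-level operationally-non-signaling constraints in \cref{eq:Tripartite_SequentialNPASDP}. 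The equivalence (i)$\Leftrightarrow$(ii) follows from the multipartite $C^{\ast}$-algebraic framework of \cite{baroni2025asymptotic}, which generalizes the bipartite equivalence of \cite{kulpe2024bound} to the tripartite sequential setting.

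For (iii)$\Rightarrow$(i), I would apply the same Banach-Alaoglu argument as in the bipartite case. Since all generators $f_{abc|xyz}$ are projective, the entries of any feasible family $\{\Gamma^{(n)}\}_n$ are uniformly bounded, so a diagonal extraction yields a convergent subsequence with limit $\Gamma$ defining a state $\sigma$ on $\cA_{ABC}$ via $\sigma(w^{\ast}v) = \Gamma_{w,v}$. Passing the two SDP constraints to the limit gives, respectively,
\begin{align*}
    \sigma\Bigl(P^{\ast}\,\bigl[\textstyle\sum_{a} T_{a|x}(S)-T_{a|x'}(S)\bigr]\,Q\Bigr)&=0,\\
    \sigma\Bigl(T_{a|x}\bigl(R^{\ast}\,\bigl[\textstyle\sum_{b} T_{b|y}(O)-T_{b|y'}(O)\bigr]\,S\bigr)\Bigr)&=0,
\end{align*}
for all $P,Q\in\cA_{ABC}$, $R,S\in\cA_{BC}$, $O\in\cA_C$, and all $a,x,x',y,y'$. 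Taking the GNS triple $(\cH,\pi,\ket{\Omega})$ of $\sigma$ and invoking cyclicity of $\ket{\Omega}$, Alice's identity yields $\sum_{a}\pi\circ T_{a|x}=\sum_{a}\pi\circ T_{a|x'}=:\bar{T}_A:\cA_{BC}\to B(\cH)$, exactly as in the bipartite proof, which identifies $\sigma$ with a tripartite operationally-non-signaling sequential correlation in the sense of \cite[Definition~22]{baroni2025asymptotic} (using the outer CP maps $\pi\circ T_{a|x}$ for Alice and the inner maps $T_{b|y}$ for Bob, evaluated on $f_{c|z}$).

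For (iii)$\Rightarrow$(ii), I would iterate the Stinespring/Arveson Radon-Nikodym construction twice. First, the minimal Stinespring dilation $(\cK_A,\pi_{\cK_A},V_A)$ of $\bar{T}_A$ gives $\bar{T}_A(S)=V_A^{\ast}\pi_{\cK_A}(S)V_A$, and \cite[Theorem~1.4.2]{arveson1969subalgebras} yields unique positive operators $F_{a|x}\in\pi_{\cK_A}(\cA_{BC})'$ with $\sum_a F_{a|x}=\id$, realizing Alice's commuting POVMs. The crucial step is transporting Bob's non-signaling through this dilation: fixing $a,x$ and using that $T_{a|x}$ is a $\ast$-homomorphism, the Bob constraint becomes an identity for $\pi\circ T_{a|x}$ applied to elements of $\cA_{BC}$; summing over $a$ (allowed because each fixed $a$ gives the same identity) and then applying the cyclicity/Stinespring uniqueness correspondence, one obtains that on the cyclic subspace generated by $\pi_{\cK_A}(\cA_{BC})V_A\ket{\Omega}$, the induced vector state satisfies $\sum_{b}\pi_{\cK_A}\circ T_{b|y}=\sum_{b}\pi_{\cK_A}\circ T_{b|y'}=:\bar{T}_B$ when restricted appropriately. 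A second minimal Stinespring dilation of $\bar{T}_B$ on $\cA_C$, followed by another application of Arveson's Radon-Nikodym theorem, produces Bob's commuting POVMs $\{G_{b|y}\}$, and Charlie's PVM comes from the $\ast$-representation of $\cA_C$. Commutation across parties is automatic because Alice's POVMs live in the commutant of Bob and Charlie's representation, and Bob's POVMs live in the commutant of Charlie's representation. The main obstacle is precisely this transport of Bob's non-signaling through the Stinespring dilation used for Alice: unlike the bipartite case, Bob's constraint in \cref{eq:Tripartite_SequentialNPASDP} is wrapped under $T_{a|x}$ on both sides (only with $r,s\in\cW_{BC}^{n}$, not arbitrary $w,v\in\cW_{ABC}^{n}$), so the cyclic vector generating the GNS space of $\sigma$ is not directly cyclic for $\pi_{\cK_A}(\cA_{BC})$, and one has to argue that the Bob-level identity nevertheless descends correctly to the dilated space, either by explicitly analyzing the vector $V_A\ket{\Omega}$ together with the action $\pi_{\cK_A}\circ T_{b|y}$, or by invoking the uniqueness clause of the minimal Stinespring dilation to transfer the identity. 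Once (i)-(iii) are equivalent, monotone convergence $\gamevalueTriSeqNPA{\cG}{n}\searrow\gamevalueqcopt{\cG}$ follows from the standard compactness argument: each $\gamevalueTriSeqNPA{\cG}{n}$ is an upper bound on $\gamevalueqcopt{\cG}$ by (ii)$\Rightarrow$(iii), the sequence is monotone non-increasing because higher-level constraints are strictly tighter, and extracting a limit moment matrix from the optimizers yields a commuting operator strategy attaining $\lim_n\gamevalueTriSeqNPA{\cG}{n}$.
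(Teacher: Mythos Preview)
Your proposal is essentially correct and follows the paper's approach closely for the core implications: the Banach--Alaoglu extraction of a limiting state $\sigma$ on $\cA_{ABC}$, the passage of both Alice- and Bob-level constraints to the limit, and the deferral of (i)$\Leftrightarrow$(ii) to~\cite{baroni2025asymptotic} all match the paper's proof verbatim.

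The one substantive difference is in how you handle (iii)$\Rightarrow$(ii). The paper does not attempt an explicit double Stinespring/Radon--Nikodym construction; instead, once the two limiting non-signaling identities are established, it simply identifies $\sigma$ with an asymptotically-secured $C^{*}$-algebraic compiled strategy for three players (\cite[Definition~17 and Theorem~14]{baroni2025asymptotic}) and invokes \cite[Theorem~16 or~17]{baroni2025asymptotic} as a black box. You instead try to unroll this: iterate Stinespring first for $\bar{T}_A$, then for $\bar{T}_B$, extracting POVMs via Arveson at each layer. You correctly flag the genuine obstacle in this route---Bob's constraint is sandwiched only by $T_{a|x}(r), T_{a|x}(s)$ with $r,s\in\cW_{BC}^{n}$, not by arbitrary $w,v\in\cW_{ABC}^{n}$, so the GNS cyclic vector for $\sigma$ is not cyclic for $\pi_{\cK_A}(\cA_{BC})$ and the identity does not descend trivially. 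This is precisely the technical content that~\cite{baroni2025asymptotic} packages as the ``chain rule for Arveson's Radon--Nikodym derivatives,'' and the paper explicitly cites that mechanism rather than reproving it. Your sketch of how to resolve it (via minimality/uniqueness of the dilation) is in the right spirit but remains a sketch; the paper's choice to cite is the cleaner path given that the chain rule is already established elsewhere.
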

\begin{proof}
    It is straightforward to check that both \textit{(i)} and \textit{(ii)} imply \textit{(iii)}.
    The equivalence between \textit{(i)} and \textit{(ii)} is shown by~\cite{baroni2025asymptotic} thanks to the new chain rule for Arveson's Radon-Nikodym derivatives.

    The direction \textit{(iii)} $\implies$ \textit{(i)} is almost the same as the proof of \cref{thm:Bipartite_Completeness}.
    Again, $\Span(\cW^n_{ABC}) \to \cA_{ABC}$ as $n \to \infty$ with projective generators.
    Then by the Banach-Alaoglu Theorem there exists a weak-* convergent subsequence $\Gamma^{(n_k)} \to \Gamma$.
    That is, we have obtained a state on $\cA_{ABC}$ via
    \begin{align*}
        \sigma:  \cA_{ABC} &\to \mathds{C},\\
        w^*v &\mapsto \Gamma_{w, v}
    \end{align*}
    such that
    \begin{align*}
        p(abc|xyz) = \sigma(f_\abcxyz) = \sigma( T_\ax T_\by(f_\cz)).
    \end{align*}
    Furthermore, we have the operational-non-signaling for Alice
    \begin{align*}
        \sum_a \sigma(P^* T_{\ax}(S) Q) - \sum_a \sigma(P^* T_{a|x'}(S) Q) = 0
    \end{align*}
    for all $P, Q \in \cA_{ABC}$ and $S \in \cA_{BC}$; and for Bob
    \begin{align*}
        \sum_b \sigma \circ T_\ax (R^* T_{\by}(O) S) - \sum_b \sigma \circ T_\ax(R^* T_{b|y'}(O) S) = 0
    \end{align*}
    for all $R, S \in \cA_{BC}$ and $O \in \cA^n_{C}$.

    We are done by identifying $\sigma$ with a asymptotically-secured $C^*$-algebraic compiled strategy for three players as in~\cite[Definition~17 and Theorem~14]{baroni2025asymptotic} and then invoking~\cite[Theorem~16 or~17]{baroni2025asymptotic}.
\end{proof}

We omit the flatness condition and the strict feasibility for the tripartite hierarchy since it is straightforward, and instead present the $k$-partite variant directly in the next subsection.

\subsection{General multipartite case}\label{sec:SeqNPAMultipartiteCase}
We are ready to tackle the general $k$-partite sequential quantum scenarios for any $k \geq 2$.

First, let us develop a notation for any $k$-partite sequential algebras and the corresponding NPA hierarchy.
In our sequential convention, we begin with the $k$th party and then $(k-1)$-th until the $1$-st party, i.e., $k \to k-1 \to \cdots \to 1$.
Write $[j] := \{1, 2, \dots, j\}$ and denote $a_1 \cdots a_j, a'_1 \cdots a'_j,  x_1 \cdots x_j$ by $a_{[j]}, a'_{[j]}, x_{[j]}$, respectively, when there is no ambiguity.

For any $j \in [k]$, consider the generators/letters $\{f_{a_{[j]} | x_{[j]}} \mid \forall a_{[j]}, x_{[j]}\}$ satisfying the relation $\cR_{[j]}$:
\begin{equation*}
    \begin{aligned}
        f_{a_{[j]} | x_{[j]}}^* = f_{a_{[j]} | x_{[j]}}, \quad f_{a_{[j]} | x_{[j]}} f_{a'_{[j]} | x_{[j]}} = \delta_{a_{[j]}, a'_{[j]}} f_{a_{[j]} | x_{[j]}}, \quad \sum_{a_{[j]}} f_{a_{[j]} | x_{[j]}} = \id, \\
        \sum_{a_1, \dots a_l} f_{a_1 \cdots a_l a_{l+1} \cdots a_j | x_1 \cdots x_l x_{l+1} \cdots x_j} = \sum_{a_1, \dots a_l} f_{a_1 \cdots a_l a_{l+1} \cdots a_j | x'_1 \cdots x'_l x_{l+1} \cdots x_j}, \, \forall l \in [j].
    \end{aligned}
\end{equation*}
The corresponding universal $C^*$-algebra is defined as
\begin{align}
    A_{[j]} = C^*( \{ f_{a_{[j]} | x_{[j]}} \}_{a_{[j]}, x_{[j]}} \mid \cR_{[j]}).
\end{align}
For every $j$ such that $1 < j \leq k$, there exists a natural *-homomorphism for $a_k, x_k$ such that
\begin{equation*}
    \begin{aligned}
        T^j_{a_j | x_j}: \cA_{[j-1]} \to \cA_{[j]}, \quad f_{a_1 \cdots a_{j-1} | x_1 \cdots x_{j-1}} \mapsto f_{a_1 \cdots a_{j-1} a_j | x_1 \cdots x_{j-1} x_j}.
    \end{aligned}
\end{equation*}
That is, in the Heisenberg picture, the sequentiality is captured as
\begin{align*}
    \cA_{[1]} \xrightarrow{T^2_{a_2 | x_2}} \cA_{[2]} \xrightarrow{T^3_{a_3 | x_3}} \cdots \xrightarrow{T^{k-1}_{a_{k-1} | x_{k-1}}} \cA_{[k-1]} \xrightarrow{T^k_{a_k | x_k}} \cA_{[k]}
\end{align*}
with $\cA_{[1]} = C^*( \{f_{a_1|x_1}\}_{a_1, x_1} \mid \cR_{[1]})$ as the end party (e.g., Bob in the bipartite case and Charlie in the tripartite case).

For a $k$-partite nonlocal game $\cG$ with the correlation $p(\akxk)$, the associated score objective Bell polynomial is
\begin{align*}
    \beta = \sum_{a_{[k]}, x_{[k]}} \beta_{a_{[k]} x_{[k]}} f_\akxk \in \sum_{a_{[k]}, x_{[k]}} T^k_{a_k | x_k} \cdots T^2_{a_2 | x_2} ( \cA_{[1]}) \subset \cA_{[k]}.
\end{align*}

Let us define the $n$-th level of the $k$-partite sequential NPA hierarchy for game $\cG$.
We consider the word sets at level $n$ for each $j$:
\begin{align}
    \cW^n_{[j]} := \{ w \in \cA_{[k]} \mid  0 \leq \deg(w) \leq n \} \subset \cA_{[j]},
\end{align}
with $\beta \in \Span(\cW^n_{[k]})$ and $T^k_{a_k | x_k} \cdots T^2_{a_2 | x_2} ( \cW^n_{[1]}) \subset \cW^n_{[k]}$.
For the matrix $\Gamma^{(n)}$ indexed by $\cW^n_{[k]}$, the $k$-partite sequential NPA hierarchy at level is defined as:
\begin{equation}\label{eq:kpartite_SequentialNPASDP}
  \begin{aligned}
    \gamevaluekSeqNPA{\cG}{n}
    \;:=\;
    \max \quad
      & \sum_{a_{[k]}, x_{[k]}} \beta_{a_{[k]} x_{[k]}}\;
        \Gamma^{(n)}_{\id, f_\akxk}\\
    \text{s.t.}\quad
      &\Gamma^{(n)} \succeq 0,\\
      &\Gamma^{(n)}_{\id,\id}=1,\\
      &\Gamma^{(n)}_{w,w'}=\Gamma^{(n)}_{v,v'}
        \quad\text{whenever }w^{*}w'=v^{*}v',\\[2pt]
      &\sum_{a_j} \Gamma^{(n)}\!\Bigl(
          T^k_{a_k|x_k} \cdots T^{j+1}_{a_{j+1} | x_{j+1}} \bigl( T^j_{a_j | x_j}(r^{*}s)-T^j_{a_j | x'_j}(r^{*}s)\bigr)
        \Bigr)_{w,v}\;= 0 \\[2pt]
      &\hphantom{M_{(w,r),(s,v)}:=}
        \substack{\forall j \in [2,k] \cap \mathds{N}, \\
            \forall a_{j+1}, \dots, a_{k}, x'_j, x_j,\dots, x_{k}, \, \forall \, r,s \in \cW_{[j-1]}^{n},\\
            \forall\, w,v \in T^k_{a_k|x_k} \cdots T^{j+1}_{a_{j+1} | x_{j+1}}(\cW_{[j]}^{n}),\\
                  \deg w+\deg v+\deg r+\deg s\le 2n}
        \quad\text{(party-$j$ operationally‑non‑signaling)}.
  \end{aligned}
\end{equation}
Here, let $L^{2n}$ be the normalized positive linear map associated with $\Gamma^{(n)}$, then party-$j$ operationally-non-signaling constraint is equivalent to
\begin{align*}
    \sum_{a_j} L^{2n} \circ T^k_{a_k|x_k} \cdots T^{j+1}_{a_{j+1} | x_{j+1}} \bigr( P^* T^j_{a_j | x_j}(S) Q\bigr) -  \sum_{a_j} L^{2n} \circ T^k_{a_k|x_k} \cdots T^{j+1}_{a_{j+1} | x_{j+1}} \bigr( P^* T^j_{a_j | x'_j}(S) Q\bigr) = 0
\end{align*}
for all $x'_j, x_j, \dots, x_k, a_{j+1}, \dots a_k$, $P, Q \in \Span(\cW^n_{[j]})$, and $S \in \Span(\cW^n_{[j-1]})$.
This corresponds to the degree $n$ relaxation of~\cite[Eq.~(38)]{baroni2024quantum}.

With the notation introduced, we state all the $k$-partite results analogous to the bipartite and tripartite cases.
\begin{theorem}\label{thm:kpartite_Completeness}
    Let $p(a_{[k]}|x_{[k]})$ be a correlation for the $k$-partite nonlocal game $\cG$.
    The following statements are equivalent:
    \begin{enumerate}[label=(\roman*)]
        \item The correlation $p(a_{[k]}|x_{[k]})$ arises from a $k$-partite sequential operationally-non-signaling strategy.
        \item The correlation $p(a_{[k]}|x_{[k]})$ arises from a $k$-partite commuting operator strategy.
        \item There exists a family of $\{\Gamma^{(n)}\}_n$ of feasible solutions to \cref{eq:kpartite_SequentialNPASDP} such that $p(a_{[k]}|x_{[k]}) = \Gamma^{(n)}_{\id, f_\akxk}$ for all $n$.
    \end{enumerate}
    Consequently, $\gamevaluekSeqNPA{\cG}{n} \searrow \gamevalueqcopt{\cG}$ monotonically as $n \to \infty$.
\end{theorem}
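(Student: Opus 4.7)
The plan is to mirror the strategy used for the bipartite and tripartite cases (\cref{thm:Bipartite_Completeness,thm:Tripartite_Completeness}), iterating the operational non-signaling argument across all $k$ layers of the sequential algebraic structure $\cA_{[1]} \to \cA_{[2]} \to \cdots \to \cA_{[k]}$. The directions \textit{(i)} $\Rightarrow$ \textit{(iii)} and \textit{(ii)} $\Rightarrow$ \textit{(iii)} are immediate by construction: any sequential operationally-non-signaling strategy, or any commuting-operator strategy, induces a state $\sigma$ on $\cA_{[k]}$ whose associated moment matrix $\Gamma^{(n)}_{w,v} = \sigma(w^*v)$ at every level $n$ is positive semidefinite, normalized, satisfies the Hankel condition, and satisfies all party-$j$ operationally-non-signaling constraints for $j \in [2,k]$. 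The equivalence \textit{(i)} $\Leftrightarrow$ \textit{(ii)} is already established in~\cite{baroni2025asymptotic} via the chain rule for Arveson's Radon-Nikodym derivatives.

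The substantive direction is \textit{(iii)} $\Rightarrow$ \textit{(i)}, handled by a Banach-Alaoglu compactness argument exactly as in the bipartite and tripartite proofs. Given a feasible family $\{\Gamma^{(n)}\}_n$, each entry is uniformly bounded since every generator $f_{a_{[k]}|x_{[k]}}$ is projective, hence contractive under any state. Extracting a weak-$*$ convergent subsequence yields a limiting state
\begin{align*}
    \sigma : \cA_{[k]} \to \mathds{C}, \qquad \sigma(w^*v) = \lim_{\ell \to \infty} \Gamma^{(n_\ell)}_{w,v},
\end{align*}
where the pointwise limit is defined since $\Span(\cW^n_{[k]}) \to \cA_{[k]}$ as $n \to \infty$. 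Normalization and positivity transfer to $\sigma$. The level-$n_\ell$ party-$j$ constraints pass to the limit for every $j \in [2,k]$, yielding
\begin{align*}
    \sum_{a_j} \sigma \circ T^k_{a_k|x_k}\cdots T^{j+1}_{a_{j+1}|x_{j+1}}\bigl(P^* T^j_{a_j|x_j}(S)\, Q\bigr) = \sum_{a_j} \sigma \circ T^k_{a_k|x_k}\cdots T^{j+1}_{a_{j+1}|x_{j+1}}\bigl(P^* T^j_{a_j|x'_j}(S)\, Q\bigr)
\end{align*}
for all $P, Q \in \cA_{[j]}$, $S \in \cA_{[j-1]}$, and all valid indices. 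This identifies $\sigma$ with an asymptotically-secured $C^*$-algebraic $k$-partite compiled strategy in the sense of~\cite[Definition~17]{baroni2025asymptotic}, and invoking~\cite[Theorem~16 or 17]{baroni2025asymptotic} produces a $k$-partite commuting-operator realization of $p(a_{[k]}|x_{[k]})$, which by \textit{(ii)} $\Rightarrow$ \textit{(i)} is also realized sequentially.

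For the convergence consequence, monotonicity $\gamevaluekSeqNPA{\cG}{n+1} \leq \gamevaluekSeqNPA{\cG}{n}$ holds since any level-$(n+1)$ feasible $\Gamma^{(n+1)}$ restricts to a level-$n$ feasible principal block; the inequality $\gamevaluekSeqNPA{\cG}{n} \geq \gamevalueqcopt{\cG}$ follows from \textit{(ii)} $\Rightarrow$ \textit{(iii)}. Choosing $\epsilon$-optimal feasible solutions at each $n$ and passing to the limit via the \textit{(iii)} $\Rightarrow$ \textit{(ii)} argument above shows that the limiting value is achievable by a commuting-operator strategy, giving $\lim_n \gamevaluekSeqNPA{\cG}{n} = \gamevalueqcopt{\cG}$.

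I expect the main technical obstacle to be the bookkeeping for the nested compositions $T^k_{a_k|x_k} \cdots T^{j+1}_{a_{j+1}|x_{j+1}}$ in the operational non-signaling constraints when passing to the weak-$*$ limit: one must verify that for each $j$, the appropriate density of $\Span(\cW^n_{[j]})$ and $\Span(\cW^n_{[j-1]})$ in their respective algebras is enough to propagate the constraint to arbitrary $P, Q, S$, and that the resulting object satisfies the precise algebraic conditions required by the definition of an asymptotically-secured $C^*$-algebraic compiled strategy in~\cite{baroni2025asymptotic}. Beyond this, the inductive structure of the argument over $j = k, k-1, \dots, 2$ parallels the tripartite proof closely and should not introduce genuinely new difficulties.
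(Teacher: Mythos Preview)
Your proposal is correct and follows essentially the same approach as the paper: the paper's own proof is a single sentence stating that the result follows by a routine inductive generalization of the bipartite and tripartite proofs, and your write-up spells out exactly that generalization---the Banach--Alaoglu extraction of a limiting state on $\cA_{[k]}$, passage of the party-$j$ operationally-non-signaling constraints to the limit for each $j$, and identification with the $C^*$-algebraic compiled strategy framework of~\cite{baroni2025asymptotic}. The bookkeeping concern you flag is real but, as you anticipate, introduces no genuinely new difficulty beyond the tripartite case.
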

\begin{proof}
    This follows from a routine inductive generalization of the proof for \cref{thm:Bipartite_Completeness,thm:Tripartite_Completeness}.
\end{proof}

We now present the flatness condition and strict feasibility of the $k$-partite hierarchy achieved by a direct inductive generalization.
\begin{definition}\label{def:kpartite_FlatnessCondition}
    For $n \in \mathds{N}$, let $\Gamma^{(n)}$ be the solution for \cref{eq:kpartite_SequentialNPASDP} at level $n$ for some nonlocal game $\cG$.
    Consider its block form
    \begin{align*}
        \Gamma^{(n)} = \begin{pmatrix}
            \Gamma^{(n-1)} & M \\
            M^* & N
        \end{pmatrix},
    \end{align*}
    where $\Gamma^{(n-1)}$ is the principal block indexed by words in $\cW^{n-1}_{[k]}$.
    We say that the solution $\Gamma^{(n)}$ is flat (or has a rank-loop) if
    \begin{align*}
        \mathrm{rank}(\Gamma^{(n)}) = \mathrm{rank}(\Gamma^{(n-1)}) < \infty.
    \end{align*}
\end{definition}

\begin{proposition}\label{prop:kpartite_FlatnessCondition}
    If the hierarchy of \cref{eq:kpartite_SequentialNPASDP} for $\cG$ admits a flat optimal solution at some finite level $n$, then the hierarchy attains its limiting value at level $n$, and the flat solution yields a finite-dimensional optimal tensor-product quantum strategy.
    In particular, $\gamevaluekSeqNPA{\cG}{n} = \omega_{\mathrm{qc}}(\cG) = \omega_{\mathrm q}(\cG)$.
\end{proposition}
\begin{proof}
    This can be proven analogously to \cref{prop:Biparitite_FlatnessCondition}.
\end{proof}

\begin{proposition}\label{prop:kpartite_StrictFeasible}
    For every level $n$, the SDP in \cref{eq:kpartite_SequentialNPASDP} is strictly feasible.
    That is, there exists a moment matrix $\Gamma^{(n)}_{\strict} \succ 0$ that satisfies all the linear constraints of \cref{eq:kpartite_SequentialNPASDP}.
\end{proposition}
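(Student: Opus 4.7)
The plan is to lift the bipartite argument (\cref{prop:Bipartite_StrictFeasible}) to the $k$-partite setting by realizing all sequential letters as products of inter-party commuting PVMs inside a sufficiently large standard commuting-operator NPA algebra, where strict feasibility is already known.

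First, I would fix a large enough truncation $N \geq nk$ and consider the standard NPA hierarchy for $k$ parties on the universal algebra $\cA_\NPA$ generated by inter-party commuting PVMs $\{g^{(i)}_{a_i|x_i}\}_{i\in[k]}$, for which a strictly feasible moment matrix $\Gamma^{(N)}_\NPA \succ 0$ exists at every level (via the faithful left-regular GNS representation; cf.\ \cite[Appendix~C]{tavakoli2024semidefinite}). Next, for each $j\in[k]$ I would define an assignment $f_{a_{[j]}|x_{[j]}} \mapsto g^{(1)}_{a_1|x_1}\cdots g^{(j)}_{a_j|x_j}$ and check that it respects the defining relations $\cR_{[j]}$. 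Self-adjointness, projectivity and normalization $\sum_{a_{[j]}} f_{a_{[j]}|x_{[j]}}=\id$ are immediate from inter-party commutativity combined with the PVM relations for each $g^{(i)}$. The non-signaling marginal relations of $\cR_{[j]}$ follow from the collapse $\sum_{a_1,\dots,a_l} g^{(1)}_{a_1|x_1}\cdots g^{(l)}_{a_l|x_l} = \id$, so that $\sum_{a_1,\dots,a_l} g^{(1)}_{a_1|x_1}\cdots g^{(l)}_{a_l|x_l}\,g^{(l+1)}_{a_{l+1}|x_{l+1}}\cdots g^{(j)}_{a_j|x_j}$ is independent of $x_1,\dots,x_l$. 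By the universal property of $\cA_{[j]}$ this yields a $*$-homomorphism $\pi_j:\cA_{[j]}\to\cA_\NPA$, and by construction $\pi_j\circ T^j_{a_j|x_j} = g^{(j)}_{a_j|x_j}\cdot\pi_{j-1}(\cdot)$.

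I would then set $\Gamma^{(n)}_\strict$ to be the principal submatrix of $\Gamma^{(N)}_\NPA$ indexed by $\pi_k(\cW^n_{[k]})$. Positive definiteness is inherited since principal submatrices of positive definite matrices are positive definite, and normalization $\Gamma^{(n)}_{\strict,\id,\id}=1$ is inherited from $\Gamma^{(N)}_\NPA$. The Hankel identifications $w^*w'=v^*v'$ in $\cA_{[k]}$ map under $\pi_k$ to the same element of $\cA_\NPA$, so the associated entries of $\Gamma^{(n)}_\strict$ agree. Finally, for the party-$j$ operationally-non-signaling constraint, the key identity
\begin{equation*}
\pi_j\Bigl(\sum_{a_j}\bigl(T^j_{a_j|x_j}(S)-T^j_{a_j|x'_j}(S)\bigr)\Bigr) = \Bigl(\sum_{a_j}g^{(j)}_{a_j|x_j}-\sum_{a_j}g^{(j)}_{a_j|x'_j}\Bigr)\pi_{j-1}(S) = (\id-\id)\pi_{j-1}(S) = 0
\end{equation*}
(using inter-party commutativity to move the sum past the other factors) forces the corresponding entries of $\Gamma^{(n)}_\strict$ to vanish after post-composition by the higher $T^{i}_{a_i|x_i}$'s, yielding all the required equalities.

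The only delicate point, and hence the main obstacle, is the careful verification that the assignment $f_{a_{[j]}|x_{[j]}}\mapsto g^{(1)}_{a_1|x_1}\cdots g^{(j)}_{a_j|x_j}$ passes through the universal quotient defining $\cA_{[j]}$, in particular for the nested non-signaling marginal relations across all subsets of leading indices; once this bookkeeping is in place, everything else reduces to the bipartite template together with the collapse $\sum_{a_j}g^{(j)}_{a_j|x_j}=\id$. A secondary, purely notational, issue is choosing $N$ so that all moments $\pi_k(w)^*\pi_k(v)$ with $w,v\in\cW^n_{[k]}$—each of degree at most $nk$ in the $g$-letters—are available as entries of $\Gamma^{(N)}_\NPA$; taking $N=nk$ suffices.
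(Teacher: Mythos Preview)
Your proposal is correct and follows essentially the same approach as the paper, which simply states that the result ``is inductive from the proof of \cref{prop:Bipartite_StrictFeasible}.'' You have spelled out precisely the inductive generalization the paper leaves implicit: embed each $\cA_{[j]}$ into the $k$-party commuting-PVM NPA algebra via $f_{a_{[j]}|x_{[j]}}\mapsto g^{(1)}_{a_1|x_1}\cdots g^{(j)}_{a_j|x_j}$, pull back the known strictly feasible NPA moment matrix as a principal submatrix, and verify the operationally-non-signaling constraints using $\sum_{a_j} g^{(j)}_{a_j|x_j}=\id$.
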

\begin{proof}
    This is inductive from the proof of \cref{prop:Bipartite_StrictFeasible}.
\end{proof}

\section{Quantitative quantum soundness for multipartite compiled nonlocal games}\label{sec:SoundnessCompileMain}
In this section, we give our main hierarchy-based quantitative upper bound for multipartite compiled nonlocal games (as in \cref{fig:NonlocalCompiledBellGame}(b)), from which the quantum-value soundness statements follow under finite-level convergence or flat optimality.
We begin with introducing the necessary notations in \cref{sec:SoundnessCompileNotation}.
Then in \cref{sec:GeometricProof}, since the old technique of signaling decomposition of~\cite{klep2025quantitative} cannot be extended to multipartite scenarios, we give novel proofs for multipartite signaling decomposition to show that the compiled strategy at any $\lambda$ is negligibly different from some solutions of \cref{eq:kpartite_SequentialNPASDP} for any NPA level $n$.
This new approach uses geometric arguments involved with the projection on the convex set of constraints of \cref{eq:kpartite_SequentialNPASDP}, and then fixing the positivity with strict feasibility followed by a renormalization, see \cref{fig:GeometryOfProof} for illustration.
Finally in \cref{sec:PiecesTogetherCorollary}, quantitative quantum soundness is prove by relating the above closeness result and the convergence of the sequential NPA hierarchy together.
We refer to \cref{fig:big-scheme} for a more detailed graphical representation of the proofs of our manuscript.

\subsection{Notations preliminaries}\label{sec:SoundnessCompileNotation}
According to~\cite{baroni2025asymptotic}, at the security parameter $\lambda$, an efficient (quantum polynomial-time) strategy $S = (S_{\lambda})$ of a $k$-partite compiled nonlocal game gives rise to states $\sigma^\lambda: \cA_{[k]} \to \mathds{C}$ such that
\begin{align*}
    p^{\lambda}(a_{[k]}|x_{[k]}) = \sigma^\lambda( T^k_{a_k | x_k} \cdots T^2_{a_2 | x_2} (f_{a_{[1]} | x_{[1]}}) ).
\end{align*}
Instead of the operationally-non-signaling condition as in \cref{eq:kpartite_SequentialNPASDP}, we only have a weakly-non-signaling condition.
That is, for all $j \in [2, k] \cap \mathds{N}$ and for all $x'_j, x_j, \dots, x_k, a_{j+1}, \dots a_k$ and for every operator $P, Q \in \cA_{[j]}$ and $R \in \cA_{[j-1]}$, there exists a negligible function $\negl_{S,P,R,Q}(\lambda)$, depending on the strategy $S$ and operators $P, Q, R$, such that
\begin{equation}\label{eq:kparitite_WeakNonSignaling}
    \begin{aligned}
        \Bigl\lvert \sum_{a_j} \sigma^{\lambda} \circ T^k_{a_k|x_k} & \cdots T^{j+1}_{a_{j+1} | x_{j+1}} \bigr( P^* T^j_{a_j | x_j}(R) Q\bigr) \\
        &-  \sum_{a_j} \sigma^{\lambda} \circ T^k_{a_k|x_k} \cdots T^{j+1}_{a_{j+1} | x_{j+1}} \bigr( P^* T^j_{a_j | x'_j}(R) Q\bigr) \Bigr\rvert \leq \negl_{S, P,R,Q}(\lambda).
    \end{aligned}
\end{equation}

Next, fix the NPA level $n \in \mathds{N}$ with word sets for all $j \in [k]$.
Consider the associated compiled moment matrix
\begin{align*}
    (\Gamma^{(n), \lambda}_\comp)_{w, v} := \sigma^{\lambda}(w^*v), \quad \Gamma^{(n), \lambda}_\comp \succeq 0,
\end{align*}
for all $w, v \in \cW^n_{[k]}$ that is ``almost'' a feasible solution of \cref{eq:kpartite_SequentialNPASDP}.
Observe that $\norm{\Gamma^{(n), \lambda}_\comp}_\op \leq \abs{\cW^n_{[k]}} $ regardless of $\lambda$.
Indeed, since $\sigma^{\lambda}$ is contractive and the generators of $C^*$-algebra $ \cA_{[k]}$ are projective, all diagonals satisfy $(\Gamma^{(n), \lambda}_\comp)_{w, w} \leq 1$, thus
\begin{align*}
    \norm{\Gamma^{(n), \lambda}_\comp}_\op = \max \text{eigenvalue of } \Gamma^{(n), \lambda}_\comp \leq \Tr{\Gamma^{(n), \lambda}_\comp} \leq \abs{\cW^n_{[k]}}.
\end{align*}
(In fact, the same argument shows that any level-$n$ normalized moment matrix on $\cW^n_{[k]}$ admits the same operator norm upper bound.)

To formalize and quantify the notion of ``almost feasibility'', we define a linear map $\cE^n_k$.
This map consolidates all linear constraints of \cref{eq:kpartite_SequentialNPASDP} by mapping a Hermitian matrix $H$ to a single vector, and $H$ satisfies these constraints if and only if it lies in the kernel of $\cE^n_k$, $\ker(\cE^n_k)$.
Specifically, let $\mathrm{Herm}(\cW^n_{[k]})$ be the space of Hermitian matrices indexed by words in $\cW^n_{[k]}$.
Let
\begin{align*}
    I_{\mathrm{Han}}^n = \{ (w, v, w', v') \mid w^*v = w'^* v' \}
\end{align*}
be the index pairs for the Hankel condition (symmetry of moment matrices) and, for each $j \in [2, k] \cap \mathds{N}$, let
\begin{align*}
    I_{j-\mathrm{ons}}^n = \{ &(a_{j+1}, \dots, a_{k}, x'_j, x_j,\dots, x_{k}, w, v, r, s) \mid \\ &\forall a_{j+1}, \dots, a_{k}, x'_j, x_j,\dots, x_{k}, \, \forall w,v\in\cW_{[j]}^{n},\, \\ &\forall r,s\in\cW_{[j-1]}^{n},\, \deg w+\deg v+\deg r+\deg s\le 2n \}
\end{align*}
be the set of tuples defining the operationally-non-signaling constraints of party $j$.
Define
\begin{equation}\label{eq:EMapForConstraint}
    \begin{aligned}
        \cE^n_k&: \mathrm{Herm}(\cW^n_{[k]}) \to \mathds{C}^{\abs{I_{\mathrm{Han}}^n}} \oplus ( \oplus_j \mathds{C}^{\abs{I_{j-\mathrm{ons}}^n}}) \\
        H &\mapsto \begin{pmatrix}
            ( H_{w,v} - H_{w',v'} )_{(w,v, w',v') \in I_{\mathrm{Han}}^n} \\
            {} \\
            \Bigl(
        \sum_{a_j}
        H\!\Bigl(
          T^k_{a_k|x_k}\cdots T^{j+1}_{a_{j+1}|x_{j+1}}
          \bigl(T^j_{a_j|x_j}(r^{*}s) - T^j_{a_j|x'_j}(r^{*}s)\bigr)
        \Bigr)_{\substack{
          T^k_{a_k|x_k}\cdots T^{j+1}_{a_{j+1}|x_{j+1}}(w),\\
          T^k_{a_k|x_k}\cdots T^{j+1}_{a_{j+1}|x_{j+1}}(v)
        }}
      \Bigr)_{{\substack{
        (a_{j+1},\dots,a_k,\, x'_j,x_j,\dots,x_k,\\
        w,v,r,s)\in I_{j\text{-}\mathrm{ons}}^n
      }}} 
        \end{pmatrix}.
    \end{aligned}
\end{equation}
Here, the top $I_{\mathrm{Han}}^n$ block of the output vector corresponds to the Hankel condition and the bottom $I_{j-\mathrm{ons}}^n$ block corresponds to the operationally-non-signaling constraints for every party $j$.
The norm of the vector $\cE^n_k(H)$ can then serve as a measure of the solution's ``almost-ness'' as formalized in \cref{lem:kpartite_ViolationConstraints}.

\begin{example}
    We give an example for the tripartite case with $k=3$ with an efficient strategy $S_3$.
    In the notation of \cref{sec:SeqNPATripartiteCase}, the weakly non-signaling conditions for Alice and Bob are, respectively,
    \begin{equation*}\label{eq:Tripartite_WeakNonSignaling}
        \begin{aligned}
            \lvert \sum_a \sigma^{\lambda}(P^* T_{\ax}(U) Q) - \sum_a \sigma^{\lambda}(P^* T_{a|x'}(U) Q) \rvert \leq \negl_{S_3, P,U,Q}(\lambda), \\
            \lvert \sum_b \sigma^{\lambda} \circ T_\ax (R^* T_{\by}(O) U) - \sum_b \sigma^{\lambda} \circ T_\ax(R^* T_{b|y'}(O) U) \rvert \leq \negl_{S_3, R,O,U}(\lambda),
        \end{aligned}
    \end{equation*}
    for all operators $P, Q \in \cA_{ABC}$, $R, U \in \cA_{BC}$, and $O \in \cA_{C}$.
    The constraint-testing map $\cE^n_3$ is
    \begin{equation*}
        \begin{aligned}
            \cE^n_3: \mathrm{Herm}(\cW^n_{ABC}) &\to \mathds{C}^{\abs{I_{\mathrm{Han}}^n}} \oplus \mathds{C}^{\abs{I_{A-\mathrm{ons}}^n}} \oplus \mathds{C}^{\abs{I_{B-\mathrm{ons}}^n}} \\
            H &\mapsto \begin{pmatrix}
                ( H_{w,v} - H_{w',v'} )_{(w,v, w',v') \in I_{\mathrm{Han}}^n} \\
                \left( H\!\Bigl(
          \sum_{a}\bigl(T_{a|x}(r^{*}s)-T_{a|x'}(r^{*}s)\bigr)
            \Bigr)_{w,v} \right)_{(x, x', w, v, r, s) \in I_{A-\mathrm{ons}}^n,} \\
            \left( H\!\Bigl(
               T_\ax\bigl(\sum_{b}T_{b|y}(t^{*}u)-T_{b|y'}(t^{*}u)\bigr)
            \Bigr)_{T_\ax(r),T_\ax(s)} \right)_{(y, y', a, x, r, s, t, u) \in I_{B-\mathrm{ons}}^n}
            \end{pmatrix},
        \end{aligned}
    \end{equation*}
    where 
    \begin{align*}
        I_{A-\mathrm{ons}}^n = \{ (x, x', w, v, r, s) \mid \forall x, x', \, \forall w,v\in\cW_{ABC}^{n},\, \forall r,s\in\cW_{BC}^{n},\, \deg w+\deg v+\deg r+\deg s\le 2n \}
    \end{align*}
    is the set of tuples defining Alice's operationally-non-signaling constraints and
    \begin{align*}
        I_{B-\mathrm{ons}}^n = \{ (y, y', a, x, r, s, t, u) \mid \forall y, y', a, x, \, \forall r,s\in\cW_{BC}^{n},\, \forall t,u\in\cW_{C}^{n},\, \deg r+\deg s+\deg t+\deg u\le 2n \}
    \end{align*}
    is the corresponding set for Bob's.
\end{example}

\subsection{A closeness result via geometric arguments}\label{sec:GeometricProof}
In this subsection, we show the main technical result (\cref{thm:kpartite_BoundToFeasibleSolution}, graphically \cref{fig:GeometryOfProof}), which states that for any compiled strategy $S$ and any $n$, there exists a feasible solution to our level-$n$ sequential NPA hierarchy \cref{eq:kpartite_SequentialNPASDP} that is negligibly close to $S$.

To this end, let us first formalize the notation of ``almost feasibility'' of the compiled moment matrix $\Gamma^{(n), \lambda}_\comp$ as follows.
\begin{lemma}\label{lem:kpartite_ViolationConstraints}
    For the compiled moment matrix $\Gamma^{(n), \lambda}_\comp$ extracted from an efficient (quantum polynomial-time) strategy $S = (S_{\lambda})$ for a $k$-partite compiled nonlocal game, there exists a negligible function $\negl^{\mathrm{lem}}_{S, n}(\lambda)$, dependent on $S, n$, such that
    \begin{align*}
        \norm{\cE^n_k(\Gamma^{(n), \lambda}_\comp)}_2 \leq \negl^{\mathrm{lem}}_{S, n}(\lambda),
    \end{align*}
    where $\norm{\cdot}_2$ denotes the usual Euclidean norm.
\end{lemma}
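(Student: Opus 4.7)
The plan is to verify that each coordinate of $\cE^n_k(\Gamma^{(n),\lambda}_{\comp})$ is negligibly small, and then use the fact that the total number of coordinates depends only on $n$ and the game parameters (not on $\lambda$) to combine these pointwise bounds into a single negligible bound on the Euclidean norm.

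First I would dispose of the Hankel block. Since $\sigma^{\lambda}$ is a state on $\cA_{[k]}$, the identity $w^{*}v = w'^{*}v'$ holds in the algebra, so $\sigma^{\lambda}(w^{*}v) = \sigma^{\lambda}(w'^{*}v')$ exactly. Hence every coordinate indexed by $I_{\mathrm{Han}}^n$ vanishes identically and contributes nothing to $\|\cE^n_k(\Gamma^{(n),\lambda}_{\comp})\|_2$.

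Next, for the operationally-non-signaling blocks, I would fix $j \in [2,k]\cap\mathds{N}$ and a tuple $(a_{j+1},\dots,a_k,x'_j,x_j,\dots,x_k,w,v,r,s) \in I_{j\text{-}\mathrm{ons}}^n$. By unfolding the localizing-matrix notation, the corresponding entry of $\cE^n_k(\Gamma^{(n),\lambda}_{\comp})$ equals
\begin{align*}
\sum_{a_j} \sigma^{\lambda}\!\circ T^k_{a_k|x_k}\!\cdots T^{j+1}_{a_{j+1}|x_{j+1}}\!\bigl(P^{*}T^j_{a_j|x_j}(R)Q\bigr) - \sum_{a_j} \sigma^{\lambda}\!\circ T^k_{a_k|x_k}\!\cdots T^{j+1}_{a_{j+1}|x_{j+1}}\!\bigl(P^{*}T^j_{a_j|x'_j}(R)Q\bigr),
\end{align*}
with $P = w \in \cW^n_{[j]}$, $Q = v \in \cW^n_{[j]}$, and $R = r^{*}s \in \Span(\cW^n_{[j-1]})$. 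By \cref{eq:kparitite_WeakNonSignaling}, this is bounded in absolute value by a negligible function $\negl_{S,P,R,Q}(\lambda)$ depending on the strategy and on the three operators.

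Finally I would aggregate. The key observation is that the set of triples $(P,R,Q)$ appearing across all coordinates (Hankel and $j$-ons for all $j$) is finite, with cardinality a function of $n$, $k$, and the question/answer alphabets of $\cG$ only --- so, in particular, independent of $\lambda$. Consequently
\begin{align*}
\negl^{\mathrm{lem}}_{S,n}(\lambda) \;:=\; \Bigl(\sum_{\text{coords}} \negl_{S,P,R,Q}(\lambda)^{2}\Bigr)^{1/2}
\end{align*}
is a finite sum (with $\lambda$-independent number of terms) of squares of negligible functions, hence itself negligible, and by construction it dominates $\|\cE^n_k(\Gamma^{(n),\lambda}_{\comp})\|_2$. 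I do not anticipate a serious obstacle: the only subtle point is making sure that the weak-non-signaling bound from \cref{eq:kparitite_WeakNonSignaling}, which is stated for arbitrary $P,Q\in\cA_{[j]}$ and $R\in\cA_{[j-1]}$, is uniform enough over the finite index set $I_{j\text{-}\mathrm{ons}}^n$ to yield a single negligible function depending on $S$ and $n$; this follows because negligibility is closed under finite maxima and the index set is $\lambda$-independent.
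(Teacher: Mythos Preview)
Your proposal is correct and follows essentially the same approach as the paper: the Hankel block vanishes exactly by definition of $\sigma^\lambda$, each non-signaling coordinate is bounded via \cref{eq:kparitite_WeakNonSignaling}, and the finitely many (and $\lambda$-independent) coordinates are aggregated into a single negligible function. The paper phrases the aggregation as a max over the finite word sets times $\sqrt{\sum_j |I^n_{j\text{-}\mathrm{ons}}|}$, while you take the $\ell_2$-sum directly, but these are equivalent.
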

\begin{proof}
    Clearly $(\Gamma^{(n), \lambda}_\comp)_{w,v} - (\Gamma^{(n), \lambda}_\comp)_{w',v'} = 0$ if $w^*v = w'^* v'$ by its definition.
    On the other hand, by \cref{eq:kparitite_WeakNonSignaling},
    \begin{align*}
        \Bigl\lvert
            &\sum_{a_j}
            \Gamma^{(n), \lambda}_\comp \!\Bigl(
              T^k_{a_k|x_k}\cdots T^{j+1}_{a_{j+1}|x_{j+1}}
              \bigl(T^j_{a_j|x_j}(r^{*}s) - T^j_{a_j|x'_j}(r^{*}s)\bigr)
            \Bigr)_{\substack{
              T^k_{a_k|x_k}\cdots T^{j+1}_{a_{j+1}|x_{j+1}}(w),\\
              T^k_{a_k|x_k}\cdots T^{j+1}_{a_{j+1}|x_{j+1}}(v)
            }}
          \Bigr\rvert \\
          &= \abs{ \sum_{a_j} \sigma^{\lambda} \circ T^k_{a_k|x_k} \cdots T^{j+1}_{a_{j+1} | x_{j+1}} \bigr( w^* T^j_{a_j | x_j}(r^*s) v\bigr) -  \sum_{a_j} \sigma^{\lambda} \circ T^k_{a_k|x_k} \cdots T^{j+1}_{a_{j+1} | x_{j+1}} \bigr( w^* T^j_{a_j | x'_j}(r^* s) v\bigr) } \\
          &\leq \negl_{S, w,r,s,v}(\lambda) \leq \max_{w,r,s,v}\negl_{S, w,r,s,v}(\lambda) := \negl_{S,n}'(\lambda),
    \end{align*}
    where the maximum makes sense because the degree $n$ word sets $\cW^n_{[j-1]}, \cW^n_{[j]}$ are finite.
    Then, as the $I_{\mathrm{Han}}^n$-block is already $0$ for $\cE^n_k(\Gamma^{(n), \lambda}_\comp)$, one has
    \begin{align*}
        \norm{\cE^n_k(\Gamma^{(n), \lambda}_\comp)}_2 \leq \sqrt{\sum_{j} \abs{I_{j-\mathrm{ons}}^n}} \cdot \negl_{S, n}'(\lambda) := \negl^{\mathrm{lem}}_{S, n}(\lambda).
    \end{align*}
\end{proof}

\begin{theorem}\label{thm:kpartite_BoundToFeasibleSolution}
    Let $\cG$ be a $k$-partite nonlocal game with $\cG_{\comp}$ as its compiled version.
    Let $S = (S_{\lambda})_{\lambda}$ be an arbitrary efficient (quantum polynomial-time) strategy employed by the prover, and consider the corresponding algebraic compiled strategy state $\sigma^{\lambda}: \cA_{[k]} \to \mathds{C}$ due to~\cite{baroni2025asymptotic}.
    For every $n \in \mathds{N}$, let $(\Gamma^{(n), \lambda}_\comp)_{w,v} := \sigma^{\lambda}(w^*v)$ be the associated level-$n$ moment matrix.
    
    Then, there exists a $(S, n)$-dependent negligible function $\negl_{S,n}(\lambda)$ (goes to zero faster than the reciprocal of any polynomial in $\lambda$) and a feasible solution $\Gamma^{(n), \lambda}$ of \cref{eq:kpartite_SequentialNPASDP} such that
    \begin{align*}
        \norm{ \Gamma^{(n), \lambda}_\comp - \Gamma^{(n), \lambda}}_\op \leq \negl_{S, n}(\lambda),
    \end{align*}
    where $\norm{\cdot}_\op$ denotes the operator spectral norm.
\end{theorem}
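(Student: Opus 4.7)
The plan follows the three-stage geometric blueprint of \cref{fig:GeometryOfProof}: first project onto the affine subspace cut out by the constraints, then restore positivity by a small convex mix with a strictly feasible point, and finally renormalize. Throughout, I work in the finite-dimensional real inner-product space $\mathrm{Herm}(\cW^n_{[k]})$ equipped with the Frobenius inner product. Since $\cW^n_{[k]}$ is finite, all dimension-dependent constants that will appear are finite and independent of $\lambda$, so when multiplied by any $\negl(\lambda)$ they remain negligible in $\lambda$.

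\textbf{Step 1 (projection onto the affine constraint set).} The map $\cE^n_k$ defined in \cref{eq:EMapForConstraint} is linear, and by construction a Hermitian matrix $H$ satisfies the Hankel and all operationally-non-signaling constraints of \cref{eq:kpartite_SequentialNPASDP} if and only if $H \in \ker(\cE^n_k)$. Let $\Pi$ denote the orthogonal (Frobenius) projection onto $\ker(\cE^n_k)$ and set $\Gamma^{(n),\lambda}_1 := \Pi(\Gamma^{(n),\lambda}_\comp)$. Because $\cE^n_k$ acts on a finite-dimensional space, its restriction to $\ker(\cE^n_k)^\perp$ is invertible, and there is a finite constant $C_{n,k}>0$ depending only on $n,k$ (essentially the reciprocal of the smallest nonzero singular value of $\cE^n_k$) such that
\begin{align*}
    \norm{\Gamma^{(n),\lambda}_\comp - \Gamma^{(n),\lambda}_1}_\op
    \;\le\; \norm{\Gamma^{(n),\lambda}_\comp - \Gamma^{(n),\lambda}_1}_F
    \;\le\; C_{n,k}\,\norm{\cE^n_k(\Gamma^{(n),\lambda}_\comp)}_2
    \;\le\; C_{n,k}\,\negl^{\mathrm{lem}}_{S,n}(\lambda),
\end{align*}
invoking \cref{lem:kpartite_ViolationConstraints}. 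Hermiticity is preserved by $\Pi$, but $\Gamma^{(n),\lambda}_1$ may fail to be PSD and its $(\id,\id)$-entry is only $1 \pm C_{n,k}\negl^{\mathrm{lem}}_{S,n}(\lambda)$.

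\textbf{Step 2 (restore positivity).} Use \cref{prop:kpartite_StrictFeasible} to fix a strictly feasible $\Gamma^{(n)}_\strict \succ 0$ with smallest eigenvalue $\mu_{n,k}>0$. Define
\begin{align*}
    t_\lambda \;:=\; \frac{2 C_{n,k}\,\negl^{\mathrm{lem}}_{S,n}(\lambda)}{\mu_{n,k} + 2 C_{n,k}\,\negl^{\mathrm{lem}}_{S,n}(\lambda)},
    \qquad
    \Gamma^{(n),\lambda}_2 \;:=\; (1-t_\lambda)\,\Gamma^{(n),\lambda}_1 + t_\lambda\,\Gamma^{(n)}_\strict.
\end{align*}
Since $\ker(\cE^n_k)$ is a linear subspace, $\Gamma^{(n),\lambda}_2$ still satisfies all affine constraints. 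Weyl's inequality together with the Step~1 bound gives $\lambda_{\min}(\Gamma^{(n),\lambda}_1) \ge -C_{n,k}\negl^{\mathrm{lem}}_{S,n}(\lambda)$, so $\lambda_{\min}(\Gamma^{(n),\lambda}_2) \ge -(1-t_\lambda)C_{n,k}\negl^{\mathrm{lem}}_{S,n}(\lambda) + t_\lambda \mu_{n,k} \ge 0$ by the choice of $t_\lambda$. Moreover, since $\Gamma^{(n)}_\strict$ and $\Gamma^{(n),\lambda}_1$ have bounded operator norm (uniform in $\lambda$), we get $\norm{\Gamma^{(n),\lambda}_1 - \Gamma^{(n),\lambda}_2}_\op = t_\lambda\,\norm{\Gamma^{(n)}_\strict - \Gamma^{(n),\lambda}_1}_\op = O(\negl^{\mathrm{lem}}_{S,n}(\lambda))$, still negligible in $\lambda$.

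\textbf{Step 3 (renormalization and wrap-up).} Because $\Gamma^{(n)}_\strict$ is feasible we have $(\Gamma^{(n)}_\strict)_{\id,\id}=1$, hence $\alpha_\lambda := (\Gamma^{(n),\lambda}_2)_{\id,\id} = 1 + O(\negl^{\mathrm{lem}}_{S,n}(\lambda))$ by the previous two steps. In particular $\alpha_\lambda > 0$ for all $\lambda$ large enough (and can be ensured positive everywhere by slightly enlarging $t_\lambda$ if needed), so
\begin{align*}
    \Gamma^{(n),\lambda} \;:=\; \alpha_\lambda^{-1}\,\Gamma^{(n),\lambda}_2
\end{align*}
is well defined, PSD (scaling by a positive scalar), normalized at $(\id,\id)$, and still in $\ker(\cE^n_k)$ because the constraints encoded by $\cE^n_k$ are homogeneous linear. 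Thus $\Gamma^{(n),\lambda}$ is a feasible solution of \cref{eq:kpartite_SequentialNPASDP}. Applying the triangle inequality to the chain $\Gamma^{(n),\lambda}_\comp \to \Gamma^{(n),\lambda}_1 \to \Gamma^{(n),\lambda}_2 \to \Gamma^{(n),\lambda}$ and using the uniform operator-norm bound $\norm{\Gamma^{(n),\lambda}_\comp}_\op \le \abs{\cW^n_{[k]}}$ (noted after \cref{lem:kpartite_ViolationConstraints}) to control the rescaling error, all three contributions are of order $\negl^{\mathrm{lem}}_{S,n}(\lambda)$ times constants that depend only on $n$, $k$, $\Gamma^{(n)}_\strict$, and $\abs{\cW^n_{[k]}}$. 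Absorbing these $\lambda$-independent factors into a new negligible function $\negl_{S,n}(\lambda)$ yields the desired bound $\norm{\Gamma^{(n),\lambda}_\comp - \Gamma^{(n),\lambda}}_\op \le \negl_{S,n}(\lambda)$.

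\textbf{Main obstacle.} The most delicate point is Step~1: one must ensure that the projection error $\norm{\Gamma^{(n),\lambda}_\comp - \Gamma^{(n),\lambda}_1}$ is quantitatively controlled by $\norm{\cE^n_k(\Gamma^{(n),\lambda}_\comp)}_2$ rather than merely being finite. This is where finite-dimensionality of $\cW^n_{[k]}$ is essential, as it produces the constant $C_{n,k}$; an infinite-dimensional analogue would require additional spectral information about $\cE^n_k$. The subsequent positivity-restoration and renormalization steps are then standard perturbative arguments, but they hinge on the strict feasibility result \cref{prop:kpartite_StrictFeasible}, without which there would be no quantitative margin $\mu_{n,k}$ to absorb the negative eigenvalues produced by the projection.
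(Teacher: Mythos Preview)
Your proof is correct and follows essentially the same three-stage argument as the paper: project onto $\ker(\cE^n_k)$ (your constant $C_{n,k}$ is exactly the paper's $\norm{(\cE^n_k)^\dagger}_\op$), restore positivity by a convex mix with $\Gamma^{(n)}_\strict$ using Weyl's inequality, and renormalize. The only cosmetic differences are your factor of $2$ in $t_\lambda$ and the explicit appeal to finite-dimensionality in Step~1, neither of which changes the substance.
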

\begin{proof}
    We aim to construct a feasible solution $\Gamma^{(n), \lambda}$ that is negligibly close to the compiled moment matrix $\Gamma^{(n), \lambda}_\comp$, see~\cref{fig:GeometryOfProof} for a pictorial illustration.
    To this end, consider the projection onto $\ker(\cE^n_k)$
    \begin{align*}
        \Pi := \id - (\cE^n_k)^\dagger \cE^n_k,
    \end{align*}
    where $(\cE^n_k)^\dagger$ denotes the Moore-Penrose pseudo-inverse of $\cE^n_k$ (see e.g.,~\cite{golub2013matrix}).
    Thus, the projection of the compiled moment matrix
    \begin{align*}
        \Gamma_1^{(n), \lambda} := \Pi(\Gamma^{(n), \lambda}_\comp)
    \end{align*}
    satisfies all linear constraints of \cref{eq:kpartite_SequentialNPASDP} and
    \begin{equation}\label{eq:kpartite_ThmProof_ProjectedMatrixBound}
    \begin{aligned}
        \norm{ \Gamma_1^{(n), \lambda} - \Gamma^{(n), \lambda}_\comp }_{\op} & = \norm{ (\cE^n_k)^\dagger \cE^n_k( \Gamma^{(n), \lambda}_\comp ) }_{\op} \leq \norm{(\cE^n_k)^\dagger}_\op \norm{\cE^n_k(\Gamma^{(n), \lambda}_\comp)}_2 \\ & \leq \norm{(\cE^n_k)^\dagger}_\op \negl^{\mathrm{lem}}_{S, n}(\lambda) := \negl_{S, n}'(\lambda),
    \end{aligned}
    \end{equation}
    where we use \cref{lem:kpartite_ViolationConstraints} and absorb the bounded constant $\norm{(\cE^n_k)^\dagger}_\op$ into the negligible function.
    
    While $\Gamma_1^{(n), \lambda}$ satisfies the Hankel condition and operationally non-signaling constraints of \cref{eq:kpartite_SequentialNPASDP}, it is no longer normalized nor necessarily positive semidefinite.
    Nonetheless, by the fact that $\Gamma^{(n), \lambda}_\comp \succeq 0$ and \cref{eq:kpartite_ThmProof_ProjectedMatrixBound}, Weyl's inequality~\cite{weyl1912asymptotische} implies that the minimum eigenvalue $\mu_0(\Gamma_1^{(n), \lambda})$ of $\Gamma_1^{(n), \lambda}$ satisfies
    \begin{align}\label{eq:kparitite_ThmProof_MinEigen}
        \mu_0(\Gamma_1^{(n), \lambda}) \geq - \negl_{S, n}'(\lambda),
    \end{align}
    i.e., $\Gamma_1^{(n), \lambda}$ is almost positive semidefinite.
    Furthermore, normalization is almost fulfilled in the sense that
    \begin{equation}\label{eq:kparitite_ThmProof_NormalizationBound}
    \begin{aligned}
        \abs{ (\Gamma_1^{(n), \lambda})_{\id,\id} - 1 } & = \abs{ (\Gamma_1^{(n), \lambda})_{\id,\id} - (\Gamma^{(n), \lambda}_\comp)_{\id,\id} } \leq \norm{\Gamma_1^{(n), \lambda} - \Gamma^{(n), \lambda}_\comp}_{\max}\\ & \leq \norm{ \Gamma_1^{(n), \lambda} - \Gamma^{(n), \lambda}_\comp }_{\op} \leq \negl_{S, n}'(\lambda),
    \end{aligned}
    \end{equation}
    where $\norm{\cdot}_{\max}$ denotes the max norm and $\norm{\cdot}_{\max} \leq \norm{\cdot}_{\op}$ is immediate.
    The last observation is that $\norm{\Gamma^{(n), \lambda}_1}_\op \leq \norm{\Gamma^{(n), \lambda}_\comp}_\op \leq \abs{\cW^n_{[k]}}$, independent of $\lambda$, due to the contractivity of $\Pi$.

    Next, consider the strictly feasible solution $\Gamma^{(n)}_\strict$ of \cref{prop:kpartite_StrictFeasible} and denote its minimal eigenvalue by $\mu_n > 0$.
    With a convex combination of $\Gamma^{(n)}_\strict$ and $\Gamma_1^{(n), \lambda}$, define 
    \begin{align*}
        \Gamma^{(n), \lambda}_2 := \frac{\mu_n}{\mu_n + \negl_{S, n}'(\lambda)} \Gamma_1^{(n), \lambda} + \frac{\negl_{S, n}'(\lambda)}{\mu_n + \negl_{S, n}'(\lambda)}\Gamma^{(n)}_\strict.
    \end{align*}
    It follows from \cref{eq:kparitite_ThmProof_MinEigen} and Weyl's inequality~\cite{weyl1912asymptotische} that the minimal eigenvalue $\mu_0(\Gamma^{(n), \lambda}_2)$ of $\Gamma^{(n), \lambda}_2$ satisfies
    \begin{align*}
        \mu_0(\Gamma^{(n), \lambda}_2) \geq \frac{\mu_n}{\mu_n + \negl_{S, n}'(\lambda)} \underbrace{\mu_0 ( \Gamma_1^{(n), \lambda} )}_{ \geq - \negl_{S, n}'(\lambda)} + \frac{\negl_{S, n}'(\lambda)}{\mu_n + \negl_{S, n}'(\lambda)} \underbrace{\mu_0 (\Gamma^{(n)}_\strict )}_{=\mu_n} \geq 0,
    \end{align*}
    consequently $\Gamma^{(n), \lambda}_2 \succeq 0$.
    Thus, the matrix $\Gamma^{(n), \lambda}_2$ satisfies all constraints of \cref{eq:kpartite_SequentialNPASDP} except for the normalization, because of the convexity of the constraint set.
    Moreover,
    \begin{align}\label{eq:kparitite_ThmProof_Gamma1Gamma2Bound}
        \norm{\Gamma^{(n), \lambda}_1 - \Gamma^{(n), \lambda}_2}_\op = \frac{\negl_{S, n}'(\lambda)}{\mu_n + \negl_{S, n}'(\lambda)} \norm{\Gamma^{(n), \lambda}_1 - \Gamma^{(n)}_\strict}_\op \leq \frac{\negl_{S, n}'(\lambda)}{\mu_n} \norm{\Gamma^{(n), \lambda}_1 - \Gamma^{(n)}_\strict}_\op := \negl_{S, n}''(\lambda),
    \end{align}
    where the constant $\norm{\Gamma^{(n), \lambda}_1 - \Gamma^{(n)}_\strict}_\op/\mu_n$ is absorbed into the negligible function, since both $\norm{\Gamma^{(n), \lambda}_1}_\op$ and $\norm{\Gamma^{(n)}_\strict}_\op$ are upper bounded by $\abs{\cW^n_{[k]}}$ that is independent of $\lambda$.
    Note that by the definition of $\Gamma_2^{(n), \lambda}$, $\mu_n > 0$, $\negl_{S, n}'(\lambda) \geq 0$, and \cref{eq:kparitite_ThmProof_NormalizationBound},
    \begin{align}\label{eq:kparitite_ThmProof_NormalizationGamma2}
        \abs{(\Gamma_2^{(n), \lambda})_{\id,\id} - 1} = \frac{\mu_n}{\mu_n + \negl_{S, n}'(\lambda)} \abs{(\Gamma_1^{(n), \lambda})_{\id,\id} - 1} \leq \frac{\mu_n \negl_{S, n}'(\lambda)}{\mu_n + \negl_{S, n}'(\lambda)} \leq \negl_{S, n}'(\lambda).
    \end{align}
    
    Finally, we normalize $\Gamma^{(n), \lambda}_2$ and obtain 
    \begin{align*}
        \Gamma^{(n), \lambda} := {\Gamma^{(n), \lambda}_2}/{(\Gamma^{(n), \lambda}_2)_{\id, \id}},
    \end{align*}
    which is a proper feasible solution of \cref{eq:kpartite_SequentialNPASDP} since all constraints (that are not normalization) are homogeneous and thus preserved by this renormalization.
    It follows from \cref{eq:kpartite_ThmProof_ProjectedMatrixBound,eq:kparitite_ThmProof_Gamma1Gamma2Bound,eq:kparitite_ThmProof_NormalizationGamma2} that
    \begin{align*}
        \norm{\Gamma^{(n), \lambda}_\comp - \Gamma^{(n), \lambda}}_\op &\leq \norm{\Gamma^{(n), \lambda}_\comp - \Gamma^{(n), \lambda}_1}_\op + \norm{\Gamma^{(n), \lambda}_1 - \Gamma^{(n), \lambda}_2}_\op + \norm{\Gamma^{(n), \lambda}_2 - \Gamma^{(n), \lambda}}_\op \\
        &\leq \negl_{S, n}'(\lambda) + \negl_{S, n}''(\lambda) + \abs{1 - \frac{1}{(\Gamma^{(n), \lambda}_2)_{\id, \id}}} \norm{\Gamma^{(n), \lambda}_2}_\op \\
        &\leq \negl_{S, n}'(\lambda) + \negl_{S, n}''(\lambda) + \abs{\frac{\norm{\Gamma^{(n), \lambda}_2}_\op}{(\Gamma^{(n), \lambda}_2)_{\id, \id}}} \negl_{S, n}'(\lambda) := \negl_{S, n}(\lambda),
    \end{align*}
    where $\negl_{S, n}(\lambda)$ is a negligible function due to the fact that $\norm{\Gamma^{(n), \lambda}_2}_\op/\abs{(\Gamma^{(n), \lambda}_2)_{\id, \id}}$ is a bounded constant.
\end{proof}

\subsection{Putting the puzzle pieces together---quantitative quantum soundness}\label{sec:PiecesTogetherCorollary}
Putting together the existence and closeness statement from \cref{thm:kpartite_BoundToFeasibleSolution} and the convergence of the sequential NPA hierarchy \cref{thm:kpartite_Completeness}, we obtain, as a direct consequence, the following quantitative quantum soundness statement for multipartite compiled nonlocal games.
(We refer to \cref{fig:big-scheme} for a graphical representation of the buildup of the following main result.)

\begin{corollary}\label{cor:kpartite_QuantumSoundness}
    Let $\cG$ be a $k$-partite nonlocal game with $\cG_{\comp}$ as its compiled version.
    Let $S = (S_{\lambda})_{\lambda}$ be an arbitrary efficient (quantum polynomial-time) strategy employed by the prover.
    Then for every $n \geq 1$, there exists a negligible function $\negl_{S, n}(\lambda)$ (dependent on the QHE scheme, the strategy S, and the NPA level $n$) such that
    \begin{align}\label{eq:CorollaryGeneral}
        \gamevalueCompile{\lambda}{\cG_{\comp}, S} \leq \gamevaluekSeqNPA{\cG}{n} + \negl_{S, n}(\lambda),
    \end{align}
    where $\gamevalueCompile{\lambda}{\cG_{\comp}, S}$ is the prover's Bell score using $S$ and $\gamevaluekSeqNPA{\cG}{n}$ is the optimal value of the hierarchy \cref{eq:kpartite_SequentialNPASDP} at level $n$.
    
    Furthermore, if the $k$-partite sequential NPA hierarchy for $\cG$ converges at some finite level $n'$, i.e., $\gamevaluekSeqNPA{\cG}{n'} = \omega_{\mathrm{qc}}(\cG)$, then there exists a negligible function $\negl_{S}(\lambda)$ (dependent on the QHE scheme, the strategy S) such that
    \begin{align}\label{eq:CorollaryFiniteConverge}
        \gamevalueCompile{\lambda}{\cG_{\comp}, S} \leq \omega_{\mathrm{qc}}(\cG) + \negl_{S}(\lambda),
    \end{align}
    where $\omega_{\mathrm{qc}}(\cG)$ is the optimal commuting quantum score.
    In particular, if the hierarchy admits a flat optimal solution at some finite level, then $\gamevalueqcopt{\cG}=\omega_{\mathrm q}(\cG)$ where $\omega_{\mathrm{q}}(\cG)$ is the optimal tensor product quantum score.
    Hence
    \begin{align}\label{eq:CorollaryFlat}
        \gamevalueCompile{\lambda}{\cG_{\comp}, S} \leq \omega_{\mathrm{q}}(\cG) + \negl_{S}(\lambda).
    \end{align}
\end{corollary}
\begin{proof}
    Let $\sigma^{\lambda}: \cA_{[k]} \to \mathds{C}$ be the corresponding algebraic compiled strategy state due to~\cite{baroni2025asymptotic}.
    For each $n \geq 1$, consider the associated compiled moment matrix
    \begin{align*}
        (\Gamma^{(n), \lambda}_\comp)_{w, v} := \sigma^{\lambda}(w^*v) \succeq 0
    \end{align*}
    for all $w, v \in \cW^n_{[k]}$.
    
    By \cref{thm:kpartite_BoundToFeasibleSolution}, there exists an $n$-dependent negligible function $\negl_{S, n}'(\lambda)$ and a feasible solution $\Gamma^{(n), \lambda}$ for every $\lambda$ of the $k$-partite sequential NPA hierarchy of \cref{eq:kpartite_SequentialNPASDP} such that
    \begin{align*}
        \norm{ \Gamma^{(n), \lambda}_\comp - \Gamma^{(n), \lambda}}_\op \leq \negl_{S, n}'(\lambda).
    \end{align*}
    Then,
    \begin{align*}
        \gamevalueCompile{\lambda}{\cG_{\comp}, S}
        &= \sigma^{\lambda}(\beta)
         = \sum_{a_{[k]}, x_{[k]}} \beta_{a_{[k]} x_{[k]}} (\Gamma^{(n), \lambda}_\comp)_{\id, f_\akxk} \\
        &= \underbrace{%
             \sum_{a_{[k]}, x_{[k]}} \beta_{a_{[k]} x_{[k]}} \Gamma^{(n), \lambda}_{\id, f_\akxk}%
           }_{\text{score of a possibly non-optimal feasible solution}} 
        +\quad{} \sum_{a_{[k]}, x_{[k]}} \beta_{a_{[k]} x_{[k]}}
              \left(
                \underbrace{%
                  (\Gamma^{(n), \lambda}_\comp)_{\id, f_\akxk}
                  - \Gamma^{(n), \lambda}_{\id, f_\akxk}%
                }_{\text{bounded by the max norm}}%
              \right) \\
        &\leq \gamevaluekSeqNPA{\cG}{n}
           + \underbrace{\abs{\sum_{a_{[k]}, x_{[k]}} \beta_{a_{[k]} x_{[k]}}}}_{\text{game-related constant}}
             \left\lVert \Gamma^{(n), \lambda}_\comp - \Gamma^{(n), \lambda} \right\rVert_{\max} \\
        &\leq \gamevaluekSeqNPA{\cG}{n}
           + \mathrm{const} \cdot
             \left\lVert \Gamma^{(n), \lambda}_\comp - \Gamma^{(n), \lambda} \right\rVert_{\op} \\
        &\leq \gamevaluekSeqNPA{\cG}{n}
           + \mathrm{const} \cdot \negl_{S, n}'(\lambda)
         = \gamevaluekSeqNPA{\cG}{n} + \negl_{S,n}(\lambda),
    \end{align*}
    where $\negl_{S, n}(\lambda) := \mathrm{const} \cdot \negl_{S, n}'(\lambda)$ is again negligible.
    
    Finally, if the hierarchy converges at level $n'$, we apply the first part with $n = n'$ and set $\negl_{S}(\lambda) := \negl_{S, n'}(\lambda)$.
    If the hierarchy admits a flat optimal solution at level $n'$, then \cref{prop:kpartite_FlatnessCondition} further gives $\gamevaluekSeqNPA{\cG}{n'} = \omega_{\mathrm{qc}}(\cG) = \omega_{\mathrm q}(\cG)$.
\end{proof}

\begin{figure}
    \centering
    \includegraphics[width=0.95\linewidth]{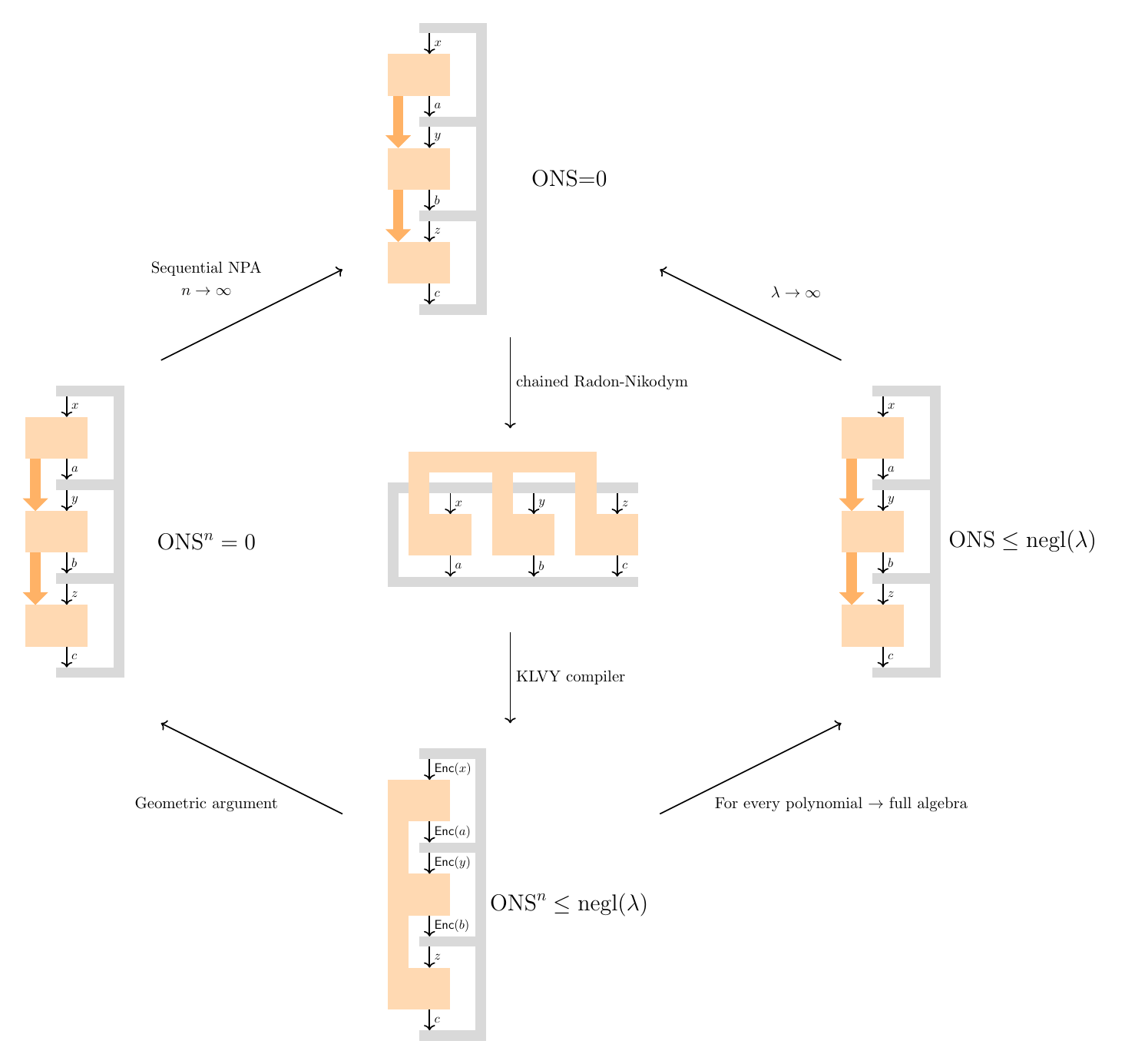}
    \caption{Graphical overview of the proof structure of this paper (left loop) compared to the proof structure of \cite{baroni2025asymptotic}(right loop).
    Starting from the center, we consider a tripartite non-local game. This can be compiled into a single-prover interactive game through the KLVY compiler (downward arrow). Using the block-encoding arguments introduced in \cite{natarajan2023bounding,kulpe2024bound,baroni2025asymptotic} it is possible to show that an efficient compiled single-prover satisfy weakly (with respect to $\lambda$) the operational-non-signalling constraint when tested against polynomials of degrees at most $n$; this is what we mean by $\text{ONS}^n\leq \negl(\lambda)$.
    Proving the quantum soundness is equivalent to show that this arrow can be inverted, meaning that all efficient compiled strategies can be mapped back to a non-local strategy.
    In the asymptotic approaches of \cite{kulpe2024bound} and \cite{baroni2025asymptotic}, the authors consider that this statement is true for all polynomials of any degrees $n$. Since this is a dense subset of the algebra, they show a $\lambda$-weak operational-non-signalling constraint on the full algebra. Taking the limit of $\lambda \to \infty$ then yields to perfect operational-non-signalling constraints. Finally, the Radon-Nikodym theorem (and its chain-rule version proved in \cite{baroni2025asymptotic}) implies that this setting is equivalent to commuting operator strategies. 
    In this work, we present a compatible but different approach. First, we use a geometric argument to get rid of the $\lambda$ dependency, by projecting the correlations from the compiled game to correlations that satisfy perfectly operational-non-signalling when tested against polynomials of degree $n$, by committing an error negligible in $\lambda$ that we can keep track of. Next, we develop a sequential NPA hierarchy, and we show that in the limit of $n \to \infty$ it converges to a sequential scenario with perfect operational-non-signalling on the full algebra, reconciling with the previous asymptotic proofs.
    This has the advantage that, unlike previous asymptotic proofs, it provides concrete statements for finite levels of the security parameter $\lambda$, which is of great technical importance for implementations.}
    \label{fig:big-scheme}
\end{figure}

Note that we recover~\cite[Theorem~A]{klep2025quantitative} the quantitative quantum upper bound statement for all bipartite compiled nonlocal games by setting $k=2$, and the asymptotic results of~\cite{baroni2025asymptotic} by letting $\lambda, n \to \infty$ thanks to \cref{thm:kpartite_Completeness}.
As in the bipartite case~\cite{klep2025quantitative}, flat optimality also extracts a finite-dimensional optimal strategy; combined with the quantum completeness of the KLVY compiler~\cite{kalai2023quantum}, this gives a matching lower bound up to negligible error whenever the extracted strategy satisfies the usual implementation requirements.

\begin{remark}\label{rem:NegligibleFunctionIndependent}
A useful feature of \cref{cor:kpartite_QuantumSoundness} is that the error terms $\negl_{S,n}(\lambda)$ are determined solely by the QHE scheme underlying the compiled game $\cG_{\comp}$ and by the given efficient compiled strategy $S$.
In particular, their definition and negligibility do \emph{not} rely on our
ability to solve (or certify convergence of) the sequential hierarchy.

Concretely, the function $\negl^{\mathrm{lem}}_{S,n}(\lambda)$ in \cref{lem:kpartite_ViolationConstraints} arises from bounding the maximal constraint violation over the finitely many monomials of degree at most $n$, together with an $n$-dependent factor reflecting the size of the level-$n$
moment matrix for $\cG$.
For each fixed monomial, the corresponding negligible bound depends only on $S$ and the security of the underlying QHE scheme (see, e.g., \cite[Theorem~11]{baroni2025asymptotic}).
Tracing the proof of \cref{thm:kpartite_BoundToFeasibleSolution}, the final $\negl_{S,n}(\lambda)$ differs from $\negl^{\mathrm{lem}}_{S,n}(\lambda)$ only by additional explicit $n$-dependent constants coming from the projection/PSD-repair steps.
\end{remark}

We remark further that one cannot simply remove the $n$-dependence of \cref{eq:CorollaryGeneral} by taking $n \to \infty$ for quantitative quantum soundness.
Indeed, while one can obtain
\begin{align*}
        \gamevalueCompile{\lambda}{\cG_{\comp}, S} \leq \omega_{\mathrm{qc}}(\cG) + \lim_{n \to \infty}\negl_{S, n}(\lambda),
    \end{align*}
but the function $\lim_{n \to \infty} \negl_{S, n}(\lambda)$ is not necessarily negligible in general.
As discussed in \cref{rem:NegligibleFunctionIndependent}, the behavior of this limit is highly dependent on the efficient compiled strategy $S$ and the size of the game $\cG$.

\section*{Acknowledgments}
We thank Michael Walter for the helpful discussions.
MB acknowledges funding from QuantEdu France, a state aid managed by the French National Research Agency for France 2030 with the reference ANR-22-CMAS-0001.
IK was supported by the Slovenian Research and Innovation Agency program P1-0222 and grants J1-50002, N1-0217, J1-3004, J1-50001, J1-60011, J1-60025.
Partially supported by the Fondation de l'École polytechnique as part of the Gaspard Monge Visiting Professor Program.
IK thanks École Polytechnique and Inria for hospitality during the preparation of this manuscript. 
DL acknowledges support from the Quantum Advantage Pathfinder (QAP) research program within the UK’s National Quantum Computing Center (NQCC).
MOR, LT, and XX acknowledge funding by the ANR for the JCJC grant LINKS (ANR-23-CE47-0003) and T-ERC QNET (ANR-24-ERCS-0008), by INRIA and CIEDS in the Action Exploratoire project DEPARTURE. MOR, IK, LT, and XX acknowledge support by the European Union's Horizon 2020 Research and Innovation Programme under QuantERA Grant Agreement no. 731473 and 101017733.

\printbibliography

@book{paulsen2002completely,
  title={Completely Bounded Maps and Operator Algebras},
  author={Paulsen, Vern},
  volume={78},
  series={Cambridge Studies in Advanced Mathematics},
  year={2002},
  publisher={Cambridge University Press},
  isbn={9780511546631},
  doi={10.1007/978-0-08-092496-0},
  collection={Cambridge Studies in Advanced Mathematics}
}

@article{raginsky2003radon,
  title={Radon–Nikodym derivatives of quantum operations},
   volume={44},
   ISSN={1089-7658},
   url={http://dx.doi.org/10.1063/1.1615697},
   DOI={10.1063/1.1615697},
   number={11},
   journal={Journal of Mathematical Physics},
   publisher={AIP Publishing},
   author={Raginsky, Maxim},
   year={2003},
   month=11, pages={5003–5020}
}

@article{navascues2008convergent,
  title={A convergent hierarchy of semidefinite programs characterizing the set of quantum correlations},
  author={Navascu{\'e}s, Miguel and Pironio, Stefano and Ac{\'\i}n, Antonio},
  journal={New Journal of Physics},
  volume={10},
  number={7},
  pages={073013},
  year={2008},
  publisher={IOP Publishing}
}

@article{cui2024computational,
  title={A Computational Tsirelson's Theorem for the Value of Compiled {XOR} Games},
  author={Cui, David and Malavolta, Giulio and Mehta, Arthur and Natarajan, Anand and Paddock, Connor and Schmidt, Simon and Walter, Michael and Zhang, Tina},
  journal={arXiv preprint arXiv:2402.17301},
  year={2024},
  url={https://arxiv.org/abs/2402.17301}, 
}

@article{baroni2024quantum,
  title={Quantum bounds for compiled {XOR} games and $ d $-outcome {CHSH} games},
  author={Baroni, Matilde and Vu, Quoc-Huy and Bourdoncle, Boris and Diamanti, Eleni and Markham, Damian and {\v{S}}upi{\'c}, Ivan},
  journal={arXiv preprint arXiv:2403.05502},
  year={2024},
  url={https://arxiv.org/abs/2403.05502}
}

@inproceedings{kulpe2024bound,
  title={A bound on the quantum value of all compiled nonlocal games},
  author={Kulpe, Alexander and Malavolta, Giulio and Paddock, Connor and Schmidt, Simon and Walter, Michael},
  booktitle={Proceedings of the 57th Annual ACM Symposium on Theory of Computing},
  pages={222--233},
  year={2025}
}

@article{arveson1969subalgebras,
 author = {Arveson, William B},
 title = {Subalgebras of {{\(C^ *\)}}-algebras},
 fjournal = {Acta Mathematica},
 journal = {Acta Math.},
 issn = {0001-5962},
 volume = {123},
 pages = {141--224},
 year = {1969},
 doi = {10.1007/BF02392388},
 zbMATH = {3308710},
 Zbl = {0194.15701}
}

@inproceedings{kalai2023quantum,
  title={Quantum Advantage from Any Non-Local Game},
  author={Kalai, Yael and Lombardi, Alex and Vaikuntanathan, Vinod and Yang, Lisa},
  booktitle={Proceedings of the 55th Annual ACM Symposium on Theory of Computing},
  pages={1617--1628},
  year={2023},
  organization={ACM},
  doi={10.1145/3564246.3585164},
  url={https://dl.acm.org/doi/10.1145/3564246.3585164}
}

@article{klep2022sparse,
 author = {Klep, Igor and Magron, Victor and Povh, Janez},
 title = {Sparse noncommutative polynomial optimization},
 fjournal = {Mathematical Programming. Series A. Series B},
 journal = {Math. Program.},
 issn = {0025-5610},
 volume = {193},
 number = {2 (B)},
 pages = {789--829},
 year = {2022},
 doi = {10.1007/s10107-020-01610-1},
 keywords = {90C23,90C22,47N10,13J10},
 zbMATH = {7541428},
 Zbl = {1494.90072}
}

@article{bell1964einstein,
  title={On the Einstein Podolsky Rosen paradox},
  author={Bell, John S},
  journal={Physics Physique Fizika},
  volume={1},
  number={3},
  pages={195},
  year={1964},
  publisher={APS}
}

@article{clauser1969proposed,
  title={Proposed experiment to test local hidden-variable theories},
  author={Clauser, John F and Horne, Michael A and Shimony, Abner and Holt, Richard A},
  journal={Physical review letters},
  volume={23},
  number={15},
  pages={880},
  year={1969},
  publisher={APS}
}

@article{brunner2014bell,
  title={Bell nonlocality},
  author={Brunner, Nicolas and Cavalcanti, Daniel and Pironio, Stefano and Scarani, Valerio and Wehner, Stephanie},
  journal={Reviews of modern physics},
  volume={86},
  number={2},
  pages={419--478},
  year={2014},
  publisher={APS}
}

@article{Horodecki2019,
  title = {The relativistic causality versus no-signaling paradigm for multi-party correlations},
  volume = {10},
  ISSN = {2041-1723},
  url = {http://dx.doi.org/10.1038/s41467-019-09505-2},
  DOI = {10.1038/s41467-019-09505-2},
  number = {1},
  journal = {Nature Communications},
  publisher = {Springer Science and Business Media LLC},
  author = {Horodecki,  Paweł and Ramanathan,  Ravishankar},
  year = {2019},
  month = apr 
}

@article{mahadev2020classical,
  title={Classical homomorphic encryption for quantum circuits},
  author={Mahadev, Urmila},
  journal={SIAM Journal on Computing},
  volume={52},
  number={6},
  pages={FOCS 18--189},
  year={2020},
  publisher={SIAM}
}

@inproceedings{brakerski2018quantum,
  title={Quantum {FHE} (almost) as secure as classical},
  author={Brakerski, Zvika},
  booktitle={Annual International Cryptology Conference},
  pages={67--95},
  year={2018},
  organization={Springer}
}

@inproceedings{natarajan2023bounding,
  title={Bounding the quantum value of compiled nonlocal games: from CHSH to BQP verification},
  author={Natarajan, Anand and Zhang, Tina},
  booktitle={2023 IEEE 64th Annual Symposium on Foundations of Computer Science (FOCS)},
  pages={1342--1348},
  year={2023},
  organization={IEEE}
}

@article{mehta2024self,
  title={Self-testing in the compiled setting via tilted-{CHSH} inequalities},
  author={Mehta, Arthur and Paddock, Connor and Wooltorton, Lewis},
  journal={arXiv preprint arXiv:2406.04986},
  year={2024},
  url={https://arxiv.org/abs/2406.04986}, 
}

@article{hughston1993complete,
  title={A complete classification of quantum ensembles having a given density matrix},
  author={Hughston, Lane P and Jozsa, Richard and Wootters, William K},
  journal={Physics Letters A},
  volume={183},
  number={1},
  pages={14--18},
  year={1993},
  publisher={Elsevier}
}

@article{gisin1989stochastic,
  title={Stochastic quantum dynamics and relativity},
  author={Gisin, Nicolas},
  journal={Helv. Phys. Acta},
  volume={62},
  number={4},
  pages={363--371},
  year={1989}
}

@article{sainz2015postquantum,
  title={Postquantum steering},
  author={Sainz, Ana Bel{\'e}n and Brunner, Nicolas and Cavalcanti, Daniel and Skrzypczyk, Paul and V{\'e}rtesi, Tam{\'a}s},
  journal={Physical review letters},
  volume={115},
  number={19},
  pages={190403},
  year={2015},
  publisher={APS}
}

@article{pironio2010convergent,
  title={Convergent relaxations of polynomial optimization problems with noncommuting variables},
  author={Pironio, Stefano and Navascu{\'e}s, Miguel and Acin, Antonio},
  journal={SIAM Journal on Optimization},
  volume={20},
  number={5},
  pages={2157--2180},
  year={2010},
  publisher={SIAM}
}

@article{lasserre2001global,
  title={Global optimization with polynomials and the problem of moments},
  author={Lasserre, Jean B},
  journal={SIAM Journal on optimization},
  volume={11},
  number={3},
  pages={796--817},
  year={2001},
  publisher={SIAM}
}

@article{parrilo2003semidefinite,
  title={Semidefinite programming relaxations for semialgebraic problems},
  author={Parrilo, Pablo A},
  journal={Mathematical programming},
  volume={96},
  pages={293--320},
  year={2003},
  publisher={Springer}
}

@article{ozawa2013tsirelson,
  author = {Ozawa, Narutaka},
    title = {Tsirelson's problem and asymptotically commuting unitary matrices},
    journal = {Journal of Mathematical Physics},
    volume = {54},
    number = {3},
    pages = {032202},
    year = {2013},
    month = {03},
    abstract = {In this paper, we consider quantum correlations of bipartite systems having a slight interaction, and reinterpret Tsirelson's problem (and hence Kirchberg's and Connes's conjectures) in terms of finite-dimensional asymptotically commuting positive operator valued measures. We also consider the systems of asymptotically commuting unitary matrices and formulate the Stronger Kirchberg Conjecture.},
    issn = {0022-2488},
    doi = {10.1063/1.4795391},
    url = {https://doi.org/10.1063/1.4795391},
    eprint = {https://pubs.aip.org/aip/jmp/article-pdf/doi/10.1063/1.4795391/16053470/032202\_1\_online.pdf},
}

@article{scholz2008tsirelson,
  title={Tsirelson's problem},
  author={Scholz, Volkher B and Werner, Reinhard F},
  year={2008},
  journal={arXiv preprint arXiv:0812.4305},
  url={https://arxiv.org/abs/0812.4305}, 
}

@book{magron2023sparse,
  title={Sparse polynomial optimization: theory and practice},
  author={Magron, Victor and Wang, Jie},
  year={2023},
  publisher={World Scientific}
}

@article{xu2025quantitative,
      title={Quantitative Tsirelson's Theorems via Approximate Schur's Lemma and Probabilistic Stampfli's Theorems}, 
      author={Xiangling Xu and Marc-Olivier Renou and Igor Klep},
      year={2025},
      journal={arXiv preprint arXiv:2505.22309},
      url={https://arxiv.org/abs/2505.22309}, 
}

@misc{baroni2025asymptotic,
      title={Bounding the asymptotic quantum value of all multipartite compiled non-local games}, 
      author={Matilde Baroni and Dominik Leichtle and Siniša Janković and Ivan Šupić},
      year={2025},
      eprint={2507.12408},
      archivePrefix={arXiv},
      primaryClass={quant-ph},
      url={https://arxiv.org/abs/2507.12408}, 
}

@misc{klep2025quantitative,
      title={Quantitative Quantum Soundness for Bipartite Compiled Bell Games via the Sequential NPA Hierarchy}, 
      author={Igor Klep and Connor Paddock and Marc-Olivier Renou and Simon Schmidt and Lucas Tendick and Xiangling Xu and Yuming Zhao},
      year={2025},
      eprint={2507.17006},
      archivePrefix={arXiv},
      primaryClass={quant-ph},
      url={https://arxiv.org/abs/2507.17006}, 
}

@misc{cui2025convergent,
      title={A convergent sum-of-squares hierarchy for compiled nonlocal games}, 
      author={David Cui and Chirag Falor and Anand Natarajan and Tina Zhang},
      year={2025},
      eprint={2507.17581},
      archivePrefix={arXiv},
      primaryClass={quant-ph},
      url={https://arxiv.org/abs/2507.17581}, 
}

@article{tavakoli2024semidefinite,
   title={Semidefinite programming relaxations for quantum correlations},
   volume={96},
   ISSN={1539-0756},
   url={http://dx.doi.org/10.1103/RevModPhys.96.045006},
   DOI={10.1103/revmodphys.96.045006},
   number={4},
   journal={Reviews of Modern Physics},
   publisher={American Physical Society (APS)},
   author={Tavakoli, Armin and Pozas-Kerstjens, Alejandro and Brown, Peter and Araújo, Mateus},
   year={2024},
   month=12 }

@book{golub2013matrix,
author = {Golub, Gene H. and Van Loan, Charles F.},
title = {Matrix Computations - 4th Edition},
publisher = {Johns Hopkins University Press},
year = {2013},
doi = {10.1137/1.9781421407944},
address = {Philadelphia, PA},
edition   = {},
URL = {https://epubs.siam.org/doi/abs/10.1137/1.9781421407944},
eprint = {https://epubs.siam.org/doi/pdf/10.1137/1.9781421407944}
}

@article{weyl1912asymptotische,
  title={Das asymptotische Verteilungsgesetz der Eigenwerte linearer partieller Differentialgleichungen (mit einer Anwendung auf die Theorie der Hohlraumstrahlung)},
  author={Weyl, Hermann},
  journal={Mathematische Annalen},
  volume={71},
  number={4},
  pages={441--479},
  year={1912},
  publisher={Springer}
}

@article{scarani2012device,
  title={The device-independent outlook on quantum physics},
  author={Scarani, Valerio},
  journal={Acta Physica Slovaca},
  volume={62},
  number={4},
  pages={347--409},
  year={2012}
}

@misc{ji2022mipre,
      title={MIP*=RE}, 
      author={Zhengfeng Ji and Anand Natarajan and Thomas Vidick and John Wright and Henry Yuen},
      year={2022},
      eprint={2001.04383},
      archivePrefix={arXiv},
      primaryClass={quant-ph},
      url={https://arxiv.org/abs/2001.04383}, 
}

@article{hensen2015loophole,
  title={Loophole-free Bell inequality violation using electron spins separated by 1.3 kilometres},
  author={Hensen, Bas and Bernien, Hannes and Dréau, Anaïs E and Reiserer, Andreas and Kalb, Norbert and Blok, Machiel S and Ruitenberg, Just and Vermeulen, Raymond FL and Schouten, Raymond N and Abellán, Carlos and others},
  journal={Nature},
  volume={526},
  number={7575},
  pages={682--686},
  year={2015},
  publisher={Nature Publishing Group UK London}
}

@article{giustina2015significant,
  title={Significant-loophole-free test of Bell’s theorem with entangled photons},
  author={Giustina, Marissa and Versteegh, Marijn AM and Wengerowsky, Sören and Handsteiner, Johannes and Hochrainer, Armin and Phelan, Kevin and Steinlechner, Fabian and Kofler, Johannes and Larsson, Jan-Åke and Abellán, Carlos and others},
  journal={Physical review letters},
  volume={115},
  number={25},
  pages={250401},
  year={2015},
  publisher={APS}
}

@article{Schrödinger_1936,
    title={Probability relations between separated systems},
    volume={32},
    DOI={10.1017/S0305004100019137},
    number={3},
    journal={Mathematical Proceedings of the Cambridge Philosophical Society},
    author={Erwin Schrödinger},
    year={1936},
    pages={446–452}
}

@article{no-bipartite-nonlocal,
  title = {No Bipartite-Nonlocal Causal Theory Can Explain Nature's Correlations},
  author = {Coiteux-Roy, Xavier and Wolfe, Elie and Renou, Marc-Olivier},
  journal = {Phys. Rev. Lett.},
  volume = {127},
  issue = {20},
  pages = {200401},
  numpages = {6},
  year = {2021},
  month = {11},
  publisher = {American Physical Society},
  doi = {10.1103/PhysRevLett.127.200401},
  url = {https://link.aps.org/doi/10.1103/PhysRevLett.127.200401}
}

@article{genuine-network-multipartite,
  title = {Genuine Network Multipartite Entanglement},
  author = {Navascu\'es, Miguel and Wolfe, Elie and Rosset, Denis and Pozas-Kerstjens, Alejandro},
  journal = {Phys. Rev. Lett.},
  volume = {125},
  issue = {24},
  pages = {240505},
  numpages = {6},
  year = {2020},
  month = {12},
  publisher = {American Physical Society},
  doi = {10.1103/PhysRevLett.125.240505},
  url = {https://link.aps.org/doi/10.1103/PhysRevLett.125.240505}
}

@article{st-multipartite-network,
   title={Quantum networks self-test all entangled states},
   volume={19},
   ISSN={1745-2481},
   url={http://dx.doi.org/10.1038/s41567-023-01945-4},
   DOI={10.1038/s41567-023-01945-4},
   number={5},
   journal={Nature Physics},
   publisher={Springer Science and Business Media LLC},
   author={Šupić, Ivan and Bowles, Joseph and Renou, Marc-Olivier and Acín, Antonio and Hoban, Matty J.},
   year={2023},
   month=feb, pages={670–675} }

@misc{huang2025compilingmathsfmipsuccinctclassical,
      title={Compiling Any $\mathsf{MIP}^{*}$ into a (Succinct) Classical Interactive Argument}, 
      author={Andrew Huang and Yael Tauman Kalai},
      year={2025},
      eprint={2510.08495},
      archivePrefix={arXiv},
      primaryClass={quant-ph},
      url={https://arxiv.org/abs/2510.08495}, 
}

@misc{merkulov2025computationalbellinequalities,
      title={Computational Bell Inequalities}, 
      author={Ilya Merkulov and Rotem Arnon},
      year={2025},
      eprint={2510.08423},
      archivePrefix={arXiv},
      primaryClass={quant-ph},
      url={https://arxiv.org/abs/2510.08423}, 
}

@inproceedings{ito2012multi,
  title={A multi-prover interactive proof for NEXP sound against entangled provers},
  author={Ito, Tsuyoshi and Vidick, Thomas},
  booktitle={2012 IEEE 53rd Annual Symposium on Foundations of Computer Science},
  pages={243--252},
  year={2012},
  organization={IEEE}
}

@inproceedings{grilo2019perfect,
  title={Perfect zero knowledge for quantum multiprover interactive proofs},
  author={Grilo, Alex B and Slofstra, William and Yuen, Henry},
  booktitle={2019 IEEE 60th Annual Symposium on Foundations of Computer Science (FOCS)},
  pages={611--635},
  year={2019},
  organization={IEEE}
}

@inproceedings{mousavi2022nonlocal,
  title={Nonlocal games, compression theorems, and the arithmetical hierarchy},
  author={Mousavi, Hamoon and Nezhadi, Seyed Sajjad and Yuen, Henry},
  booktitle={Proceedings of the 54th annual ACM SIGACT symposium on theory of computing},
  pages={1--11},
  year={2022}
}

@inproceedings{natarajan2017quantum,
  title={A quantum linearity test for robustly verifying entanglement},
  author={Natarajan, Anand and Vidick, Thomas},
  booktitle={Proceedings of the 49th Annual ACM SIGACT Symposium on Theory of Computing},
  pages={1003--1015},
  year={2017}
}

@article{bowles2020bounding,
  title={Bounding sets of sequential quantum correlations and device-independent randomness certification},
  author={Bowles, Joseph and Baccari, Flavio and Salavrakos, Alexia},
  journal={Quantum},
  volume={4},
  pages={344},
  year={2020},
  publisher={Verein zur F{\"o}rderung des Open Access Publizierens in den Quantenwissenschaften}
}

@article{tavakoli2021bounding,
  title={Bounding and simulating contextual correlations in quantum theory},
  author={Tavakoli, Armin and Cruzeiro, Emmanuel Zambrini and Uola, Roope and Abbott, Alastair A},
  journal={PRX Quantum},
  volume={2},
  number={2},
  pages={020334},
  year={2021},
  publisher={APS}
}

@article{chaturvedi2021characterising,
  title={Characterising and bounding the set of quantum behaviours in contextuality scenarios},
  author={Chaturvedi, Anubhav and Farkas, M{\'a}t{\'e} and Wright, Victoria J},
  journal={Quantum},
  volume={5},
  pages={484},
  year={2021},
  publisher={Verein zur F{\"o}rderung des Open Access Publizierens in den Quantenwissenschaften}
}

\appendix
\section{Appendix}\label{sec:Appendix}
We give a brief comparison of the two related previous works~\cite{klep2025quantitative,baroni2025asymptotic}.

\subsection{%
\texorpdfstring{The bipartite sequential NPA hierarchy of~\cite{klep2025quantitative}}%
{The bipartite sequential NPA hierarchy of [Kle+25]}%
}%
\label{sec:PrelimSeqNPA}

As the first attempt to bridge the gap between the standard NPA hierarchy and the sequential protocols central to this work (\cref{fig:NonlocalCompiledBellGame}), the authors of~\cite{klep2025quantitative} introduced a sequential variant of the NPA hierarchy for \emph{bipartite} games.
Different from our Heisenberg picture inspired hierarchy in \cref{sec:SeqNPAMain}, they model the scenario from the \emph{Schr\"odinger picture}: after Alice receives question $x$ and produces answer $a$, the shared state $\sigma$ collapses into a new, subnormalized state $\sigma_{a|x}$ for Bob, who can now perform a measurement $y$ and obtain an output $b$.
Instead of a single moment matrix representing the global pre-measurement state, the hierarchy is defined via a collection of moment matrices labeled by the actions of Alice $\Theta^{(n)}(a|x)$, which are indexed by words built from Bob's operators $f_{b|y}$.

The constraint of operationally-non-signaling translates to this family of moment matrices as follows.
\begin{definition}[Bipartite sequential NPA hierarchy~\cite{klep2025quantitative}]\label{def:Prelim_BipartiteSequentialNPASDP}
    For a bipartite game $\cG$, the level-$n$ sequential NPA relaxation is the solution to the following SDP, defined in terms of a collection of subnormalized moment matrices $\Theta^{(n)}(a|x)$ indexed by words of length $\leq n$ in letters $\{f_{b|y}\}$:
    \begin{equation}\label{eq:Prelim_BipartiteSequentialNPASDP}
        \begin{aligned}
            \gamevalueSeqNPA{\cG}{n} \quad &= \max_{\Theta^{(n)}(a|x)} \sum_{a,b,x,y} \beta_{abxy}\Theta^{(n)}(a|x)_{1, f_{b|y}} \\
            \text{s.t.} \quad 
            &\Theta^{(n)}(a|x) \succeq 0, \quad \forall a,x, \\[3pt]
            & \sum_{a} \Theta^{(n)}(a|x) = \sum_{a} \Theta^{(n)}(a|x') := \Theta^{(n)} \quad \forall x, x' \quad \text{(operationally-non-signaling condition)} \\
            & 1 = \Theta^{(n)}_{\id, \id} \qquad \text{(normalization)}.
        \end{aligned}
    \end{equation}
\end{definition}

This hierarchy effectively characterizes the set of bipartite sequential quantum correlations and the commuting observable quantum correlations as $n \to \infty$, and possesses several key properties:
\begin{enumerate}[label=(\alph*)]
    \item \emph{Soundness and monotone convergence.}
    Like the standard NPA hierarchy from \cref{def:Prelim_StandardNPA}, it is sound,
    \begin{align*}
        \omega_{\mathrm{qc}}(\cG) \leq \gamevalueSeqNPA{\cG}{n},
    \end{align*}
    and converges to the commuting-operator value monotonically,
    \begin{align*}
        \gamevalueSeqNPA{\cG}{1} \geq \gamevalueSeqNPA{\cG}{2} \geq \cdots \searrow \omega_{\mathrm{qc}}(\cG).
    \end{align*}
    \item \emph{Relation to standard NPA.} At any finite level $n$, it is equivalent to a relaxed version of the standard NPA hierarchy where Alice's operators $f_{a|x}$ do not satisfy the POVM completeness condition: $\sum_a f_{a|x} \neq \id$.
    It only need to appear to be complete when testing with length $\leq n$ words of Bob's operators $f_{b|y}$.
    \item \emph{Duality.} Its conic dual problem is equivalent to a sparse sum-of-squares (SOS) hierarchy~\cite{klep2022sparse,magron2023sparse}.
    Similar to its dual counterpart, the bipartite sequential NPA hierarchy can be less computationally demanding, as the moment matrices $\Theta^{(n)}(a|x)$ indexed by only $f_{b|y}$ are much smaller than the standard bipartite NPA moment matrices.
\end{enumerate}

However, as discussed in \cref{sec:Introduction}, this Schr\"odinger-picture construction based on post-measurement states is \emph{inherently bipartite}.
It does not naturally extend to the multipartite setting, as tracking the sequence of post-measurement states loses too much of the necessary algebraic structure.

\subsection{%
\texorpdfstring{The multipartite algebraic framework of~\cite{baroni2025asymptotic}}%
{The multipartite algebraic framework of [Bar+25]}%
}%
\label{sec:PrelimMulti}

In order to extend the operator-algebraic techniques of \cite{kulpe2024bound} from the bipartite to the multipartite setting, the authors of \cite{baroni2025asymptotic} propose a composable algebraic structure that is very well-suited to characterize sequential players.
For the notation convenience, let us focus on the tripartite scenarios.
Their key idea is the new concept of universal C$^*$-algebras of sequential projective measurements, which generalizes the classical universal algebra of static measurements, i.e., what before we were referring to as the "symbols" for Bob $f_{b|y}$.
More concretely, let us consider the non-trivial case of three players in a sequence.
The action of Alice is modeled as an algebraic state $\phi_{a|x}$, Bob performs a transformation and in the end Charlie measures; or equivalently in the Heisenberg picture, Charlie performs a measurement and Bob implements a transformation that is pulling back Bob's measurement to the space in which Alice's state is defined.
In this picture, Charlie and Bob jointly perform a measurement, whose only constraint is that Bob acts before Charlie, hence the latter cannot signal to the first, while the contrary is allowed.

In more detail, \cite{baroni2025asymptotic} define the universal C$^*$-algebra $\mathcal{A}_{B \to C}$ of Bob's and Charlie's sequential measurements using the following relations on its generators $\{f_\bcyz\}$:
\begin{align*}
    f_\bcyz^* = f_\bcyz,
    \quad f_\bcyz f_{b'c'|yz} = \delta_{b,b'} \delta_{c,c'} f_\bcyz,
    \quad \sum_{b,c} f_\bcyz = \id,
    \quad \sum_c f_\bcyz = \sum_c f_{bc|yz'}, \; \forall z, z'.
\end{align*}
Note that $\mathcal{A}_{B \to C}$ is precisely $\cA_{BC}$ that we introduce in \cref{sec:SeqNPATripartiteCase}.
Letting $\mathcal{A}_C$ be the universal C$^*$-algebra of Charlie's measurements, there exists then a family of *-homomorphisms $T_{\by} : \mathcal{A}_C \to \mathcal{A}_{B \to C}$, taking generators to generators naturally in the following way
\begin{align*}
    T_{\by}(f_\cz) = f_\bcyz.
\end{align*}
Note, that by construction, summing over $b$ yields unital *-homomorphisms $T_y = \sum_b T_\by$.
In this way, the maps $T_\by$ can be thought of as embeddings of $\mathcal{A}_C$ into subalgebras of $\mathcal{A}_{B \to C}$ labeled by $b$ and $y$.

The power of this approach lies in two important features:
\begin{enumerate}
    \item \textit{Composability}, it is very immediate to understand how to increase the number of sequential players: it is sufficient to add the labels and the additional "causal" constraint;
    \item It provides a very \textit{compact framework}, because all of the players strategies are defined on a single big algebra, with a rich internal structure that retrieves the action of the single players, which is captured by *-homomorphisms.
\end{enumerate}

\end{document}